\documentclass[11pt]{article}

\usepackage[margin=1in]{geometry}
\usepackage{amsthm,amsmath,amsfonts,amssymb}
\usepackage[numbers]{natbib}
\usepackage{graphicx}
\usepackage{lmodern}
\usepackage[T1]{fontenc}
\usepackage[dvipsnames]{xcolor}
\usepackage{multirow}
\usepackage{lscape}
\usepackage{enumitem}
\usepackage{array,mathrsfs}
\usepackage{algorithm,algpseudocode}
\usepackage{setspace}
\usepackage[colorlinks,citecolor=blue,linkcolor=blue,urlcolor=blue]{hyperref}

\theoremstyle{plain}
\newtheorem{Th}{Theorem}
\newtheorem{Pro}{Proposition}

\newtheorem{Lem}{Lemma}

\theoremstyle{definition}
\newtheorem{Rem}{Remark}

\def\calT{{\cal T}}

\def\S{{\bf S}}
\def\Z{{\bf Z}}
\def\a{{\bf a}}
\def\bfe{{\bf e}}

\def\z{{\bf z}}
\def\0{{\bf 0}}

\def\d{{\lambda}}
\def\dd{{\rm d}}
\def\wt{\widetilde}
\providecommand{\widecheck}[1]{%
  \reflectbox{\ensuremath{\widehat{\reflectbox{\ensuremath{#1}}}}}}
\def\wc{\widecheck}
\def\wh{\widehat}
\def\sumi{\sum_{i=1}^n}

\def\log{\hbox{log}}

\def\var{\hbox{var}}

\def\trace{\hbox{trace}}
\def\pr{\hbox{pr}}
\def\Normal{\hbox{Normal}}
\def\TruncNormal{\hbox{TruncNormal}}

\def\trans{^{\rm T}}

\def\n{\nonumber}

\def\bse{\begin{eqnarray*}}
\def\ese{\end{eqnarray*}}
\def\be{\begin{eqnarray}}
\def\ee{\end{eqnarray}}

\def\eff{_{\rm eff}}

\def\SCP{_{\rm SCP}}
\def\FCP{_{\rm FCP}}
\def\JK{_{\rm JK+}}

\def\ba{{\mathbf a}}
\def\bA{{\mathbf A}}
\def\bB{{\mathbf B}}

\def\bD{{\mathbf D}}
\def\bbe{{\mathbf e}}
\def\b1e{{\mathbf e}}

\def\bg{{\boldsymbol\gamma}}

\def\bh{{\mathbf h}}

\def\bo{{\mathbf o}}
\def\bO{{\mathbf O}}

\def\bv{{\boldsymbol{V}}}

\def\bz{{\mathbf z}}
\def\bZ{{\mathbf Z}}

\def\bb{{\boldsymbol\beta}}

\def\bpsi{{\boldsymbol{\psi}}}

\def\bSigma{{\boldsymbol {\Sigma}}}

\newcommand{\bmu}{{\boldsymbol\mu}}

\newcommand{\balpha}{{\boldsymbol\alpha}}
\newcommand{\bPhi}{{\boldsymbol\Phi}}

\newcommand{\brho}{{\boldsymbol\rho}}

\newcommand{\bxi}{\mbox{\boldmath $\xi$}}

\def\red{\color{red}}
\newcommand{\indep}{\rotatebox[origin=c]{90}{$\models$}}

\begin{document}

\title{\bf PRESCCO: Efficient Prediction Intervals under a
Right-Censored Covariate}

\author{
Kihyun Han\thanks{Department of Statistics, Pennsylvania State
University. E-mail: \texttt{kqh5716@psu.edu}}
\and
Yanyuan Ma\thanks{Department of Statistics, Pennsylvania State
University. E-mail: \texttt{yanyuanma@gmail.com}}
\and
Karen Marder\thanks{Department of Neurology, Columbia University
Medical Center. E-mail: \texttt{ksm1@columbia.edu}}
\and
Tanya P. Garcia\thanks{Department of Biostatistics, Gillings School of
Global Public Health, University of North Carolina at Chapel Hill.
E-mail: \texttt{tpgarcia@email.unc.edu}}
}

%\date{\today}
\date{}

\maketitle

\begin{abstract}
In clinical studies, a patient's outcome, e.g., a cognitive test score,
is typical or atypical depending on how it compares with the outcomes of
patients at a similar point in a neurodegenerative disease, measured by
how far they are from a common disease event. A prediction interval based on the time to that event provides that comparison, but that time is right-censored for most patients. Yet no method had computed such a prediction interval in this right-censored covariate setting. We
first adapt three conformal prediction methods to this setting, and show
that the estimated half-length, the distance the interval runs either
side of its center, varies from one study to the next, so the same
outcome can be judged typical in one study and atypical in another. We
then develop the PRESCCO method, whose estimator of the half-length is
semiparametrically efficient and doubly robust, staying consistent when
one of its two models is misspecified, while the method loses no
coverage. In simulations its standard deviation is four to thirty times
smaller than under any conformal prediction method, and in a Huntington
disease study with $77.2\%$ right-censoring, an outcome typical in one
study is no longer atypical in another.
\end{abstract}

\noindent\textbf{Keywords:} conformal prediction; double robustness;
Huntington disease; nuisance model; right-censoring; semiparametric
efficiency.

%%%%%%%%%%%%%%%%%%%%%%%%%%%%%%%%%%%%%%%%
\section{Introduction}
%%%%%%%%%%%%%%%%%%%%%%%%%%%%%%%%%%%%%%%%

Therapies that slow neurodegenerative diseases remain limited, so much
of a patient's care depends on recognizing early when they are declining
faster than expected \citep{tabrizi2022biological, testa-life2026}.  A
cognitive test score, a motor function rating, a brain volume---these
outcomes, which guide that care, all change as the disease advances, so
an outcome typical of patients late in the disease course is atypical in
a patient early in it. Recognizing an atypical outcome takes a
prediction interval: the range of outcomes typical at a given point in
the disease course, computed from the outcomes of patients followed in
an observational study. A patient whose outcome falls outside the
interval, on the side of greater impairment, is declining faster than
others at the same point in the disease course, and may need closer
monitoring or care they would not otherwise receive. The prediction
interval is difficult to compute, however, because it depends on where a
patient is in their disease course, and that is not known.

Age might seem to be relevant, but two patients of the same age can be at very different points in their disease course. Alternatively, two patients three years from a common disease event---e.g., the first
clinical signs in Huntington disease, or a diagnosis of dementia in
Alzheimer disease---are at a similar point whatever their ages, so
what indicates where a patient is in the disease course is the time they
take to reach the event \citep{dempsey2018survival, kongetal2018,
chu2020stochastic}. The prediction interval is therefore computed as
a function of this time, the time-to-event covariate. Computing the prediction interval,
however, requires the time to the event for every patient the
observational study followed, and that time is rarely observed.
Neurodegenerative diseases advance over years, and most patients leave
the study before reaching the event. A time-to-event covariate known
only to exceed the time a patient spent in the study is right-censored,
and no existing method computes a prediction interval as a function of the observed covariates in this
right-censored covariate setting.

We therefore first turn to conformal prediction, a framework for
constructing prediction intervals \citep{vovk2005algorithmic,
lei2018distribution}, and adapt three of its methods---split conformal
prediction, full conformal prediction, and jackknife+---to the
right-censored covariate setting. Computing a prediction interval takes
two quantities, the center and the target half-length. The center is the
outcome expected of a patient at a given point in the disease course.
The target half-length is the distance the interval runs below and above
the center, set so that the prediction interval contains a patient's own
outcome with a stated probability, the prediction level. That
probability is taken with respect to the distribution of the population,
the patients the observational study could have enrolled and followed in
the same way. Knowing that distribution would be enough to find the
target half-length, but the distribution is unknown.

Instead, the three conformal prediction methods use the outcomes of the
patients one observational study enrolled and followed from that
population. They compute the distance from each outcome to the center of
the interval built for that patient, and take the half-length to
be the distance a stated proportion of those distances falls below. The
estimated half-length therefore depends on which patients the
observational study happened to follow. A second observational study
follows different patients, whose outcomes and distances are different,
so the estimated half-length is different, and an outcome judged typical
in one study is judged atypical in another.  Either way, a patient is harmed. If the estimated
half-length is too short, a patient whose outcome is no different from
those of patients at a similar point in the disease course is told their
outcome is atypical, so they receive monitoring and care they never
needed; if it is too long, a patient declining faster than others at the
same point is told their outcome is typical, so the care they need
arrives later than it should.

Neither harm is inevitable. We develop the PRESCCO
method---PREdiction with Semiparametric efficiency under a
right-Censored COvariate---which estimates the target half-length from a
distribution for the population, the distribution the conformal
prediction methods do without. The distribution comes from specifying
four models: how a patient's outcome relates to the time they take to
reach the event; how their other characteristics, such as sex,
education, and genetic risk, are distributed; how long patients take to
reach the event; and how long they spend in the observational study.
Each model describes the population, so the estimated half-length no
longer depends on which patients an observational study happened to
follow.

Four models might seem labor-intensive to specify, but a researcher
specifies just two of them. The first, relating a patient's outcome to
the time to the event, every method needs, since the center of the
prediction interval is computed from that model. The second, describing
how a patient's other characteristics are distributed, cancels from the
estimating equation and never has to be specified. That leaves the model
for how long patients take to reach the event and the model for how long
they spend in the observational study, and four questions decide whether
those two are worth specifying.

The first is: What happens when a model is misspecified? We prove that
only one of the two need be correctly specified, since the estimated
half-length converges to the target half-length even when the other is
misspecified, a property known as double robustness. The second question
is: What does specifying the two models give in return? Because the
target half-length is a parameter of the distribution for the population,
%rather than a proportion computed from one study's distances,
it has an efficient
influence function, so there is a smallest asymptotic variance an
estimator can attain, and we prove that the estimator the PRESCCO method
produces attains it when both models are correctly specified. That
estimator therefore varies less from one observational study to the next
than any other, so an outcome judged typical in one study is judged
typical in another.

The
third question is: Does specifying the two models lose coverage? The coverage rate is the probability that the prediction interval
contains a patient's outcome,  and the average of that probability over the
observational studies a researcher might have run is the expected
coverage rate. Split conformal prediction and full conformal prediction hold the
expected coverage rate near the prediction level whatever the
distribution for the population, and we prove the PRESCCO method holds
it to the same order.  The fourth question is: With the two models specified, how far can the
coverage rate deviate from the prediction level in the one observational
study a researcher has? Holding the expected coverage rate near the
prediction level leaves the coverage rate in the study a researcher did
run free to deviate by any amount. We bound that deviation using only conditions on the M-estimating equation, whereas analogous bounds for the conformal prediction methods require assumptions on their procedures.

These four results follow from specifying the two models, and the rest
of the paper develops them. We formalize the construction of a prediction
interval as a function of observed covariates in the
right-censored covariate setting, which reduces to
estimating the target half-length that reaches the prediction level
(Section~\ref{sec:setup}); we adapt split conformal prediction, full
conformal prediction, and jackknife+ to that setting
(Section~\ref{sec:conformal}); and we develop the PRESCCO method and its
theory (Section~\ref{sec:pred}). In simulations, the estimated
half-length from the PRESCCO method has a standard deviation four to
thirty times smaller than that from any conformal prediction method,
with no loss in coverage (Section~\ref{sec:sim}). We then apply all four
methods to the Enroll-HD study, where the time to the first clinical
signs of Huntington disease is right-censored for $77.2\%$ of patients,
and only the PRESCCO method holds a coverage rate close to the
prediction level (Section~\ref{sec:application}). The Discussion (Section~\ref{sec:discussion}) closes with the patients
earliest in the disease course, whose time to the event is most often
right-censored and who therefore stand to gain the most.

%%%%%%%%%%%%%%%%%%%%%%%%%
\section{Prediction intervals under a right-censored covariate}
\label{sec:setup}
%%%%%%%%%%%%%%%%%%%%%

An observational study follows $n$ patients as the disease advances.  For each patient, the outcome is $Y\in\mathbb{R}$, the
time the patient takes to reach a common disease event is the
time-to-event covariate $X\in\mathbb{R}$, and the time the patient
spends in the observational study is the censoring time
$C\in\mathbb{R}$; both times are measured from the same starting point
for every patient. The patient's other characteristics, such as sex,
education, and genetic risk, are the fully observed covariates
$\bZ\in\mathbb{R}^{d_{\Z}}$, where $d_{\Z}\in\mathbb{N}\cup\{0\}$. What
the observational study records of $X$ and $C$ is $W\equiv\min(X,C)$,
whichever came first, and $\Delta\equiv I(X\le C)$, equal to one when
the patient reached the event while still in the observational study. A
patient with $\Delta=1$ therefore has $X=W$, whereas a patient with
$\Delta=0$ leaves the observational study before reaching the event and
has $X$ known only to exceed $W$. A covariate observed in this way is
right-censored, which produces the right-censored covariate setting. The
full data are $\bD=(Y,X,C,\bZ)$ and the observed data are
$\bO=(Y,W,\Delta,\bZ)$. The $n$ patients are one sample from the population, the patients the
observational study could have enrolled and followed in the same way, so
$\bO_1,\ldots,\bO_n$ are independent and identically distributed.

Four densities describe the data in the right-censored covariate
setting: the outcome model $f_{Y|X,\bZ}$, parametrized by
$\bb\in\mathbb{R}^{d_\bb}$, which describes how a patient's outcome
relates to the time they take to reach the event; the time-to-event
model $f_{X|\Z}$, which describes how long patients take to reach the
event; the censoring model $f_{C|X,Y,\bZ}$, which describes how long
they spend in the observational study; and the fully observed covariate
model $f_{\Z}$, which describes how their other characteristics are
distributed. We assume that, among patients sharing the same fully
observed covariates, how long a patient spends in the observational
study carries no information about their time to the event or their
outcome, so that $C \indep (X,Y)\mid \bZ$, or equivalently
$f_{C|X,Y,\bZ}=f_{C|\Z}$. This noninformative censoring mechanism is
standard in the right-censored covariate setting \citep{lee2024robust,
vazquez2024establishing}, and we write the censoring model as
$f_{C|\Z}$ hereafter.

Consider now a new patient, one the observational study did not follow.
Their data $\bO_0=(Y_0,W_0,\Delta_0,\Z_0)$ are a further draw from the
population, independent of and identically distributed as
$\bO_1,\ldots,\bO_n$. A prediction interval for $Y_0$ at prediction
level $1-\alpha$ takes the form $m\pm\zeta$, where the center $m$ is the
outcome expected of patients at the new patient's point in the disease
course, computed from $(W_0,\Delta_0,\Z_0)$, and $\zeta$ is the target
half-length. We take the center first.

If $X_0$ were observed, the center could be
$m_0(X_0,\Z_0,\bb)\equiv E(Y_0\mid X_0,\Z_0,\bb)$, the outcome expected
of a patient whose time to the event is $X_0$, where a smaller $X_0$
means the patient is further along in the disease course. For example, a patient two
years from the event is further along than a patient four
years from the event. But $X_0$ is observed only when $\Delta_0=1$, so
the center must be a function of $(W_0,\Delta_0,\Z_0)$.  The first center is the conditional
mean of $Y_0$ given those variables,
$m_1(W_0,\Delta_0,\Z_0,\bb)\equiv E(Y_0\mid
W_0,\Delta_0,\Z_0,\bb,f_{X|\Z})$. Under the noninformative censoring
mechanism,
\bse
&&E(Y_0\mid W_0,\Delta_0,\Z_0,\bb,f_{X|\Z})\\
&=&\Delta_0 m_0(X_0,\Z_0,\bb)+
(1-\Delta_0)\frac{E\{I(X_0 > C_0)m_0(X_0,\Z_0,\bb)\mid C_0,\Z_0,f_{X|\Z}\}}{E\{I(X_0 > C_0)\mid C_0,\Z_0,f_{X|\Z}\}},
\ese
so $m_1$ recovers $m_0$ when $\Delta_0=1$, where the new patient's time
to the event is $W_0$.  When $\Delta_0=0$, their time to the event is known only to exceed $W_0$: if $W_0$ is two years, the new patient could be three years from the
event, or four, or ten, and $m_1$ averages $m_0$ over those times,
weighting each by how likely it is under $f_{X|\Z}$. Whether or not the new patient's time to the event was observed, $m_1$ is the outcome expected of patients at their point in the disease course.

Computing $m_1$, however, requires the time-to-event model $f_{X|\Z}$,
which in practice is unknown. The second center replaces $f_{X|\Z}$ with
a working model $f_{X|\Z}^*$ that the researcher specifies, giving
$m_1^*(W_0,\Delta_0,\Z_0,\bb)\equiv E(Y_0\mid
W_0,\Delta_0,\Z_0,\bb,f_{X|\Z}^*)$. The third center avoids both the
time-to-event model and the average $m_1$ requires, treating $W_0$ as
though it were $X_0$ and giving
$m_2(W_0,\Delta_0,\Z_0,\bb)\equiv m_0(W_0,\Z_0,\bb)$. That makes $m_2$
the simplest of the three, but when $\Delta_0=0$, the time to the event is
known only to exceed $W_0$, so $m_2$ treats the new patient as further
along in the disease course than they are. A new patient whose $W_0$ is two years may be four years from the event
or more, but $m_2$ compares them with patients two years from the event.
That makes $m_2$ the natural baseline for measuring what a center built
from the time-to-event model contributes.  Other centers are
possible, but we restrict attention to these three. Each depends on $\bb$, so we write the
center as $m(W_0,\Delta_0,\Z_0,\bb)$ throughout.

With the center fixed as one of these three, the target half-length
remains, and one value of $\zeta$ makes the coverage rate of the
prediction interval equal the prediction level. To find $\zeta$, 
write the distance from $Y_0$ to the center using the residual
function $r(\bo,\bb)\equiv|y-m(w,\delta,\z,\bb)|$, whose value
$r(\bO_0,\bb)$ is the new patient's residual, which we assume is a
continuous random variable. The prediction interval contains $Y_0$ exactly
when $r(\bO_0,\bb)\le\zeta$, so we can write the prediction interval as
$\{Y_0:r(\bO_0,\bb)\le\zeta\}$, and its coverage rate equals the
prediction level $1-\alpha$ exactly when
\bse
\pr\{r(\bO_0,\bb)\le \zeta\}=1-\alpha.
\ese
This probability is taken with respect to the distribution of the
population, so $\zeta$ is the $(1-\alpha)$ quantile of the residual. Its
value is unknown, and constructing the prediction interval reduces to
estimating $\zeta$.

%%%%%%%%%%%%%%%%%%%%%%%%%%%%%%%%%%%%%%%%%%%%%%%%%%%%%%%%%%%%%%%%%%%%%%%%%%%%%%%%%
\section{Conformal prediction methods extended to the right-censored covariate setting}\label{sec:conformal}
%%%%%%%%%%%%%%%%%%%%%%%%%%%%%%%%%%%%%%%%%%%%%%%%%%%%%%%%%%%%%%%%%%%%%%%%%%%%%%%%%

One estimator of $\zeta$ is the $(1-\alpha)$ sample quantile of the
residuals. An observational study produces a residual for each
patient it followed, and their $(1-\alpha)$ sample quantile is the
sample counterpart of the population quantile $\zeta$. Conformal
prediction uses this estimator, which requires no distribution for
the population---no model for how the outcome, the time to the
event, the censoring time, or the fully observed covariates are
distributed---since the sample quantile is computed from the
residuals themselves. The estimator does require
that $\bO_1,\ldots,\bO_n,\bO_0$ be exchangeable, so that the new patient's
residual is no more likely to be the largest than any residual from the
patients the observational study followed. They are exchangeable, since they are independent and identically
distributed, and right-censoring does not change that. Because $\bO$
contains $(W,\Delta)$ rather than $X$, right-censoring does not set any
patient apart from the others. The residual $r(\bO,\bb)$ is a function of
$\bO$, so it too is exchangeable, and it is computed for every patient
without recovering the time to the event. No conformal prediction method has been developed for the right-censored
covariate setting, and we adapt three widely used ones: split conformal
prediction, full conformal prediction, and jackknife+.

Adapting each method requires an estimator of $\bb$, since every residual
is computed from a center that depends on it; any estimator consistent
under a right-censored covariate suffices, and several are available
\citep{lotspeich2024making, lee2024robust, vazquez2024establishing}.
Exchangeability then carries each method's coverage guarantee into the
right-censored covariate setting. The coverage rate averages the
probability that the prediction interval contains $Y_0$ over $\bO_0$,
holding $\bO_1,\ldots,\bO_n$ fixed, and the expected coverage rate
averages it over $\bO_1,\ldots,\bO_n$ as well, which is the standard
criterion for analyzing coverage \citep{cox1975prediction,
beran1993interpolated, tian2022methods}. The guarantees bound the
expected coverage rate, and the three methods differ in how tightly the rate is bounded.

%%%%%%%%%%%%%%%%%%%%%%%%%%%%%%%%%%%%%%%%%%%%
\subsection{Split conformal prediction}\label{sec:split-conformal-prediction}
%%%%%%%%%%%%%%%%%%%%%%%%%%%%%%%%%%%%%%%%%%%%

Split conformal prediction divides $\bO_1,\ldots,\bO_n$ into a training
set $\{\bO_1,\ldots,\bO_{n_1}\}$ and a calibration set
$\{\bO_{n_1+1},\ldots,\bO_n\}$. The training set is used to estimate
$\bb$, giving $\wh\bb$, and with it the residual
$r(\bo,\wh\bb)=|y-m(w,\delta,\z,\wh\bb)|$, which is then evaluated on
the calibration set. Because $\wh\bb$ was fitted without the
calibration set, the new patient's residual is no more likely to be
the largest than any calibration residual, and the two are
exchangeable conditional on $\wh\bb$. 
The estimator of the target
half-length is
\be
\wh\zeta\SCP \equiv
Q_{1-\alpha}\{r(\bO_{n_1+1},\wh\bb),\ldots,r(\bO_n,\wh\bb),\infty\},
\label{eq:e16}
\ee
where $Q_{1-\alpha}(\cdot)$ is the $(1-\alpha)$ sample quantile of the
elements of the set and the $\infty$ stands in for the new patient's
residual, which is unavailable. The prediction interval for the new
patient is $\{Y_0:r(\bO_0,\wh\bb)\le \wh\zeta\SCP\}$.

The coverage guarantee for split conformal prediction requires only
exchangeability conditional on $\wh\bb$ \citep{lei2018distribution},
so it applies in the right-censored covariate setting. Assuming $r(\bO_0,\wh\bb)$ has a continuous distribution conditional
on $\wh\bb$, the coverage guarantee bounds the probability that the
prediction interval contains $Y_0$,
\bse
1-\alpha
\le
\pr\{r(\bO_0,\wh\bb)\le \wh\zeta\SCP\mid \wh\bb\}
\le
1-\alpha+(n-n_1+1)^{-1},
\ese
where the probability is taken over the new patient's data and the
calibration set. These bounds hold at every value of $\wh\bb$, so
they hold for the expected coverage rate, as well. The coverage guarantee is
finite-sample, holding at every $n$ and under any distribution for the
population.

The guarantee constrains the expected coverage rate but not how much
$\wh\zeta\SCP$ varies from one observational study to the next. Because $\wh\zeta\SCP$ is a
sample quantile of the calibration residuals, the standard asymptotic
distribution for sample quantiles applies conditional on $\wh\bb$. The
proof of Theorem~\ref{th:scp} is in Section~\ref{sec:scp}.
\begin{Th}\label{th:scp}
Let $\zeta\SCP$ satisfy
$\pr\{r(\bO_0,\wh\bb)\le \zeta\SCP\mid \wh\bb\}=1-\alpha$.
For $\wh\zeta\SCP$ defined in \eqref{eq:e16},
\bse
(n-n_1)^{1/2}(\wh\zeta\SCP-\zeta\SCP)\mid \wh\bb
\stackrel{d}{\to}
\Normal\left(
0,
\alpha(1-\alpha)
E[\d\{\zeta\SCP-r(\bO,\wh\bb)\}\mid \wh\bb]^{-2}
\right),
\ese
where $\d$ is the Dirac delta function, so that
$E[\d\{\zeta\SCP-r(\bO,\wh\bb)\}\mid \wh\bb]$ is the conditional density
of $r(\bO,\wh\bb)$ at $\zeta\SCP$.
\end{Th}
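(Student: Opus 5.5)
Conditioning on $\wh\bb$ reduces the statement to the classical asymptotic normality of a sample quantile. Since $\wh\bb$ is computed from the training set $\{\bO_1,\ldots,\bO_{n_1}\}$ alone, and the calibration set is independent of it, the calibration residuals $R_i\equiv r(\bO_i,\wh\bb)$, $i=n_1+1,\ldots,n$, are, conditionally on $\wh\bb$, independent and identically distributed with conditional distribution function $F(t)\equiv\pr\{r(\bO_0,\wh\bb)\le t\mid\wh\bb\}$. By assumption $F$ is continuous, and $\zeta\SCP$ satisfies $F(\zeta\SCP)=1-\alpha$. I will also assume that $F$ is differentiable at $\zeta\SCP$ with density $f(\zeta\SCP)=E[\d\{\zeta\SCP-r(\bO,\wh\bb)\}\mid\wh\bb]>0$. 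This is the regularity implicit in the statement, which needs the density to exist and be nonzero for the variance to be finite.

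Write $N\equiv n-n_1$ and let $F_N$ be the empirical distribution function of $R_{n_1+1},\ldots,R_n$. The first step is to remove the $\infty$ in \eqref{eq:e16}. With the $\infty$ included, $\wh\zeta\SCP$ is the $\lceil(1-\alpha)(N+1)\rceil$-th order statistic of the $N$ calibration residuals. The ordinary $(1-\alpha)$ sample quantile is the $\lceil(1-\alpha)N\rceil$-th order statistic. These two ranks differ by at most one, so $\wh\zeta\SCP=F_N^{-1}(p_N)$ with $p_N=1-\alpha+O(N^{-1})$.

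The main step is a Bahadur-type argument, applied conditionally on $\wh\bb$. For fixed $t$, set $\zeta_t=\zeta\SCP+tN^{-1/2}$. By the definition of the generalized inverse,
\[
\pr\{N^{1/2}(\wh\zeta\SCP-\zeta\SCP)\le t\mid\wh\bb\}=\pr\{F_N(\zeta_t)\ge p_N\mid\wh\bb\},
\]
up to ties, which have probability zero by continuity. The quantity $NF_N(\zeta_t)$ is Binomial$\{N,F(\zeta_t)\}$, and a first-order expansion gives $F(\zeta_t)=1-\alpha+f(\zeta\SCP)tN^{-1/2}+o(N^{-1/2})$. The Lindeberg--Feller (triangular array) central limit theorem then gives
\[
N^{1/2}\{F_N(\zeta_t)-F(\zeta_t)\}\to\Normal\{0,\alpha(1-\alpha)\}.
\]
Combining the two displays,
\[
\pr\{F_N(\zeta_t)\ge p_N\mid\wh\bb\}\to\pr\{\Normal(0,\alpha(1-\alpha))\ge -f(\zeta\SCP)t\}=\Phi\big[tf(\zeta\SCP)\{\alpha(1-\alpha)\}^{-1/2}\big].
\]
This is the distribution function of $\Normal\{0,\alpha(1-\alpha)f(\zeta\SCP)^{-2}\}$, which is the limit claimed in Theorem~\ref{th:scp}.

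The main obstacle is what the conditioning means as $n\to\infty$. In the argument above, $F$ depends on $\wh\bb$, which itself changes with $n_1$. I will treat the result as holding for each fixed realization of $\wh\bb$, letting $N\to\infty$, which is the reading suggested by the statement. If uniformity is needed, for instance if $n_1$ and $N$ grow together, I would additionally require that $f$ be continuous at $\zeta\SCP$ uniformly in $\bb$ near the limit of $\wh\bb$. The triangular-array central limit theorem applies because the Bernoulli summands $I(R_i\le\zeta_t)$ are bounded, so it holds along any sequence of $\wh\bb$ values.
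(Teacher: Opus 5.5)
Your proposal is correct and follows the same route as the paper: condition on $\wh\bb$, observe that the calibration residuals are then i.i.d.\ draws from the conditional distribution of $r(\bO_0,\wh\bb)$, whose $(1-\alpha)$ quantile is $\zeta\SCP$, and apply the asymptotic normality of sample quantiles. The paper simply cites that classical result. You additionally prove it via the binomial/CLT argument, account explicitly for the $O(N^{-1})$ rank shift caused by the $\infty$ entry, and make explicit the needed positivity of the density at $\zeta\SCP$.
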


Two features of the asymptotic variance are worth noting. The first is
that the scaling is $(n-n_1)^{1/2}$, not $n^{1/2}$, since the residual
is evaluated only on the calibration set, so the training set enters
$\wh\zeta\SCP$ only through $\wh\bb$: dividing $\bO_1,\ldots,\bO_n$
leaves fewer residuals to estimate $\zeta$ from. The second is that the
density in the denominator is the density of the residual at
$\zeta\SCP$, which is small in the tail where $\zeta\SCP$ lies, so the
asymptotic variance is large because an extreme quantile is being
estimated. This feature is not particular to split conformal prediction.
Any estimator of $\zeta$ built from residuals is estimating a quantile
where residuals are scarce.

%%%%%%%%%%%%%%%%%%%%%%%%%%%%%%%%%%%%%%%%%%%%
\subsection{Full conformal prediction}\label{sec:full-conformal-prediction}
%%%%%%%%%%%%%%%%%%%%%%%%%%%%%%%%%%%%%%%%%%%%

Full conformal prediction avoids dividing $\bO_1,\ldots,\bO_n$, so every
patient the observational study followed contributes a residual. Instead,
it treats the new patient's data as though the outcome were known: each
value $y$ that $Y_0$ might take is considered in turn. The augmented set
$\{(y,W_0,\Delta_0,\Z_0),\bO_1,\ldots,\bO_n\}$ is used to estimate $\bb$,
giving $\wh\bb(y)$ and with it the residual, which is evaluated on all
$n+1$ elements of the augmented set. Because all $n+1$ enter $\wh\bb(y)$ alike, the new patient's residual
is no more likely to be the largest than any of the others, and the
$n+1$ residuals are exchangeable conditional on $\wh\bb(y)$. The estimator of the target half-length is
\bse
\wh\zeta\FCP(y)
\equiv
Q_{1-\alpha}
\left[
r\{(y,W_0,\Delta_0,\Z_0),\wh\bb(y)\},
r\{\bO_1,\wh\bb(y)\},
\ldots,
r\{\bO_n,\wh\bb(y)\}
\right],
\ese
and since $y$ belongs to the prediction interval when the residual of
$(y,W_0,\Delta_0,\Z_0)$ does not exceed $\wh\zeta\FCP(y)$, the
prediction interval is
\bse
\left[y: r\{(y,W_0,\Delta_0,\Z_0),\wh\bb(y)\}\le\wh\zeta\FCP(y)\right].
\ese

The coverage guarantee for full conformal prediction requires only
exchangeability conditional on $\wh\bb(y)$ \citep{vovk2005algorithmic},
so it applies in the right-censored covariate setting. The coverage
guarantee bounds the expected coverage rate, evaluated at the new
patient's own outcome,
\bse
1-\alpha
\le
\pr\left[
r\{\bO_0,\wh\bb(Y_0)\}
\le
\wh\zeta\FCP(Y_0)
\right]
\le
1-\alpha+(n+1)^{-1},
\ese
where the probability is taken over the new patient's data and
$\bO_1,\ldots,\bO_n$. The coverage guarantee is finite-sample, holding
at every $n$ and under any distribution for the population, and its
upper bound is tighter than split conformal prediction's, which depends
on the size of the calibration set alone. The tighter bound requires
more computation, since $\wh\bb(y)$ and $\wh\zeta\FCP(y)$ must be
recomputed at every value of $y$ under consideration, and estimating
$\bb$ under a right-censored covariate is itself computationally
demanding.

No asymptotic distribution for $\wh\zeta\FCP$ is available in the sense of
Theorem~\ref{th:scp}, and the reason is visible in the notation. Split
conformal prediction produces a single $\wh\zeta\SCP$, a number computed
once from the calibration set, whereas full conformal prediction produces
$\wh\zeta\FCP(y)$, a function of the value being considered. Describing
how $\wh\zeta\FCP$ varies from one observational study to the next
therefore means describing how a whole function varies, and each point of
that function is computed at a separate estimator $\wh\bb(y)$. Even so,
$\wh\zeta\FCP(y)$ is a quantile of $n+1$ residuals, so it too estimates
$\zeta$ where residuals are scarce.

%%%%%%%%%%%%%%%%%%%%%%%%%%%%%%%%%%%%%%%%%%%%
\subsection{Jackknife+}\label{sec:jackknife}
%%%%%%%%%%%%%%%%%%%%%%%%%%%%%%%%%%%%%%%%%%%%

Jackknife+ leaves out data from one patient at a time, which avoids both
dividing $\bO_1,\ldots,\bO_n$ and recomputing an estimator at every
value of $y$. For each $i=1,\ldots,n$, the data with $\bO_i$ omitted are
used to estimate $\bb$, giving $\wh\bb_{-i}$, and the residual of
$\bO_i$ is evaluated at $\wh\bb_{-i}$. Every residual is therefore
computed at an estimate the patient producing that residual did not
contribute to, a property split conformal prediction obtains from a
calibration set and full conformal prediction from an augmented set. The
$n$ leave-one-out estimates also give $n$ centers for the new patient,
$m(W_0,\Delta_0,\bZ_0,\wh\bb_{-i})$, and the prediction interval is
built by pairing each center with its own residual,
\bse
[l\JK(\bO_0),u\JK(\bO_0)]
&\equiv&
\bigg[
Q_{\alpha}\{m(W_0,\Delta_0,\bZ_0,\wh\bb_{-i})-r(\bO_i,\wh\bb_{-i}),
\forall i=1,\ldots,n\},
\\
&&
Q_{1-\alpha}\{m(W_0,\Delta_0,\bZ_0,\wh\bb_{-i})+r(\bO_i,\wh\bb_{-i}),
\forall i=1,\ldots,n\}
\bigg],
\ese
whose estimator of the target half-length is
$\wh\zeta\JK(\bO_0)\equiv\{u\JK(\bO_0)-l\JK(\bO_0)\}/2$.

The coverage guarantee for jackknife+ requires only exchangeability of
$\bO_1,\ldots,\bO_n,\bO_0$ \citep{barber2021predictive}, so it applies in
the right-censored covariate setting. The coverage guarantee bounds the
expected coverage rate,
\bse
\pr\{Y_0\in[l\JK(\bO_0),u\JK(\bO_0)]\}\ge 1-2\alpha,
\ese
where the probability is taken over the new patient's data and
$\bO_1,\ldots,\bO_n$. The coverage guarantee is finite-sample, holding at
every $n$ and under any distribution for the population, but it is weaker
than the other two, which bound the expected coverage rate from above as
well as from below. The residuals $r(\bO_i,\wh\bb_{-i})$ are each computed at a
different estimate of $\bb$, so they are not exchangeable with the new
patient's residual in the way split conformal prediction's and full
conformal prediction's residuals are, and no bound conditional on a single estimator
is available.

No asymptotic distribution for $\wh\zeta\JK$ is available either, since
$\wh\zeta\JK(\bO_0)$ is not a single number but depends on the new
patient's data. The $n$ centers $m(W_0,\Delta_0,\bZ_0,\wh\bb_{-i})$ are
evaluated at that patient's own $W_0$, $\Delta_0$, and $\Z_0$, and each
of the $n$ terms entering $\wh\zeta\JK(\bO_0)$ is computed at a
different estimator $\wh\bb_{-i}$. The leave-one-out residuals are
nonetheless all that $\wh\zeta\JK(\bO_0)$ is built from, so it too
estimates $\zeta$ where residuals are scarce.

The three methods differ in how they use $\bO_1,\ldots,\bO_n$, in the
coverage guarantee each carries, and in how much can be said about the
estimator each produces, but they estimate the target half-length in the
same way. Each estimates $\zeta$ from the residuals alone, so the
estimated half-length varies from one observational study to the next,
and an outcome judged typical in one observational study can be judged
atypical in another.

%%%%%%%%%%%%%%%%%%%%%%%%%%%%%%%%%%%%%%%%%%%%%%%%%%%%%%%%%%%%%%%%%%%%%%%%%%%%%%%%%%%%
\section{The PRESCCO method}\label{sec:pred}
%%%%%%%%%%%%%%%%%%%%%%%%%%%%%%%%%%%%%%%%%%%%%%%%%%%%%%%%%%%%%%%%%%%%%%%%%%%%%%%%%%%%

%%%%%%%%%%%%%%%%%%%%%%%%%%%%%%%%%%%%%%%%%%%%%%%%%%%%%%%%%%%%%%%%%%%%%%%%%%%%%%%%%
%\subsection{Recasting the construction of the prediction interval as semiparametric estimation}
\subsection{The estimator of the target half-length}
\label{sec:semiparam}
%%%%%%%%%%%%%%%%%%%%%%%%%%%%%%%%%%%%%%%%%%%%%%%%%%%%%%%%%%%%%%%%%%%%%%%%%%%%%%%%%

Another estimator of $\zeta$ comes from computing the probability that
determines $\zeta$. Where the three conformal prediction methods each
estimate $\zeta$ by a sample quantile of residuals, a distribution for
the population makes $\pr\{r(\bO_0,\bb)\le\zeta\}$ computable at any
candidate value of $\zeta$. The equation
$\pr\{r(\bO_0,\bb)\le\zeta\}=1-\alpha$ can then be solved for $\zeta$
directly. Each patient's data enter through that probability rather than
only through the one residual computed from those data, and the scarce
residuals in the tail no longer determine the estimator.

Specifying the distribution for the population means specifying four
models, the outcome model $f_{Y|X,\bZ}$, the time-to-event model
$f_{X|\bZ}$, the censoring model $f_{C|\bZ}$, and the fully observed
covariate model $f_{\bZ}$.  Under the noninformative censoring
mechanism, they combine into the observed-data likelihood
\be
 f_{Y,W,\Delta,\bZ}(y,w,\delta,\bz,\bb)&=& f_{\bZ}(\bz)\left\{f_{Y|X,\bZ}(y,w,\bz,\bb)f_{X|\bZ}(w,\bz)\int_{w}^{\infty}
f_{C|\bZ}(c,\bz)
dc\right\}^{\delta}\n\\
&&\times
\left\{f_{C|\bZ}(w,\bz)\int_{w}^{\infty}f_{Y|X,\bZ}(y,x,\bz,\bb)f_{X|\bZ}(x,\bz)dx
\right\}^{1-\delta},\label{eq:llhd-structure}
\ee
where the two integrals appear because of right-censoring. For a patient
who reached the event, $w$ is the time to the event, and the censoring
time is known only to exceed $w$. For a patient who left the
observational study before reaching the event, $w$ is the censoring
time, and the time to the event is known only to exceed $w$.
An estimator of $\zeta$ built from these four models would be of
little use if a second set of four models produced the same
observed-data likelihood, since the two sets would fit the observed
data equally well and give different values of $\zeta$. 
The four models are identifiable, each uniquely determined by the
observed-data likelihood (\cite{lee2024robust}, Lemma 1).

Identifiable models still have to be specified, and not all four are
equally available. The outcome model reduces to the finite-dimensional
parameter $\bb$, which can be estimated from $\bO_1,\ldots,\bO_n$,
whereas the time-to-event model, the censoring model, and the fully
observed covariate model cannot, and a researcher must specify each one.
These three are nuisance models, since our interest is in estimating the
target half-length rather than the three themselves, and we write
$\eta_1\equiv f_{X|\Z}$, $\eta_2\equiv f_{C|\Z}$, and $\eta_3\equiv
f_{\Z}$. Any of the three may be misspecified, and two questions follow:
Is the estimator $\wh\zeta$ consistent for $\zeta$ when a nuisance model
is misspecified? And how small can its variance be? 
%Neither question can be answered for an estimator that is a sample quantile of residuals, since a sample quantile is not a parameter of any likelihood. 
The target half-length $\zeta$ is a parameter of the observed-data likelihood in \eqref{eq:llhd-structure}, so we answer both questions using semiparametric theory, treating the construction
of the prediction interval as semiparametric estimation of $\zeta$
\citep{bickel1998efficient, tsiatis2006semiparametric}.

Both questions are answered in semiparametric theory by a single object,
the efficient influence function for $\zeta$. An influence function is a
function of $\bO_1,\ldots,\bO_n$ from which an estimator of $\zeta$ can
be built, and its variance is how much that estimator varies from one
observational study to the next. The efficient influence function is the
one with the smallest variance, and it is the unique influence function
for $\zeta$ lying in the tangent space, the set of functions of
$\bO_1,\ldots,\bO_n$ generated by perturbing $\bb$, $\eta_1$, $\eta_2$,
and $\eta_3$ in the observed-data likelihood. Perturbing the four one at a time gives four subspaces, one for $\bb$
and one for each nuisance model, and the tangent space is built from
those four subspaces.

Stating those four subspaces requires the score function for $\bb$ and a
modification of that score function. Let $\S_\bb^F$ and $\S_\bb$ denote
the score functions for $\bb$ under the full-data likelihood and the
observed-data likelihood,
\bse
\S_\bb^F(y,x,\z,\bb) &\equiv& \partial\log\{f_{Y|X,\bZ}(y,x,\bz,\bb)\} /
\partial\bb,\\
\S_\bb(y,w,\delta,\z,\bb) &\equiv&  \partial\log\{f_{Y,W,\Delta,\bZ}(y,w,\delta,\bz,\bb)\} /
\partial\bb\\
&=& \delta \S_\bb^{F}(y,w,\bz,\bb)+
(1-\delta)\frac{E\{I(X>w)\S_\bb^{F}(y,X,\bz,\bb)\mid
  y,\z,\bb,\eta_1\}}{E\{I(X>w)\mid y,\z,\bb,\eta_1\}}.
\ese
The two agree when the time to the event is observed. When it is not,
$\S_\bb$ averages $\S_\bb^F$ over the times to the event above $w$,
exactly as the center $m_1$ averages $m_0$. The modification subtracts
a function $\ba$ from $\S_\bb^F$ before that averaging, which removes
from $\S_\bb$ the part already accounted for by the time-to-event
model,
\bse
\wt \S_\bb(y,w,\delta,\z,\bb) &\equiv& \delta \{\S_\bb^{F}(y,w,\bz,\bb) - \ba(w,\z,\bb)\}\\
&&+
(1-\delta)\frac{E[I(X>w)\{\S_\bb^{F}(y,X,\bz,\bb)-\ba(X,\z,\bb)\}\mid y,\z,\bb,\eta_1]}{E\{I(X>w)\mid y,\z,\bb,\eta_1\}},
\ese
where $\ba(x,\bz,\bb)$ is defined as the solution to
\be
&&E\{I(x\le C)\mid x,\z,\eta_2\} \ba(x,\z,\bb) +
E\left[I(x>C)\frac{E\{I(X>C)\ba(X,\bz,\bb)
      \mid C,Y,\z,\bb,\eta_1\}}{E\{I(X>C)\mid C,Y,\z,\bb,\eta_1\}}\mid x,\bz,\bb,\eta_2\right] \n\\ 
&&= E\left[I(x>C)\frac{E\{I(X>C)\S_\bb^{F}(Y,X,\bz,\bb)
      \mid C,Y,\z,\bb,\eta_1\}}{E\{I(X>C)\mid C,Y,\z,\bb,\eta_1\}}\mid x,\bz,\bb,\eta_2\right].\label{eq:e10}
\ee

Proposition~\ref{pro:1} uses $\wt\S_\bb$ to give the four subspaces,
which turn out to be orthogonal, so the tangent space is their direct
sum.
\begin{Pro} \label{pro:1}
  The tangent space is
$\calT = \wt \Lambda_\bb \oplus \Lambda_1 \oplus \Lambda_2 \oplus \Lambda_3$,
where 
\bse
\wt \Lambda_\bb &\equiv& \left\{\bfe\trans \wt
  \S_\bb(y,w,\delta,\z,\bb):\bfe\in \mathbb{R}^{d_{\bb}}\right\},\\
\Lambda_1 &\equiv& \left[\delta a_1(w,\bz) + (1-\delta) \frac{E\{I(X> w)
    a_1(X,\z)\mid y,\z,\bb,\eta_1\}}{E\{I(X > w)\mid y,\z,\bb,\eta_1\}} : E \{a_1 (X, \bz) \mid 
  \bz,\eta_1\} = 0 \right],\\ 
\Lambda_2 &\equiv& \left[\delta \frac{E\{I(C \ge w) a_2(C,\z)\mid \z,\eta_2\}}{E\{I(C \ge w)\mid \z,\eta_2\}} + (1-\delta)a_2 (w,\z):
  E \left\{a_2 (C,\bz) \mid   \bz,\eta_2\right\} = 0\right],\\ 
\Lambda_3 &\equiv&\left[a_3(\bz): E \{a_3 (\bZ)\mid\eta_3\} = 0 \right].
\ese
\end{Pro}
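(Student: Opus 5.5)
The plan is to compute the tangent space directly from the observed-data likelihood \eqref{eq:llhd-structure}. I would find the score of each parametric submodel, take closures to get the nuisance tangent spaces, and then orthogonalize the score for $\bb$ against them. The orthogonalized score should turn out to be $\wt\S_\bb$, with \eqref{eq:e10} acting as the normal equation of that projection.

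\textbf{Step 1: the three nuisance tangent spaces.} For $\eta_3$, take a submodel $f_{\bZ}(\bz;\gamma)$ with score $a_3(\bz)$. Since $f_{\bZ}$ enters the likelihood only as a factor, this gives $\Lambda_3$ directly.

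For $\eta_1$, take a submodel $f_{X|\bZ}(x,\bz;\gamma)$ with score $a_1(x,\bz)$ satisfying $E\{a_1(X,\bZ)\mid\bz,\eta_1\}=0$, and differentiate the log-likelihood.
\begin{itemize}
\item The $\delta=1$ factor contributes $a_1(w,\bz)$.
\item The $\delta=0$ factor contributes
\[
\int_w^\infty f_{Y|X,\bZ}f_{X|\bZ}\,a_1\,dx\Big/\int_w^\infty f_{Y|X,\bZ}f_{X|\bZ}\,dx ,
\]
which is exactly $E\{I(X>w)a_1(X,\bz)\mid y,\bz,\bb,\eta_1\}/E\{I(X>w)\mid y,\bz,\bb,\eta_1\}$.
\end{itemize}
The same calculation for $\eta_2$, with score $a_2(c,\bz)$ centred given $\bz$, produces $\Lambda_2$. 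Taking mean-square closures over all admissible $a_1$ and $a_2$ gives the stated sets.

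A compact way to record Steps 1 and 3 is that every element of $\Lambda_1$ equals $E\{a_1(X,\bZ)\mid\bO\}$, every element of $\Lambda_2$ equals $E\{a_2(C,\bZ)\mid\bO\}$, and $\S_\bb=E\{\S_\bb^F(Y,X,\bZ,\bb)\mid\bO\}$. All three identities use $C\indep(X,Y)\mid\bZ$.

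\textbf{Step 2: orthogonality of the nuisance spaces.}
\begin{itemize}
\item $\Lambda_3$ is orthogonal to $\Lambda_1$ and $\Lambda_2$ by iterated expectations: each element of $\Lambda_1$ or $\Lambda_2$ has conditional mean zero given $\bZ$.
\item For $\Lambda_1\perp\Lambda_2$, write
\[
E[E\{a_1\mid\bO\}E\{a_2\mid\bO\}]=E[a_1(X,\bZ)\,E\{a_2(C,\bZ)\mid\bO\}].
\]
Now condition on $(X,Y,\bZ)=(x,y,\bz)$ and integrate over $C$, using $C\indep(X,Y)\mid\bZ$. The part with $\delta=1$ gives $E\{I(C\ge x)a_2\mid\bz\}$, because $E\{a_2\mid\bO\}$ there is the average of $a_2$ over $C\ge x$ and gets multiplied by $\pr(C\ge x\mid\bz)$. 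The part with $\delta=0$ gives $E\{I(C<x)a_2\mid\bz\}$. The sum is $E\{a_2(C,\bZ)\mid\bz\}=0$.
\item The same computation shows that $\S_\bb$, which is a function of $(Y,X,\bZ)$ projected onto $\bO$, is orthogonal to $\Lambda_2$. It is orthogonal to $\Lambda_3$ because $E\{\S_\bb^F\mid X,\bZ\}=0$.
\end{itemize}
Hence the only non-orthogonality left is between $\S_\bb$ and $\Lambda_1$.

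\textbf{Step 3: removing that overlap.} By its definition, $\wt\S_\bb=\S_\bb-E\{\ba(X,\bZ,\bb)\mid\bO\}$, and $E\{\ba\mid\bO\}$ has the $\Lambda_1$ form. If $E\{\ba\mid\bZ\}\ne0$, split it into its $\bZ$-conditional mean, which lies in $\Lambda_3$, plus a centred part in $\Lambda_1$.

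Orthogonality of $\wt\S_\bb$ to every element of $\Lambda_1$ is equivalent to $E\{\wt\S_\bb\mid X=x,\bZ=\bz\}$ not depending on $x$. I would compute this conditional expectation by integrating over $(Y,C)$ given $(x,\bz)$:
\begin{itemize}
\item the $\delta=1$ term gives $E\{I(x\le C)\mid\bz\}\bigl[E\{\S_\bb^F\mid x,\bz\}-\ba(x,\bz,\bb)\bigr]$, where $E\{\S_\bb^F\mid x,\bz\}=0$;
\item the $\delta=0$ term gives $E\bigl[I(x>C)\,E\{I(X>C)(\S_\bb^F-\ba)\mid C,Y,\bz\}/E\{I(X>C)\mid C,Y,\bz\}\mid x,\bz\bigr]$.
\end{itemize}
Setting the sum to zero is precisely \eqref{eq:e10}. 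So $\wt\S_\bb\perp\Lambda_1$, and it stays orthogonal to $\Lambda_2$ and $\Lambda_3$ by Step 2.

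Since $\S_\bb-\wt\S_\bb\in\Lambda_1\oplus\Lambda_3$, we get
\[
\mathrm{span}\{\S_\bb\}+\Lambda_1+\Lambda_2+\Lambda_3=\wt\Lambda_\bb\oplus\Lambda_1\oplus\Lambda_2\oplus\Lambda_3,
\]
and this is $\calT$.

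\textbf{Main obstacle.} I expect the hardest step to be justifying that \eqref{eq:e10} has a solution $\ba$ in $L_2$. It is a Fredholm-type integral equation: the operator is the identity times $\pr(C\ge x\mid\bz)$ plus a conditional-expectation operator. I would argue that this operator is the restriction of $I-P$, with $P$ a contraction, to functions of $(X,\bZ)$. It is then invertible provided $\pr(C\ge x\mid\bz)$ is bounded away from zero on the support of $X$, a positivity condition I would state explicitly.

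A secondary point is the closure in Step 1. Every centred bounded $a_1$ and $a_2$ is attained by a smooth submodel, and the maps $a_1\mapsto E\{a_1\mid\bO\}$ and $a_2\mapsto E\{a_2\mid\bO\}$ are $L_2$-continuous, so the closures are the displayed sets.
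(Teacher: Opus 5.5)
Your proposal is correct and follows essentially the paper's route. You show $\S_\bb\perp\Lambda_2\oplus\Lambda_3$, place $\S_\bb-\wt\S_\bb$ in the nuisance tangent space, and use \eqref{eq:e10} to get $E\{\wt\S_\bb\mid X,\bZ\}=\0$ and hence $\wt\S_\bb\perp\Lambda_1$. The only real differences are that you rederive the nuisance tangent spaces and their orthogonality, which the paper takes from \cite{lee2024robust}, and that you raise solvability of \eqref{eq:e10}, which the paper assumes by defining $\ba$ as its solution.

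One simplification: integrating \eqref{eq:e10} over $X$ given $\bZ$ yields $E\{\ba(X,\bZ)\mid\bZ\}=E[I(X>C)E\{\S_\bb^F(Y,X,\bZ)\mid X,C,\bZ\}\mid\bZ]=\0$. So your split into a $\Lambda_3$ piece is unnecessary; that piece would in any case need mean zero to lie in $\Lambda_3$. In fact $\S_\bb-\wt\S_\bb\in\Lambda_1$ exactly, which the paper later uses in proving Proposition~\ref{pro:2}.
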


That $\wt\Lambda_\bb$ sits inside $\calT$ is what separates this problem
from the one solved in \cite{lee2024robust}, where $\bb$ was estimated
under the same censoring mechanism and $\Lambda_1$, $\Lambda_2$,
$\Lambda_3$, and the projection $\wt\S_\bb$ were developed in doing so.
When the target is $\bb$, the relevant space is the nuisance tangent
space $\Lambda_1\oplus\Lambda_2\oplus\Lambda_3$ alone, and $\wt\S_\bb$
is the efficient score function for $\bb$, which lies outside that
space. Here the target is $\zeta$, a function of $\bb$, so the
contribution from $\bb$ is carried inside the tangent space as
$\wt\Lambda_\bb$, and the efficient influence function accounts for
uncertainty from the nuisance models and from $\bb$ at once.

The decomposition of the tangent space into four orthogonal subspaces simplifies the
construction of the efficient influence function. Each subspace can be
projected onto separately, so the efficient influence function is a
sum of four terms, one from each. The terms are determined by one
function apiece, $\bfe$ for $\bb$ and $a_1$, $a_2$, and $a_3$ for the
nuisance models, with $a_1$ and $a_2$ solving integral equations.
Recall that $\d$ denotes the Dirac delta function.

\begin{Pro}\label{pro:2}
    The efficient influence function for $\zeta$ is
$\phi\eff(\bo,\zeta)\equiv\phi_\bb(\bo,\zeta)+\sum_{j=1}^3\phi_j(\bo,\zeta)$,
    where
\bse
\phi_\bb(\bo,\zeta) &\equiv& \bfe\trans\wt\S_\bb(y,w,\delta,\z,\bb), \n\\
\phi_1(\bo,\zeta)&\equiv&\delta a_1(w,\bz) + (1-\delta) \frac{E\{I(X> w)
    a_1(X,\z)\mid y,\z,\bb,\eta_1\}}{E\{I(X > w)\mid y,\z,\bb,\eta_1\}}, \n\\
\phi_2(\bo,\zeta)&\equiv& \delta \frac{E\{I(C \ge w) a_2(C,\z)\mid \z,\eta_2\}}{E\{I(C \ge w)\mid \z,\eta_2\}} + (1-\delta)a_2 (w,\z),\\
\phi_3(\bo,\zeta)&\equiv& a_3(\bz),
\ese
with $a_1(x,\z)$ satisfying $E\{a_1(X,\bz)\mid \bz,\eta_1\} = 0$ and
\bse
&& E\left[\Delta a_1(W,\bz) + (1-\Delta) \frac{E\{I(X> W)
    a_1(X,\bz)\mid W,Y,\z,\bb,\eta_1\}}{E\{I(X > W)\mid W,Y,\z,\bb,\eta_1\}}\mid x,\z,\bb,\eta_2\right]\\
&=& - \frac{E[I\{r(\bO,\bb) \le \zeta\}\mid x,\z,\bb,\eta_2] - E[I\{r(\bO,\bb) \le \zeta\}\mid \z,\bb,\eta_1,\eta_2]}{E[\d\{\zeta - r(\bO,\bb)\}\mid\bb,\eta_1,\eta_2,\eta_3]},\n
\ese
$a_2(c,\z)$ satisfying $E\{a_2(C,\bz)\mid \bz,\eta_2\} = 0$ and
\bse
&& E\left[\Delta \frac{E\{I(C \ge W) a_2(C,\z)\mid W,\z,\eta_2\}}{E\{I(C \ge W)\mid W,\z,\eta_2\}}+ (1-\Delta)a_2
    (W,\z) \mid c,\z,\eta_1\right]\n\\
&=& - \frac{E[I\{r(\bO,\bb) \le \zeta\}\mid c,\z,\bb,\eta_1] - E[I\{r(\bO,\bb) \le \zeta\}\mid \z,\bb,\eta_1,\eta_2]}{E[\d\{\zeta - r(\bO,\bb)\}\mid\bb,\eta_1,\eta_2,\eta_3]},
\ese
and $a_3(\z)$ and $\bfe$ defined as
\bse
a_3(\z) &=& -\frac{E[I\{r(\bO,\bb) \le \zeta\}\mid \z,\bb,\eta_1,\eta_2] - (1-\alpha)}{E[\d\{\zeta - r(\bO,\bb)\}\mid\bb,\eta_1,\eta_2,\eta_3]},\\
\bfe &=& E\{\wt\S_\bb(Y,W,\Delta,\Z,\bb)^{\otimes2}\mid\bb,\eta_1,\eta_2,\eta_3\}^{-1}
\left(\frac{E[\d\{\zeta - r(\bO,\bb)\}\partial r(\bO,\bb)/\partial \bb\mid\bb,\eta_1,\eta_2,\eta_3]}{E[\d\{\zeta - r(\bO,\bb)\}\mid\bb,\eta_1,\eta_2,\eta_3]}\right.\n\\
&&\left.-\frac{E[I\{r(\bO,\bb) \le \zeta\} \wt\S_\bb(Y,W,\Delta,\Z)\mid\bb,\eta_1,\eta_2,\eta_3]}{E[\d\{\zeta - r(\bO,\bb)\}\mid\bb,\eta_1,\eta_2,\eta_3]}\right).\n
\ese
\end{Pro}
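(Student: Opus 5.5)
We will follow the standard pathwise-derivative route for functionals defined implicitly by an estimating equation. Write $F(\zeta,\bb,\eta_1,\eta_2,\eta_3)\equiv E[I\{r(\bO,\bb)\le\zeta\}]$, so $\zeta$ is defined by $F=1-\alpha$.

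\emph{Pathwise derivative.} Take a regular parametric submodel indexed by $t$ with score $\S(\bO)$. Using Proposition~\ref{pro:1}, we decompose this score as
\[
\bfe_t\trans\wt\S_\bb+s_1+s_2+s_3, \qquad s_j\in\Lambda_j .
\]
Implicit differentiation of $F\{\zeta(t),\bb(t),\eta(t)\}=1-\alpha$ then gives
\[
\partial\zeta/\partial t=-\frac{\partial F/\partial\bb\,\partial\bb/\partial t+\partial F/\partial t|_{\eta}}{E[\d\{\zeta-r(\bO,\bb)\}]},
\]
since $\partial F/\partial\zeta$ is the density of the residual at $\zeta$.

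The derivative in $\bb$ has two contributions. The first is the explicit dependence through $r$, which yields $-E[\d\{\zeta-r\}\partial r/\partial\bb]$. The second is the dependence through the density, which yields $E[I\{r\le\zeta\}\S_\bb]$.

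The derivative with respect to the nuisances is $E[I\{r\le\zeta\}s]$, with $s$ the nuisance part of the score. Here the full-data structure is useful: $E[I\{r\le\zeta\}s_3]$ reduces to the covariance with $E[I\{r\le\zeta\}\mid\z]$. Likewise, the $\eta_1$ and $\eta_2$ perturbations act through full-data scores $a_1(X,\z)$ and $a_2(C,\z)$, which gives the conditional expectations given $x,\z$ and $c,\z$ that appear on the right-hand sides of the stated integral equations.

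\emph{Identifying the influence function.} We need $\phi\eff\in\calT$ satisfying $\partial\zeta/\partial t=E(\phi\eff\S)$ for every submodel. Because the four subspaces are orthogonal, we can match each subspace separately.

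For $\Lambda_3$, the candidate is $a_3(\z)=-[E\{I(r\le\zeta)\mid\z\}-(1-\alpha)]/E\{\d(\zeta-r)\}$. It is mean zero because $F=1-\alpha$.

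For $\Lambda_1$, we require $E[\phi_1(\bO)s_1(\bO)]=E[h_1(X,\z)a(X,\z)]$ for all admissible full-data scores $a$, where $h_1$ is minus the centered conditional probability divided by the density. Writing $s_1$ as the observed-data version of $a$ and using the tower property conditional on $(X,\z)$, this becomes $E[\,E\{\phi_1(\bO)\mid X,\z\}a\,]$. Matching integrands then yields exactly the displayed integral equation for $a_1$, in which the left-hand side is the conditional expectation of $\phi_1$ given $x,\z$. The equation for $a_2$ follows symmetrically by conditioning on $(C,\z)$.

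For $\wt\Lambda_\bb$, we require that $E(\phi_\bb\wt\S_\bb\trans)$ equal the $\bb$-derivative of $\zeta$ along $\wt\S_\bb$. Solving $E(\wt\S_\bb^{\otimes2})\bfe=\cdots$ gives the stated $\bfe$. Here $\S_\bb$ may be replaced by $\wt\S_\bb$ because $\S_\bb-\wt\S_\bb$ lies in $\Lambda_1$, and its effect is already carried by $\phi_1$.

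\emph{Main obstacle.} The hardest step is justifying that each integral equation has a solution satisfying the mean-zero constraint. This amounts to showing that the operator $a\mapsto E\{\text{observed-data version of }a\mid x,\z\}$ is invertible on mean-zero functions. We would argue as in \cite{lee2024robust}, where this is a Fredholm-type operator of the form identity minus a contraction, using positivity of $\pr(X\le C\mid\z)$; the same operator appears in \eqref{eq:e10}.

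A second delicate point is the bookkeeping between $\S_\bb$ and $\wt\S_\bb$. We must confirm that the remainder $\S_\bb-\wt\S_\bb$ is exactly the $\Lambda_1$ element generated by $\ba$, so that no cross term survives.

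Finally, since $\phi\eff$ lies in $\calT$ and represents the pathwise derivative, it is the efficient influence function.
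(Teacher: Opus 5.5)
Your proposal is correct and follows essentially the same route as the paper. In both, $\phi\eff\in\calT$ holds by construction, the pathwise derivative of $\zeta$ comes from implicitly differentiating $\pr\{r(\bO,\bb)\le\zeta\}=1-\alpha$, and the representation is checked one subspace at a time using the orthogonality from Proposition~\ref{pro:1} and the tower property, with the $\S_\bb-\wt\S_\bb\in\Lambda_1$ piece handled through the $a_1$ equation exactly as in the paper's $\bb$ step; your remark on solvability of the integral equations goes slightly beyond the paper, which simply takes $a_1$ and $a_2$ (like $\ba$) as given.
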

The proofs of Propositions~\ref{pro:1} and \ref{pro:2} are in
Sections~\ref{sec:pro1} and \ref{sec:pro2}, respectively.

An estimator of $\zeta$ follows from setting the empirical mean of
$\phi\eff$ to zero, except that evaluating $\phi\eff$ requires the
true $\eta_1$, $\eta_2$, and $\eta_3$, which a researcher does not
have. In their place, a researcher specifies working models, and we
write $\eta_1^*$ in place of $\eta_1$ and $\eta_2^\star$ in place of
$\eta_2$, so that correct specification is $\eta_1^*=\eta_1$ and
$\eta_2^\star=\eta_2$. 
The fully observed covariate model $\eta_3$ enters only through a common denominator $E[\d\{\zeta-r(\bO,\bb)\}\mid\bb,\eta_1,\eta_2,\eta_3]$ and the constant vector $\bfe$, so we do not consider its misspecification and focus instead on $\eta_1$ and $\eta_2$. We use the superscripts $^*$ and $^\star$ for
quantities computed under the two working models, giving
$\wt\S_\bb^{*\star}$, $\ba^{*\star}$, $\phi\eff^{*\star}$,
$\phi_\bb^{*\star}$, $\phi_j^{*\star}$, $a_1^{*\star}$,
$a_2^{*\star}$, $a_3^{*\star}$, and $\bfe^{*\star}$.

The efficient influence function depends on $\bb$, which is unknown,
so we display that dependence and write
$\phi\eff^{*\star}(\bo,\zeta,\bb)$ for the version under the two
working models, to be evaluated at an estimator $\wh\bb$ computed from
$\bO_1,\ldots,\bO_n$. The estimator $\wh\zeta$ from the PRESCCO method solves the M-estimating equation
\bse
\sumi \phi\eff^{*\star}(\bO_i,\zeta,\wh\bb)=0,
\ese
and the prediction interval for the new patient is
$\{Y_0:r(\bO_0,\wh\bb)\le \wh\zeta\}$. 

At first glance, this estimating equation does not resemble
$\pr\{r(\bO_0,\bb)\le\zeta\}=1-\alpha$, the equation defining $\zeta$.
Removing the right-censoring shows that the two are the same equation.
\begin{Rem}\label{rem:no-censoring}
When $\pr(X\le C\mid\eta_1,\eta_2)=1$, so that no patient's time to
the event is right-censored and $\Delta=1$ and $W=X$, the function
$\ba$ and the solutions $a_1$ and $a_2$ to the two integral equations
all vanish or reduce to expressions requiring no integral equation,
and $\wh\zeta$ solves
\be
\label{eq:no-cens-ee}
n^{-1}\sumi E[I\{|Y_i - m_0(X_i, \Z_i,\wh\bb)|\le \zeta\}\mid
X_i,\Z_i,\wh\bb]=1-\alpha.
\ee
If the outcome model has $Y=m_0(X,\Z,\bb)+\epsilon$ with the
distribution of $\epsilon$ depending on neither $\bb$ nor $(X,\Z)$,
then $\wh\zeta$ is the $(1-\alpha)$ quantile of $|\epsilon|$. The
derivation is in Section~\ref{sec:no-censoring}.
\end{Rem}
Each term of \eqref{eq:no-cens-ee} is the probability that a patient's
outcome falls within $\zeta$ of the center, computed from the outcome
model at that patient's own time to the event, and averaging those
probabilities over $\bO_1,\ldots,\bO_n$ estimates
$\pr\{r(\bO_0,\bb)\le\zeta\}$. Setting the average to $1-\alpha$ is
therefore $\pr\{r(\bO_0,\bb)\le\zeta\}=1-\alpha$ with the population
replaced by the patients the observational study followed.  Under right-censoring, most patients have no observed time to the
event, so the probability in each term cannot be computed, and $\ba$,
$a_1$, and $a_2$ are what make it computable from the observed data.

%%%%%%%%%%%%%%%%%%%%%%%%%%%%%%%%%%%%%%%%%%%%%%%%%%%%%%%%%%%%%%%%%
\subsection{Double robustness, efficiency, and coverage}\label{sec:theory-length}
%%%%%%%%%%%%%%%%%%%%%%%%%%%%%%%%%%%%%%%%%%%%%%%%%%%%%%%%%%%%%%%%%

A researcher specifies $\eta_1^*$ and $\eta_2^\star$, either of which
may be misspecified, so the first of the two questions left open in
Section~\ref{sec:semiparam} is whether $\wh\zeta$ is consistent for
$\zeta$ when a nuisance model is misspecified.
Conditions~\ref{con:a1}--\ref{con:a4} below are standard in M-estimation
theory for proving consistency \citep{newey1994large}, with $\theta$ and
$\brho$ in Conditions~\ref{con:a1}--\ref{con:a3} denoting generic
parameters lying in the same spaces as $\zeta$ and $\bb$, respectively.
\begin{enumerate}[label=(A\arabic*),ref=(A\arabic*),start=1]
    \item\label{con:a1}
    The true parameter $\zeta$ is contained in a compact set $\Omega$ in 
    $\mathbb{R}$, and $E\{\phi\eff^{*\star}(\bO,\theta,\bb)\}=0$ has a 
    unique solution $\theta=\zeta$ in $\Omega$. 

    \item\label{con:a2} 
    $E\{\sup_{\theta,\brho}|\phi\eff^{*\star}(\bO,\theta,\brho)|\}<\infty$.

    \item\label{con:a3}
    The mapping $(\theta,\brho\trans)\trans \mapsto 
    \phi\eff^{*\star}(\bo,\theta,\brho)$ is continuous for all 
    $\bo=(y,w,\delta,\z)$.

    \item\label{con:a4}
    The estimator $\wh\bb$ is consistent for $\bb$.
\end{enumerate}
The answer to the first question is that $\wh\zeta$ is consistent when
only one of the two working models is correctly specified, a property
known as double robustness.

\begin{Th}\label{th:doubly-robust}
Assume that either $\eta_1^*=\eta_1$ or $\eta_2^\star=\eta_2$. Under
Conditions \ref{con:a1}--\ref{con:a4}, the estimator $\wh\zeta$ is
consistent for $\zeta$.
\end{Th}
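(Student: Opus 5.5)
The argument has two parts: standard M-estimation consistency, which reduces the theorem to showing that the estimating function has mean zero at the truth, and a double-robustness computation that establishes this mean-zero property under either correct working model.

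\emph{Step 1: reduction to a mean-zero property.} By Conditions~\ref{con:a2}--\ref{con:a4} and a uniform law of large numbers over the compact set $\Omega$ and a neighborhood of $\bb$ (as in \citep{newey1994large}, Theorem 2.6),
\[
\sup_{\theta\in\Omega}\Bigl|n^{-1}\sumi\phi\eff^{*\star}(\bO_i,\theta,\wh\bb)-E\{\phi\eff^{*\star}(\bO,\theta,\bb)\}\Bigr|\to0
\]
in probability. Continuity comes from \ref{con:a3} and the envelope from \ref{con:a2}, and the substitution of $\wh\bb$ for $\bb$ is handled by \ref{con:a4} with equicontinuity. Uniqueness of the zero in \ref{con:a1} then gives $\wh\zeta\to\zeta$. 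The only nontrivial fact \ref{con:a1} requires is that $\theta=\zeta$ is a root, namely
\[
E\{\phi\eff^{*\star}(\bO,\zeta,\bb)\}=0
\]
whenever one working model is correct. I would verify this term by term for the decomposition $\phi_\bb+\phi_1+\phi_2+\phi_3$. The denominators and $\bfe^{*\star}$ are constants, so they can be ignored.

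\emph{Step 2: the case $\eta_1^*=\eta_1$.} Use the conditional structure of the observed data given $X$ and $\bZ$.
\begin{itemize}
\item For $\phi_1^{*\star}$: the inner conditional expectation is computed under the true $\eta_1$, so iterated expectations over $X$ given $(W,\Delta=0,Y,\bZ)$ replace $\phi_1$ by $E\{a_1^{*\star}(X,\bZ)\mid\bZ,\eta_1\}$. This vanishes by the constraint imposed on $a_1^{*\star}$, which I would impose under $\eta_1^*$ so that it holds under the truth.
\item For $\phi_\bb^{*\star}$: similarly, $E\{\wt\S_\bb^{*\star}\}=E\{\S_\bb^F-\ba^{*\star}(X,\bZ,\bb)\}$. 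Here $E(\S_\bb^F\mid X,\bZ)=0$, but $E\{\ba^{*\star}(X,\bZ)\}$ need not vanish. I would instead use the integral equation \eqref{eq:e10} to show that the combination entering $\phi_\bb$ has mean zero.
\item For $\phi_2^{*\star}$ and $\phi_3^{*\star}$: their contributions must cancel against the $I\{r\le\zeta\}$ terms. Substitute the defining integral equation for $a_2^{*\star}$. Its right-hand side, evaluated under the true $\eta_1$, has conditional mean given $\bZ$ equal to $E[I\{r\le\zeta\}\mid\bZ]$ minus a quantity computed under $\eta_2^\star$. Averaging this over $\bZ$ and adding $a_3^{*\star}$ telescopes to $(1-\alpha)-\pr\{r(\bO,\bb)\le\zeta\}=0$.
\end{itemize}

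\emph{Step 3: the case $\eta_2^\star=\eta_2$.} The argument is symmetric: condition on $C$ and $\bZ$ instead. Since the inner expectations in $\phi_2^{*\star}$ now use the true censoring law, $\phi_2^{*\star}$ has mean $E\{a_2^{*\star}(C,\bZ)\mid\bZ\}=0$. The integral equation for $a_1^{*\star}$, whose outer expectation is over the true $\eta_2$, then produces the telescoping of the $I\{r\le\zeta\}$ terms together with $a_3^{*\star}$.

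\emph{Main obstacle.} The hard step is the bookkeeping in Steps 2--3: verifying that the integral-equation right-hand sides, which mix $\eta_1^*$ and $\eta_2^\star$ through terms like $E[I\{r\le\zeta\}\mid\bZ,\eta_1^*,\eta_2^\star]$, cancel exactly with $a_3^{*\star}$ when one model is wrong. A related difficulty is showing that the $\phi_\bb$ term, through $\ba^{*\star}$, has mean zero under either correct model. I would first do the computation with $\phi_\bb$ omitted. I would then treat $\wt\S_\bb^{*\star}$ separately, expecting its mean-zero property to follow from the double robustness of the efficient score for $\bb$ established in \cite{lee2024robust}.
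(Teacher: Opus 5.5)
Your architecture matches the paper's proof. Consistency comes from a uniform law of large numbers over $\Omega$ and a shrinking neighborhood of $\bb$; the paper uses Lemma~2.4 and Theorem~2.1 of \cite{newey1994large}, applied to $-\{n^{-1}\sumi\phi\eff^{*\star}(\bO_i,\theta,\wh\bb)\}^2$. The paper then proves $E\{\phi\eff^{*\star}(\bO,\zeta,\bb)\}=0$ in a separate lemma with exactly your groupings: $\phi_\bb$, $\phi_1$, $\phi_2+\phi_3$ when $\eta_1^*=\eta_1$, and $\phi_\bb$, $\phi_1+\phi_3$, $\phi_2$ when $\eta_2^\star=\eta_2$. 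It also uses the same telescoping of the integral-equation right-hand side with $a_3^{*\star}$ down to $\pr\{r(\bO,\bb)\le\zeta\}-(1-\alpha)=0$.

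The one step you do not carry out, $E\{\phi_\bb^{*\star}\}=0$, is where the proposal goes wrong.

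\emph{Case $\eta_1^*=\eta_1$.} Your own reduction gives $E\{\wt\S_\bb^{\star}\}=E\{\S_\bb^F-\ba^{\star}(X,\bZ)\}=-E\{\ba^{\star}(X,\bZ)\}$. So if $E\{\ba^{\star}(X,\bZ)\}$ really ``need not vanish,'' the lemma would be false; in fact it does vanish. The missing idea is that this reduction holds under \emph{any} noninformative censoring law, because the inner expectations in $\wt\S_\bb^{\star}$ use the true $\eta_1$. You may therefore compute the same quantity as $E\{\wt\S_\bb^{\star}\}$ under $(\eta_1,\eta_2^\star)$, which is the law under which \eqref{eq:e10} defines $\ba^{\star}$. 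Conditioning on $(X,\bZ)$ as in \eqref{eq:e11} then leaves $E[I(X\le C)E\{\S_\bb^F(Y,X,\bZ)\mid X,C,\bZ\}]=0$.

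\emph{Case $\eta_2^\star=\eta_2$.} Here the reduction itself is unavailable, since the inner expectations use $\eta_1^*$. Instead, condition on $(X,\bZ)$ under the true $\eta_2$ and apply \eqref{eq:e10} directly, which leaves $E\{\Delta\S_\bb^F(Y,X,\bZ)\}=0$.

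Citing the double robustness of the SPARCC score in \cite{lee2024robust} would also close this step. As written, though, your plan for it rests on an incorrect remark rather than an argument.
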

The proof of Theorem~\ref{th:doubly-robust} is in
Section~\ref{sec:doubly-robust}. Consistency requires
$\phi\eff^{*\star}$ to have mean zero at $\zeta$. An influence
function is generally mean zero only when every model it uses is
correct, which would mean $\eta_1^*=\eta_1$ and $\eta_2^\star=\eta_2$
together, but mean zero holds here when either one does
(Lemma~\ref{lem:l1}), and double robustness follows. The outcome model is different because $\bb$ enters the center and the residual
function $r(\bO,\bb)$, and misspecifying the outcome model changes the residual function and thus defines a different target
half-length. We therefore do not consider outcome-model
misspecification, and Condition~\ref{con:a4} requires $\wh\bb$ to converge to
$\bb$.

A researcher can then use the right-censoring rate to decide which
working model to specify well. Under a high rate, most patients have an
observed censoring time, so there are ample data to specify
$\eta_2^\star$ well; under a low rate, most have an observed time to
the event, so there are ample data to specify $\eta_1^*$ well. Either
way, one of the two has ample data, and by
Theorem~\ref{th:doubly-robust}, the other need not be correct.

Theorem~\ref{th:doubly-robust} gives the consistency of $\wh\zeta$ but not
its asymptotic variance, and how small that variance can be is the
second question left open in Section~\ref{sec:semiparam}. Answering it
requires the limiting distribution of $\wh\zeta$ and so three further
conditions. Conditions~\ref{con:a5}--\ref{con:a7} below are standard
in semiparametric theory.
\begin{enumerate}[label=(A\arabic*),ref=(A\arabic*),start=5]
    \item\label{con:a5}
    $E\{\sup_{\theta,\brho}\|\partial \phi\eff^{*\star} (\bO,\theta,\brho)/\partial(\theta,\brho\trans)\trans\|_2\}<\infty$.
    \item\label{con:a6}
    The mapping $(\theta,\brho\trans)\trans \mapsto \partial \phi\eff^{*\star} (\bo,\theta,\brho)/\partial\theta$ is continuous for all $\bo=(y,w,\delta,\z)$.
    \item\label{con:a7}
    The estimator $\wh\bb$ is asymptotically linear with the influence function $\bxi(\bo,\bb)$, that is,
    \bse
    \wh\bb-\bb=n^{-1}\sumi \bxi(\bO_i,\bb)+o_p(n^{-1/2}),
    \ese
    which holds for the estimators of $\bb$ available in the
    right-censored covariate setting \citep{matsouaka2020regression,
    vazquez2024establishing, lee2024robust}.
\end{enumerate}
The answer to the second question is that the asymptotic variance is
as small as possible when both working models are correctly specified,
a property known as semiparametric efficiency.
\begin{Th}\label{th:asympt-normal}
Assume Conditions \ref{con:a1}--\ref{con:a7}.
If either $\eta_1^*=\eta_1$ or $\eta_2^\star=\eta_2$, then
$n^{1/2}(\wh\zeta - \zeta) \stackrel{d}{\to}\Normal(0,\sigma^{2}\tau^{-2})$,
where $\sigma^2 \equiv \var\{\bh\trans\bxi(\bO,\bb)
+\phi\eff^{*\star}(\bO,\zeta,\bb)\}$, $\bh\equiv
E\{\partial\phi\eff^{*\star}(\bO,\zeta,\bb)/\partial\bb\}$,
and $\tau \equiv E\{\partial \phi\eff^{*\star}
(\bO,\zeta,\bb)/ \partial \zeta\}$.
If $\eta_1^*=\eta_1$ and $\eta_2^\star=\eta_2$, then $\wh\zeta$ is
semiparametrically efficient, that is, $n^{1/2}(\wh\zeta - \zeta)
\stackrel{d}{\to}\Normal[0,\var\{\phi\eff(\bO,\zeta,\bb)\}]$.
\end{Th}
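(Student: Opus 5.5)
We would prove Theorem~\ref{th:asympt-normal} by the standard Taylor expansion of the M-estimating equation, using Theorem~\ref{th:doubly-robust} for consistency and Lemma~\ref{lem:l1} for the mean-zero property under either correct working model. Write $\Psi_n(\theta,\brho)\equiv n^{-1}\sumi\phi\eff^{*\star}(\bO_i,\theta,\brho)$. Since $\Psi_n(\wh\zeta,\wh\bb)=0$, a mean value expansion around $(\zeta,\bb)$ gives
\bse
0=\Psi_n(\zeta,\bb)+\frac{\partial\Psi_n(\bar\zeta,\bar\bb)}{\partial\theta}(\wh\zeta-\zeta)+\frac{\partial\Psi_n(\bar\zeta,\bar\bb)}{\partial\brho\trans}(\wh\bb-\bb),
\ese
with $(\bar\zeta,\bar\bb)$ between $(\wh\zeta,\wh\bb)$ and $(\zeta,\bb)$. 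The first step is to show that the two derivative averages converge in probability to $\tau$ and $\bh\trans$. This follows from consistency of $\wh\zeta$ (Theorem~\ref{th:doubly-robust}) and of $\wh\bb$ (Condition~\ref{con:a4}), together with a uniform law of large numbers on a compact neighbourhood. That law in turn follows from the dominance Condition~\ref{con:a5} and the continuity Condition~\ref{con:a6} (and the analogous continuity in $\brho$, which we would state or derive from \ref{con:a3} and \ref{con:a5}). We would also assume $\tau\neq 0$, which is implicit in the statement.

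The second step substitutes Condition~\ref{con:a7} for $\wh\bb-\bb$, giving
\bse
n^{1/2}(\wh\zeta-\zeta)=-\tau^{-1}n^{-1/2}\sumi\{\phi\eff^{*\star}(\bO_i,\zeta,\bb)+\bh\trans\bxi(\bO_i,\bb)\}+o_p(1).
\ese
The summands are i.i.d. By Lemma~\ref{lem:l1}, $E\{\phi\eff^{*\star}(\bO,\zeta,\bb)\}=0$ whenever $\eta_1^*=\eta_1$ or $\eta_2^\star=\eta_2$, and $E\{\bxi\}=0$ as an influence function. The central limit theorem, with finite second moments assumed as part of $\sigma^2$ being defined, then yields $\Normal(0,\sigma^2\tau^{-2})$. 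This is the only place double robustness enters, and it is where misspecification could break the argument.

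For efficiency, set $\eta_1^*=\eta_1$ and $\eta_2^\star=\eta_2$. The key claims are $\bh=\0$ and $\tau=-1$, so that $\sigma^2\tau^{-2}=\var\{\phi\eff\}$. For $\tau$, differentiating the identity $E_{\zeta}\{\phi\eff(\bO,\zeta,\bb)\}$ in $\zeta$, where only the explicit $\zeta$ dependence through $I\{r\le\zeta\}$ and the normalizing density matters, gives $-E[\d\{\zeta-r\}]/E[\d\{\zeta-r\}]=-1$. For $\bh$, we would use the generalized information equality: for any regular parametric submodel indexed by $\bb$, $\partial E_\bb\{\phi\eff(\bO,\zeta(\bb),\bb)\}/\partial\bb=0$ because the mean is identically zero. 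This splits into $\bh+\tau\,\partial\zeta/\partial\bb+E\{\phi\eff\S_\bb\trans\}$. Pathwise differentiability gives $\partial\zeta/\partial\bb=E\{\phi\eff\S_\bb\}$ in the direction of $\bb$, so these terms cancel to $\bh=\0$. Equivalently, the $\phi_\bb$ component built with $\bfe$ in Proposition~\ref{pro:2} is exactly what absorbs the estimation of $\bb$, so the $\bxi$ term drops out.

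We expect this last step to be the main obstacle. Verifying $\bh=\0$ requires carefully relating the derivative in $\bb$ of $\phi\eff$, which includes derivatives of $\wt\S_\bb$, $\ba$, $a_1$ and $a_2$, to the score via interchange of differentiation and expectation. It also uses the orthogonality of $\wt\Lambda_\bb$ to $\Lambda_1\oplus\Lambda_2\oplus\Lambda_3$ from Proposition~\ref{pro:1}. We would first try the submodel-differentiation route above, justifying the interchange by Condition~\ref{con:a5}. If it works, then $\var\{\phi\eff\}$ is the semiparametric bound, since $\phi\eff$ is the efficient influence function by Proposition~\ref{pro:2}.
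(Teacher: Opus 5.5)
Your proposal is correct and follows essentially the same route as the paper: a Taylor expansion of the estimating equation combined with Condition~\ref{con:a7}, Lemma~\ref{lem:l1} and the central limit theorem, then $\tau=-1$ from the integral equations defining $a_1$, $a_2$, $a_3$, and $\bh=\0$ from differentiating the mean-zero identity $E_\bb\{\phi\eff(\bO,\zeta(\bb),\bb)\}=0$ and matching $E\{\phi\eff\S_\bb\}$ with $\partial\zeta/\partial\bb$. The only differences are minor: the paper recomputes $E\{\phi\eff\S_\bb\}=\partial\zeta/\partial\bb$ term by term rather than citing Proposition~\ref{pro:2}, and the continuity of $\partial\phi\eff^{*\star}/\partial\brho$ you flag is not implied by Conditions~\ref{con:a3} and~\ref{con:a5}, so it would have to be assumed, although the paper's own $o_p$ Taylor remainder relies on the same regularity implicitly.
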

The proof of Theorem~\ref{th:asympt-normal} is in
Section~\ref{sec:asympt-normal}. Asymptotic normality holds under one
correctly specified working model, the same condition as consistency,
so a researcher can attach a confidence interval to $\wh\zeta$ and
test whether two specifications of the working models  yield the same target
half-length. With both correctly specified, no regular asymptotically
linear estimator of $\zeta$ has a smaller asymptotic variance than the estimator the PRESCCO method produces.
%Until now, estimators of a half-length could not be ordered by asymptotic variance, since the half-length was a sample quantile of residuals and not a parameter of any likelihood. Expressing $\zeta$ as a parameter makes ordering methods by their asymptotic variance possible, and since it has the smallest asymptotic variance, the estimator the PRESCCO method produces comes first.

The two questions that semiparametric theory answers concern the
estimator. The third and fourth concern the prediction interval built
from the estimator. The third is: Does specifying the working models
lose coverage?  Recall
that the coverage rate $\pr\{r(\bO_0,\wh\bb) \le \wh\zeta\mid
\wh\zeta,\wh\bb\}$ averages over $\bO_0$, holding $\bO_1,\ldots,\bO_n$
fixed, and that the expected coverage rate $\pr\{r(\bO_0,\wh\bb) \le
\wh\zeta\}$ averages the coverage rate over $\bO_1,\ldots,\bO_n$, as
well. Split conformal prediction and full conformal prediction guarantee
that the expected coverage rate falls within order $n^{-1}$ of the
prediction level whatever the distribution for the population. That
guarantee requires no models, and misspecifying a model cannot weaken
it. Whether the same order holds for $\wh\zeta$ is what the working
models put at risk.

Showing that the same order holds for $\wh\zeta$ requires tracking the
error in $\wh\zeta$ and in $\wh\bb$ together, since the prediction
interval is built from both. Let
$\wh\bb$ solve $\sumi\bpsi(\bO_i,\bb,\eta_1^*,\eta_2^\star)=\0$ with
$E\{\bpsi(\bO,\bb,\eta_1^*,\eta_2^\star)\}=\0$, and write
$\bPhi(\bo,\zeta,\bb)\equiv\{\phi\eff^{*\star}(\bo,\zeta,\bb),
\bpsi(\bo,\bb,\eta_1^*,\eta_2^\star)\trans\}\trans$. Then
$(\wh\zeta,\wh\bb\trans)\trans$ solves
$\sumi\bPhi(\bO_i,\zeta,\bb)=\0$, a system of the same dimension as
$(\zeta,\bb\trans)\trans$. Under standard regularity conditions, this
$\wh\bb$ is asymptotically linear, satisfying Condition~\ref{con:a7} with influence function
$\bxi(\bo,\bb)=-E\{\partial\bpsi(\bo,\bb,\eta_1^*,\eta_2^\star)/
\partial\bb\trans\}^{-1}\bpsi(\bo,\bb,\eta_1^*,\eta_2^\star)$.

Conditions~\ref{con:c1}--\ref{con:c5} below govern this system.
Conditions~\ref{con:c1} and \ref{con:c2} hold when the coverage rate
is a smooth function of $(\zeta,\bb\trans)\trans$;
Condition~\ref{con:c3} holds when $\eta_1^*$ and $\eta_2^\star$ are
close to $\eta_1$ and $\eta_2$, respectively, since
$-E\{\partial\bPhi(\bO,\zeta,\bb)/\partial(\zeta,\bb\trans)\}$ is the
identity matrix when both are correctly specified; and
Conditions~\ref{con:c4} and \ref{con:c5} control the moments of
$\bPhi$ and its derivatives, respectively.

\begin{enumerate}[label=(C\arabic*),ref=(C\arabic*),start=1]
    \item\label{con:c1}
    The $l_2$ norm of the derivative of $\pr\{r(\bO_0,\bb) \le \zeta\mid \zeta,\bb\}$ satisfies
    \bse
    M_1 \equiv \left\|\frac{\partial \pr\{r(\bO_0,\bb) \le \zeta\mid \zeta,\bb\}}{\partial(\zeta,\bb\trans)\trans}\right\|_2<\infty.
    \ese
    \item\label{con:c2}
    The spectral norm of the Hessian matrix of $\pr\{r(\bO_0,\brho)
    \le \theta\mid \theta,\brho\}$ is uniformly bounded over
    $(\theta,\brho\trans)\trans$, that is,
    \bse
    M_2 \equiv \sup_{\theta,\brho}\left\|\frac{\partial^2\pr\{r(\bO_0,\brho) \le \theta\mid \theta,\brho\}}{\partial(\theta,\brho\trans)\trans\partial(\theta,\brho\trans)}\right\|_2<\infty.
    \ese
    \item\label{con:c3}
    The matrix $E\{\partial\bPhi(\bO,\zeta,\bb)/\partial(\zeta,\bb\trans)\}$ is invertible, and its smallest singular value $\lambda_1$ is positive.
    \item\label{con:c4}
    The vector $\bPhi(\bo,\zeta,\bb)$ and the matrix $\partial\bPhi(\bo,\zeta,\bb)/\partial(\zeta,\bb\trans)$ have finite second moments, that is,
    \bse
    M_3\equiv\|E\{\bPhi^{\otimes2}(\bO,\zeta,\bb)\}\|_2<\infty, \qquad
    M_4\equiv\left\|E\left[\left\{\frac{\partial\bPhi(\bO,\zeta,\bb)}{\partial(\zeta,\bb\trans)}\right\}^{\otimes2}\right]\right\|_2<\infty.
    \ese
    \item\label{con:c5}
    For any fixed $\bv$ and $\bo$, the  three-dimensional tensor
    $\partial^2\bPhi(\bo,\theta,\brho)/\{\partial(\theta,\brho\trans)\trans\partial(\theta,\brho\trans)\}$
    is continuous with respect to $(\theta,\brho\trans)\trans$, and 
 \bse
 M_5\equiv E\left[\sup_{\|\bv\|_2=1}\sup_{\theta,\brho}\left\|\left\{\bv\trans\frac{\partial^2\bPhi_j(\bO,\theta,\brho)}{\partial(\theta,\brho\trans)\trans\partial(\theta,\brho\trans)}\right\}_{j=1}^{d_\bb+1}\right\|_2^2\right]<\infty,
\ese
 where $(\a_j\trans)_{j=1}^d\equiv(\a_1,\dots, \a_d)\trans$ for
   vectors $\a_j, j=1, \dots, d$.
Here, $\bPhi_j$ is the $j$-th component of $\bPhi$.
\end{enumerate}

The answer to the third question is that no coverage is lost, since the
expected coverage rate stays within order $n^{-1}$ of the prediction
level.
\begin{Th} \label{th:bias}
Under Conditions \ref{con:c1}--\ref{con:c5}, for the estimators
$\wh\zeta$ and $\wh\bb$ solving $\sumi \bPhi(\bO_i,\zeta,\bb)=\0$,
\bse
|\pr\{r(\bO_0,\wh\bb) \le \wh\zeta\} - (1-\alpha)|=O(n^{-1}).
\ese
\end{Th}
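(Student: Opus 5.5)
Write $\bg\equiv(\zeta,\bb\trans)\trans$, $\wh\bg\equiv(\wh\zeta,\wh\bb\trans)\trans$, and $G(\theta,\brho)\equiv\pr\{r(\bO_0,\brho)\le\theta\mid\theta,\brho\}$, so that $G(\bg)=1-\alpha$. Because $\bO_0$ is independent of $\bO_1,\ldots,\bO_n$, conditioning on $\wh\bg$ gives
\bse
\pr\{r(\bO_0,\wh\bb)\le\wh\zeta\}=E\{G(\wh\bg)\},
\ese
so the claim reduces to $|E\{G(\wh\bg)\}-G(\bg)|=O(n^{-1})$. A second-order Taylor expansion of $G$ about $\bg$, with the remainder controlled by Condition~\ref{con:c2}, gives
\bse
|E\{G(\wh\bg)\}-G(\bg)|\le M_1\|E(\wh\bg-\bg)\|_2+\tfrac12 M_2E\|\wh\bg-\bg\|_2^2 .
\ese
It therefore suffices to show $E\|\wh\bg-\bg\|_2^2=O(n^{-1})$ (mean squared error) and $\|E(\wh\bg-\bg)\|_2=O(n^{-1})$ (bias). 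Note that the first-order term vanishes only through the bias being $O(n^{-1})$, not through any cancellation in $G$.

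Both bounds come from the estimating equation. Write $\bB\equiv E\{\partial\bPhi(\bO,\bg)/\partial\bg\trans\}$ and
\bse
\bar\bPhi\equiv n^{-1}\sumi\bPhi(\bO_i,\bg),\qquad \bar\bPhi'\equiv n^{-1}\sumi\partial\bPhi(\bO_i,\bg)/\partial\bg\trans .
\ese
We need $E\{\bPhi(\bO,\bg)\}=\0$. For the $\bpsi$ block this holds by assumption, and for the $\phi\eff^{*\star}$ block it holds by Lemma~\ref{lem:l1} when one working model is correct. (Strictly, the statement should be read with $\bg$ the root of $E\bPhi=\0$; the conditions as listed implicitly include this.) Expanding $\0=n^{-1}\sumi\bPhi(\bO_i,\wh\bg)$ about $\bg$ to second order gives
\bse
\0=\bar\bPhi+\bar\bPhi'(\wh\bg-\bg)+\bR,\qquad \|\bR\|_2\le \tfrac12\Big(n^{-1}\sumi S_i\Big)\|\wh\bg-\bg\|_2^2 ,
\ese
where $S_i$ is the supremum appearing in Condition~\ref{con:c5}, so that $E S_i^2\le M_5$. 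Rearranging,
\be
\wh\bg-\bg=-\bB^{-1}\bar\bPhi-\bB^{-1}(\bar\bPhi'-\bB)(\wh\bg-\bg)-\bB^{-1}\bR .
\label{eq:plan-expansion}
\ee
By Condition~\ref{con:c3}, $\|\bB^{-1}\|_2\le\lambda_1^{-1}$. By Condition~\ref{con:c4}, $E\|\bar\bPhi\|_2^2\le M_3 (d_\bb+1)/n$ and $E\|\bar\bPhi'-\bB\|_2^2=O(M_4/n)$.

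For the mean squared error, I would first obtain $E\|\wh\bg-\bg\|_2^2=O(n^{-1})$ from \eqref{eq:plan-expansion}. The last two terms are quadratic in small quantities, so on the event $\{\|\wh\bg-\bg\|_2\le\epsilon\}$ they can be absorbed into the left side once $\epsilon$ is small relative to $\lambda_1$. For the bias, I take expectations in \eqref{eq:plan-expansion}. The leading term has mean zero. The cross term satisfies
\bse
E\|(\bar\bPhi'-\bB)(\wh\bg-\bg)\|_2\le\{E\|\bar\bPhi'-\bB\|_2^2\,E\|\wh\bg-\bg\|_2^2\}^{1/2}=O(n^{-1}).
\ese
The remainder term is bounded via Cauchy--Schwarz using $E\|\wh\bg-\bg\|_2^4$. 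Hence $\|E(\wh\bg-\bg)\|_2=O(n^{-1})$, and substituting both bounds into the Taylor inequality gives the theorem.

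The main obstacle is that the moment bounds require control of $\wh\bg$ globally, not merely in probability. The conditions are stated at $\bg$, apart from the supremum in Condition~\ref{con:c5}, so they do not by themselves exclude heavy tails of $\wh\bg$ off a neighborhood. The remainder term also needs a fourth moment of $\wh\bg-\bg$, which Condition~\ref{con:c4} alone does not supply. I would handle this with a localization argument:
\begin{itemize}
\item show that $\{\|\wh\bg-\bg\|_2>\epsilon\}$ has probability $O(n^{-1})$ (indeed smaller) by a Chebyshev bound on $\bar\bPhi$ and $\bar\bPhi'-\bB$;
\item use the fact that $G\in[0,1]$ is bounded, so this event contributes only $O(n^{-1})$ to $E\{G(\wh\bg)\}$ directly;
\item carry out the whole Taylor argument on the complementary event, where $\|\wh\bg-\bg\|_2\le\epsilon$ makes the higher powers harmless: $E\{\|\wh\bg-\bg\|_2^4 I(\|\wh\bg-\bg\|_2\le\epsilon)\}\le\epsilon^2E\{\|\wh\bg-\bg\|_2^2 I(\|\wh\bg-\bg\|_2\le\epsilon)\}$.
\end{itemize}
Making this truncation rigorous is where I expect the real care to be needed. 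In particular, truncating the expectation introduces a bias term of order $\pr(\|\wh\bg-\bg\|_2>\epsilon)^{1/2}$ times second moments, which must also be shown to be $O(n^{-1})$.
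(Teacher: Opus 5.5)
Your route is the paper's. A second-order expansion of $G$ reduces the claim to the bias and mean squared error of $\wh\bg$. A second-order expansion of $n^{-1}\sumi\bPhi(\bO_i,\wh\bg)=\0$ then isolates a mean-zero leading term plus a cross term and a quadratic remainder that are $O(n^{-1})$. The paper writes the cross term as $(\wh\bA^{-1}-\bA^{-1})n^{-1}\sumi\bPhi(\bO_i,\bg)$; your form with the deterministic $\bB^{-1}$ is better for taking expectations, since it needs no moments of a random inverse.

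The paper simply turns $O_p(n^{-1})$ bounds into $O(n^{-1})$ bounds on expectations, and takes $E\|\wh\bg-\bg\|_2^2=O(n^{-1})$ from $n^{1/2}$-consistency. You are right that this is where care is needed, but your localization fails at two concrete points.

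\textbf{The remainder and absorption steps.} Write $\bar S\equiv n^{-1}\sumi S_i$ and $\mathbf{R}$ for the quadratic remainder. With the good event defined only by $\|\wh\bg-\bg\|_2\le\epsilon$, your bound gives $E\{\|\mathbf{R}\|_2I\}\le\frac12(E\bar S^2)^{1/2}\{E\|\wh\bg-\bg\|_2^4I\}^{1/2}\le\frac12(E\bar S^2)^{1/2}\,\epsilon\,\{E\|\wh\bg-\bg\|_2^2I\}^{1/2}=O(\epsilon n^{-1/2})$. This is not $O(n^{-1})$ for fixed $\epsilon$. Letting $\epsilon$ shrink does not help: $O(\epsilon n^{-1/2})=O(n^{-1})$ forces $\epsilon=O(n^{-1/2})$, the scale of $\wh\bg-\bg$ itself, so the excluded event is no longer negligible.

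Your absorption step for the mean squared error has the same defect. The multipliers $\|\bar{\bPhi}'-\bB\|_2$ and $\bar S$ are random and unbounded, so a small $\epsilon$ alone does not let you absorb.

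The missing step is to put these multipliers into the good event. Take $\mathcal{G}=\{\|\wh\bg-\bg\|_2\le\epsilon,\ \|\bar{\bPhi}'-\bB\|_2\le\lambda_1/4,\ \bar S\le K\}$ with $K>M_5^{1/2}$ and $\epsilon K\le\lambda_1/2$.
\begin{itemize}
\item Chebyshev's inequality with Conditions~\ref{con:c4} and~\ref{con:c5} bounds the probabilities of the last two complements by $O(n^{-1})$. The paper uses this device in proving Theorem~\ref{th:finite-bound}.
\item On $\mathcal{G}$ your expansion gives the pointwise bounds $\|\wh\bg-\bg\|_2\le2\lambda_1^{-1}\|\bar{\bPhi}\|_2$ and $\|\mathbf{R}\|_2\le\frac K2\|\wh\bg-\bg\|_2^2$.
\item The truncated leading term is then at most $\lambda_1^{-1}(E\|\bar{\bPhi}\|_2^2)^{1/2}\pr(\mathcal{G}^c)^{1/2}$, and the cross and remainder terms are $O(n^{-1})$. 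All of this uses second moments only, with no fourth moment.
\end{itemize}

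\textbf{The localization event.} The bound $\pr(\|\wh\bg-\bg\|_2>\epsilon)=O(n^{-1})$ does not follow from Chebyshev bounds on $\bar{\bPhi}$ and $\bar{\bPhi}'-\bB$, because those bounds are purely local. On the event where $\|\bar{\bPhi}'-\bB\|_2\le\lambda_1/4$ and $\bar S\le K$, any root satisfies $\frac34\lambda_1u\le\|\bar{\bPhi}\|_2+\frac K2u^2$ with $u=\|\wh\bg-\bg\|_2$. This inequality holds not only for $u$ of order $\|\bar{\bPhi}\|_2/\lambda_1$ but for every $u\ge3\lambda_1/(2K)$. Nothing in Conditions~\ref{con:c1}--\ref{con:c5} rules out a root of the sample equation far from $\bg$, or gives consistency at a rate.

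You therefore need an added global hypothesis. One option is Condition~\ref{con:c6}, under which $\|\wh\bg-\bg\|_2<c_0$ holds surely. Another is uniqueness of the root of the sample equation: a contraction argument on the same event then places that root within $\epsilon$ of $\bg$ with probability $1-O(n^{-1})$. The paper's proof leaves this point implicit in its appeal to $n^{1/2}$-consistency. With such a hypothesis and the enlarged event $\mathcal{G}$, your argument becomes a complete, second-moment version of the paper's proof.
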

The proof of Theorem~\ref{th:bias} is in Section~\ref{sec:bias}. The
prediction interval from the PRESCCO method has an expected coverage
rate within order $n^{-1}$ of the prediction level, the same order split
conformal prediction and full conformal prediction guarantee. The
smaller asymptotic variance therefore comes with no loss of coverage.
The difference between the methods is in what each requires. The PRESCCO
method reaches that order under Conditions~\ref{con:c1}--\ref{con:c5}
and a correctly specified working model, whereas split conformal
prediction and full conformal prediction reach it under any distribution
for the population. The $O(n^{-1})$ error for the PRESCCO method can
also be improved on. It comes from the second-order term in the
expansion of $\pr\{r(\bO_0,\wh\bb) \le \wh\zeta\}$, and subtracting the
$n^{-1}$-order bias from $\wh\zeta$ would reduce the error to
$O(n^{-3/2})$ without altering the limiting distribution of
$n^{1/2}(\wh\zeta-\zeta)$ \citep{tian2022methods}. We omit that
calibration.

The fourth question is: How far can the coverage rate deviate from the
prediction level in the one observational study a researcher has?
Theorem~\ref{th:bias} bounds the expected coverage rate, which averages
over the studies a researcher might have run, and leaves the coverage
rate in the study they did run unconstrained. A researcher has only that
one study, and the prediction intervals their patients receive are built
from it. A coverage rate far below the prediction level in that study
means more patients' outcomes fall outside their prediction intervals
than the prediction level allows.

Bounding the coverage rate rather than the expected coverage rate
requires a concentration inequality for $\bPhi$, and the inequality we
use holds when each component has a sub-exponential tail: $X$ is
$(\nu,\omega)$-sub-exponential when $E[e^{t\{X-E(X)\}}] \le
e^{\nu^2t^2/2}$ for all $|t|<1/\omega$ \citep{wainwright2019high}.
Sub-Gaussian tails would be too restrictive in this setting, and
weaker moment conditions would not give the inequality.

Condition~\ref{con:c7} below gives each component of $\bPhi$ such a
tail, and Condition~\ref{con:c6} places
$(\wh\zeta,\wh\bb\trans)\trans$ in a small neighborhood of
$(\zeta,\bb\trans)\trans$, so that Taylor expansions and derivative
bounds apply uniformly there.
\begin{enumerate}[label=(C\arabic*),ref=(C\arabic*),start=6]
\item\label{con:c6}
$\|(\wh\zeta,\wh\bb\trans)\trans-(\zeta,\bb\trans)\trans\|_2 < c_0$ for a fixed small $c_0>0$ such that $2c_0M_5 < \lambda_1$.
    \item\label{con:c7}
    For $j=1, \ldots,d_\bb+1$, $\bbe_j\trans\bPhi(\bO,\zeta,\bb)$ is $(\nu_j,\omega_j)$-sub-exponential.
\end{enumerate}
Condition~\ref{con:c7} can be met. Section~\ref{sec:sub-exp} verifies
it for a linear outcome model with normal errors when no patient's
time to the event is right-censored, which is where $\phi\eff$ takes
the explicit form derived for Remark~\ref{rem:no-censoring}.

The answer to the fourth question is that the coverage rate in the study
a researcher has deviates from the prediction level by at most an amount
of order $n^{-1/2}(\log n)^{1/2}$, with high probability.
\begin{Th} \label{th:finite-bound}
Suppose that $(\wh\zeta,\wh\bb\trans)\trans$ is the solution to $\sumi \bPhi(\bO_i,\zeta,\bb)=\0$. Let $M_1$, $M_2$, and $\lambda_1$ be as defined in Conditions~\ref{con:c1}, \ref{con:c2}, and \ref{con:c3}, respectively. Then under Conditions \ref{con:c1}--\ref{con:c7}, for any $\delta>0$, there exist constants $A>0$ and $k>0$, such that
\bse
|\pr\{r(\bO_0,\wh\bb) \le \wh\zeta\mid \wh\zeta,\wh\bb\} - (1-\alpha)|\le 2M_1\delta \lambda_1^{-1}n^{-1/2} (\log n)^{1/2}+ 2M_2\delta ^2\lambda_1^{-2}n^{-1}\log n
\ese
with probability at least $1-n^{-k}A$.
\end{Th}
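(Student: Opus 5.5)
Write $\btheta\equiv(\zeta,\bb\trans)\trans$ and $\wh\btheta\equiv(\wh\zeta,\wh\bb\trans)\trans$, and let $G(\theta,\brho)\equiv\pr\{r(\bO_0,\brho)\le\theta\mid\theta,\brho\}$. Since $\bO_0$ is independent of $\bO_1,\ldots,\bO_n$, the coverage rate equals $G(\wh\zeta,\wh\bb)$, and $G(\zeta,\bb)=1-\alpha$. The plan is a second-order Taylor expansion of $G$ about $\btheta$:
\bse
|G(\wh\zeta,\wh\bb)-(1-\alpha)|\le M_1\|\wh\btheta-\btheta\|_2+\tfrac12 M_2\|\wh\btheta-\btheta\|_2^2,
\ese
using Condition~\ref{con:c1} for the gradient and Condition~\ref{con:c2} for the uniformly bounded Hessian along the segment. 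This reduces the theorem to the high-probability bound $\|\wh\btheta-\btheta\|_2\le 2\delta\lambda_1^{-1}n^{-1/2}(\log n)^{1/2}$. Substituting it gives exactly $2M_1\delta\lambda_1^{-1}n^{-1/2}(\log n)^{1/2}+2M_2\delta^2\lambda_1^{-2}n^{-1}\log n$.

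To bound $\|\wh\btheta-\btheta\|_2$, expand the estimating equation $\0=n^{-1}\sumi\bPhi(\bO_i,\wh\btheta)$ about $\btheta$, with a componentwise mean-value or integral remainder:
\bse
\0=n^{-1}\sumi\bPhi(\bO_i,\btheta)+\bB_n(\wh\btheta-\btheta)+\bR_n,
\ese
where $\bB_n=n^{-1}\sumi\partial\bPhi(\bO_i,\btheta)/\partial\btheta\trans$. The remainder satisfies $\|\bR_n\|_2\le\tfrac12 K_n\|\wh\btheta-\btheta\|_2^2$, with $K_n$ an empirical average of the supremum in Condition~\ref{con:c5}; that condition's continuity and the definition of $M_5$ control $K_n$. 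Write $\bB\equiv E(\bB_n)$, whose smallest singular value is $\lambda_1$ by Condition~\ref{con:c3}. Then
\bse
\lambda_1\|\wh\btheta-\btheta\|_2\le\|\bB(\wh\btheta-\btheta)\|_2\le\Big\|n^{-1}\sumi\bPhi(\bO_i,\btheta)\Big\|_2+\|\bB_n-\bB\|_2\|\wh\btheta-\btheta\|_2+\tfrac12K_n\|\wh\btheta-\btheta\|_2^2.
\ese
Condition~\ref{con:c6} gives $\|\wh\btheta-\btheta\|_2<c_0$, so the last term is at most $\tfrac12 c_0K_n\|\wh\btheta-\btheta\|_2$. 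On the event $K_n\le 2M_5$ (obtained from Markov's inequality, or by folding a deterministic bound into the constant), this is at most $c_0M_5\|\wh\btheta-\btheta\|_2<\tfrac12\lambda_1\|\wh\btheta-\btheta\|_2$, using $2c_0M_5<\lambda_1$. The term $\|\bB_n-\bB\|_2$ is $O_p(n^{-1/2})$ by Condition~\ref{con:c4} ($M_4<\infty$) and Chebyshev, so it too is eventually below a fraction of $\lambda_1$. Absorbing both terms into the left side leaves $\|\wh\btheta-\btheta\|_2\le 2\lambda_1^{-1}\|n^{-1}\sumi\bPhi(\bO_i,\btheta)\|_2$, up to the factor bookkeeping.

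The leading term is handled by the Bernstein-type tail for sub-exponential averages \citep{wainwright2019high}. By Condition~\ref{con:c7}, each component $\bbe_j\trans\bPhi(\bO_i,\btheta)$ is $(\nu_j,\omega_j)$-sub-exponential with mean zero. For $t=\delta'(\log n/n)^{1/2}$, which is eventually below $\nu_j^2/\omega_j$,
\bse
\pr\Big\{\Big|n^{-1}\sumi\bbe_j\trans\bPhi(\bO_i,\btheta)\Big|>t\Big\}\le2\exp\{-nt^2/(2\nu_j^2)\}=2n^{-\delta'^2/(2\nu_j^2)}.
\ese
A union bound over the $d_\bb+1$ components, with $\delta'=\delta(d_\bb+1)^{-1/2}$, gives $\|n^{-1}\sumi\bPhi\|_2\le\delta(\log n/n)^{1/2}$ with probability at least $1-An^{-k}$, where $k=\min_j\delta'^2/(2\nu_j^2)$ and $A=2(d_\bb+1)$ after enlarging $A$. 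Combining this with the previous step yields the required bound on $\|\wh\btheta-\btheta\|_2$, and hence the theorem.

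The main obstacle is making the events for $\bB_n$ and $K_n$ hold with polynomially small failure probability. Conditions~\ref{con:c4}--\ref{con:c5} give only second moments, so Chebyshev or Markov yields failure probabilities of order $n^{-1}$ for $\bB_n$ and a constant for $K_n$, not $n^{-k}$ for arbitrary $k$. My first attempt would be to state the exponent as $k=\min(1,\cdot)$ and absorb the rates into $A$. If that is too weak, the fallback is to use the population matrix $\bB$ directly: write the remainder against $E\{\partial^2\bPhi\}$ and control the deviation of $\bB_n$ by the same sub-exponential argument, applied under the assumption that the derivative entries are also sub-exponential.
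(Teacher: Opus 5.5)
Your route is the paper's. You Taylor-expand the coverage function using Conditions~\ref{con:c1}--\ref{con:c2}, and reduce the problem to bounding the estimation error by $2\delta\lambda_1^{-1}n^{-1/2}(\log n)^{1/2}$. You Taylor-expand $\sumi\bPhi=\0$, absorbing $\|\bB_n-\bB\|_2$ (Chebyshev, via $M_4$) and the second-derivative remainder (via Condition~\ref{con:c6}) into $\lambda_1$. You finish with a sub-exponential tail and a union bound for $n^{-1}\sumi\bPhi(\bO_i,\zeta,\bb)$. The paper handles the linear regime of the Bernstein bound with $e^{-x}\le x^{-2}$ rather than your ``eventually'', but that difference is cosmetic.

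The gap is the step you flag yourself: the event controlling $K_n$. You obtain it from Markov's inequality, which gives a failure probability that does not vanish. A fixed positive failure probability can never be dominated by $An^{-k}$ for any $k>0$, so the theorem does not follow. Your fallback needs a sub-exponential assumption on the derivatives that the theorem does not make.

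The missing observation is that Condition~\ref{con:c5} bounds the \emph{second} moment of the per-observation supremum, which is exactly what Chebyshev needs. $K_n$ is an average of $n$ i.i.d.\ terms, each with variance at most $M_5$, and by Jensen its mean is at most $M_5^{1/2}$. Hence $\pr\{K_n>\lambda_1/(2c_0)\}\le\{\lambda_1/(2c_0)-M_5^{1/2}\}^{-2}M_5\,n^{-1}$, which is $O(n^{-1})$ like the $\bB_n$ event. This is what the paper does. On this event $\tfrac12 c_0K_n\le\lambda_1/4$, and combined with $\|\bB_n-\bB\|_2\le\lambda_1/4$ you get the factor $2\lambda_1^{-1}$ exactly.

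Your threshold $2M_5$ is also mis-scaled. Since $M_5$ is a second moment, $K_n$ is naturally of size $M_5^{1/2}$. If the supremum is a constant $s<1/2$, the event $K_n\le 2M_5=2s^2$ never holds. Use the threshold the absorption actually needs, $\lambda_1/(2c_0)$. The Chebyshev step then requires $\lambda_1/(2c_0)>M_5^{1/2}$, which choosing $c_0$ small in Condition~\ref{con:c6} is meant to secure; as literally stated, with $M_5$ rather than $M_5^{1/2}$, that condition guarantees this only when $M_5\ge1$.

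Finally, your worry about $n^{-k}$ for arbitrary $k$ is unfounded: the theorem only asserts some $k>0$. The paper takes $k=\min\{1,\delta^2(d_\bb+1)^{-1}2^{-1}\min_j\nu_j^{-2}\}$ and folds all the $O(n^{-1})$ failure probabilities into $A$. That is your ``first attempt'', and it goes through once $K_n$ is handled by Chebyshev.
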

The proof of Theorem~\ref{th:finite-bound}, with the explicit forms of
$k$ and $A$, is in Section~\ref{sec:finite-bound}. The first term of
the bound is of order $n^{-1/2}(\log n)^{1/2}$ and the second of
smaller order $n^{-1}\log n$, so the first governs the bound for large
$n$. Both tighten as $\lambda_1$ grows, and a larger $\lambda_1$ means
a better-conditioned
$E\{\partial\bPhi(\bO,\zeta,\bb)/\partial(\zeta,\bb\trans)\}$.

Theorem~\ref{th:finite-bound} attaches a guarantee to the study a
researcher has rather than to the studies they might have run, and it
does so without assumptions on how the estimator is computed. When the
estimated half-length is a quantile of residuals, bounds of this kind
come with assumptions on the procedure. One is available for split
conformal prediction, whose estimator is a quantile from a held-out
set \citep{vovk2012conditional}. Full conformal prediction and
jackknife+ admit none without further assumptions
\citep{bian2023training}, and such bounds are recovered for them under
algorithmic stability \citep{liang2025algorithmic}.
Conditions~\ref{con:c1}--\ref{con:c7} are instead conditions on
$\bPhi$, and the bound follows from expanding
$\sumi\bPhi(\bO_i,\wh\zeta,\wh\bb)=\0$ and applying a concentration
inequality to the leading term, an argument available because
$\wh\zeta$ solves an M-estimating equation.

%%%%%%%%%%%%%%%%%%%%%
\subsection{Computation and inference}
%%%%%%%%%%%%%%%%%%%%%
Estimating $\zeta$ on top of $\wh\bb$ raises the question of whether
the two should be computed from separate subsamples, as sample
splitting is a common device when one estimator is built on another.
Both are computed from $\bO_1,\ldots,\bO_n$ without sample splitting,
since the uncertainty in $\wh\bb$ enters the asymptotic distribution
of $\wh\zeta$ through its influence function
(Theorem~\ref{th:asympt-normal}). Cross-fitting
\citep{chernozhukov2018double} remains available but is not required.

Solving $\sumi \phi\eff^{*\star}(\bO_i,\zeta,\wh\bb)=0$ directly is
computationally demanding, since $\phi_\bb^{*\star}$, $\phi_1^{*\star}$,
and $\phi_2^{*\star}$ each require an integral equation involving
conditional expectations under the working models. Two simplifications
remove that burden, one from the choice of $\wh\bb$ and one from a
reparametrization of the remaining terms.

The first simplification comes from the choice of $\wh\bb$. The SPARCC
estimator \citep{lee2024robust} solves $\sumi
\wt\S_\bb^{*\star}(\bO_i,\bb)=\0$, and under mild regularity
conditions, it is consistent and asymptotically normal when either
$\eta_1^*=\eta_1$ or $\eta_2^\star=\eta_2$, the same condition
Theorems~\ref{th:doubly-robust} and \ref{th:asympt-normal} require of
$\wh\zeta$. Since $\phi_\bb^{*\star}$ is $\bfe^{*\star\trans}
\wt\S_\bb^{*\star}$, this choice gives $\sumi
\phi_\bb^{*\star}(\bO_i,\zeta,\wh\bb)=0$ at any fixed $\zeta$, so
$\wh\zeta$ solves
\bse
\sumi \sum_{j=1}^3 \phi_j^{*\star}(\bO_i,\zeta,\wh\bb)=0.
\ese

The second simplification is a reparametrization of the remaining
$\phi_1^{*\star}+\phi_2^{*\star}+\phi_3^{*\star}$. Rather than compute
the three separately, we use
\bse
\phi_1^{*\star}+\phi_2^{*\star}+\phi_3^{*\star}
=(\phi_1^{*\star}+\phi_3^{*\star})
+(\phi_2^{*\star}+\phi_3^{*\star})-\phi_3^{*\star},
\ese
where terms cancel within each of the two pairs, leaving integral
equations simpler than those for the individual components.

To express the simplified estimating equation, define
\bse
b_1^{*\star}(x,\bz,\zeta,\bb)
&\equiv&
-\{a_1^{*\star}(x,\bz)+a_3^{*\star}(\bz)\}
E[\d\{\zeta-r(\bO,\bb)\}\mid \bb,\eta_1^*,\eta_2^\star],\\
b_2^{*\star}(c,\bz,\zeta,\bb)
&\equiv&
-\{a_2^{*\star}(c,\bz)+a_3^{*\star}(\bz)\}
E[\d\{\zeta-r(\bO,\bb)\}\mid \bb,\eta_1^*,\eta_2^\star],\\
b_3^{*\star}(\bz,\zeta,\bb)
&\equiv&
-a_3^{*\star}(\bz)
E[\d\{\zeta-r(\bO,\bb)\}\mid \bb,\eta_1^*,\eta_2^\star].
\ese

With this reparametrization, the combined components can be written as
\bse
\phi_1^{*\star}(\bo,\zeta,\wh\bb)+\phi_3^{*\star}(\bo,\zeta,\wh\bb)
&=&
-E[\d\{\zeta-r(\bO,\wh\bb)\}\mid\wh\bb,\eta_1^*,\eta_2^\star]^{-1}\\
&&\times
\left[
\delta b_1^{*\star}(w,\bz,\zeta,\wh\bb)
+
(1-\delta)
\frac{
E\{I(X>w)b_1^{*\star}(X,\bz,\zeta,\wh\bb)
\mid y,\bz,\wh\bb,\eta_1^*\}
}{
E\{I(X>w)\mid y,\bz,\wh\bb,\eta_1^*\}
}
\right],\\
\phi_2^{*\star}(\bo,\zeta,\wh\bb)+\phi_3^{*\star}(\bo,\zeta,\wh\bb)
&=&
-E[\d\{\zeta-r(\bO,\wh\bb)\}\mid\wh\bb,\eta_1^*,\eta_2^\star]^{-1}\\
&&\times
\left[
\delta
\frac{
E\{I(C\ge w)b_2^{*\star}(C,\bz,\zeta,\wh\bb)
\mid \bz,\eta_2^\star\}
}{
E\{I(C\ge w)\mid \bz,\eta_2^\star\}
}
+
(1-\delta)b_2^{*\star}(w,\bz,\zeta,\wh\bb)
\right],\\
\phi_3^{*\star}(\bo,\zeta,\wh\bb)
&=&
-E[\d\{\zeta-r(\bO,\wh\bb)\}\mid\wh\bb,\eta_1^*,\eta_2^\star]^{-1}
b_3^{*\star}(\bz,\zeta,\wh\bb).
\ese
The three share the factor
$-E[\d\{\zeta-r(\bO,\wh\bb)\}\mid\wh\bb,\eta_1^*,\eta_2^\star]^{-1}$,
which is nonzero and drops from the estimating equation, so $\wh\zeta$
solves $\sumi b^{*\star}(\bO_i,\zeta,\wh\bb)=0$, where
\bse
&& b^{*\star}(\bo,\zeta,\wh\bb)\\
&\equiv& \delta b_1^{*\star}(w,\bz,\zeta,\wh\bb) + (1-\delta) \frac{E\{I(X> w)
    b_1^{*\star} (X,\z,\zeta,\wh\bb)\mid y,\z,\wh\bb,\eta_1^*\}}{E\{I(X > w)\mid y,\z,\wh\bb,\eta_1^*\}}\n\\
    &&+\delta \frac{E\{I(C \ge w)b_2^{*\star}
(C,\z,\zeta,\wh\bb)\mid \z,\eta_2^\star\}}{E\{I(C \ge w)\mid \z,\eta_2^\star\}} + (1-\delta)b_2^{*\star} (w,\z,\zeta,\wh\bb)-b_3^{*\star} (\z,\zeta,\wh\bb).\n
\ese
The functions $b_1^{*\star}$ and $b_2^{*\star}$ solve the simpler
integral equations, and $b_3^{*\star}$ is available in closed form,
\be
 &&E\left[\Delta b_1^{*\star}(W,\bz,\zeta,\wh\bb) + (1-\Delta) \frac{E\{I(X> W)
    b_1^{*\star}(X,\bz,\zeta,\wh\bb)\mid W,Y,\z,\wh\bb,\eta_1^*\}}{E\{I(X > W)\mid W,Y,\z,\wh\bb,\eta_1^*\}}\mid x,\z,\wh\bb,\eta_2^\star\right]\n\\
&&\qquad\qquad = E[I\{r(\bO,\wh\bb) \le \zeta\}\mid x,\z,\wh\bb,\eta_2^\star] - (1-\alpha),\label{eq:b1}\\
 &&E\left[\Delta \frac{E\{I(C \ge W) b_2^{*\star}
    (C,\z,\zeta,\wh\bb)\mid W,\z,\eta_2^\star\}}{E\{I(C \ge W)\mid W,\z,\eta_2^\star\}}+ (1-\Delta)b_2^{*\star}
    (W,\z,\zeta,\wh\bb) \mid c,\z,\eta_1^*\right]\n\\
&&\qquad\qquad = E[I\{r(\bO,\wh\bb) \le \zeta\}\mid c,\z,\wh\bb,\eta_1^*] - (1-\alpha),\label{eq:b2}\\
&&b_3^{*\star}(\bz,\zeta,\wh\bb) = E[I\{r(\bO,\wh\bb) \le \zeta\}\mid \z,\wh\bb,\eta_1^*,\eta_2^\star] - (1-\alpha),\label{eq:b3}
\ee
and Algorithm~\ref{alg:a1} computes $\wh\zeta$ from them.
Note that the fully observed covariate model $\eta_3$ has now disappeared from the estimating equation, since its dependence through $\bfe^{*\star}$ and the common denominator has been eliminated. Thus, computing $\wh\zeta$ requires working models only for $\eta_1$ and $\eta_2$.

\begin{algorithm}
\caption{The PRESCCO method}\label{alg:a1}
\begin{minipage}{\linewidth}
\raggedright
\hspace*{\algorithmicindent}\textbf{Input} Working models $\eta_1^*$ and $\eta_2^\star$, prediction level $1-\alpha$, residual function $r$, and observed data $\bO_i = (Y_i,W_i,\Delta_i,\bZ_i)$ for $i=1,\ldots,n$.\\
\hspace*{\algorithmicindent}\textbf{Assumption} $\eta_1^*=\eta_1$ or $\eta_2^\star=\eta_2$.\\
\hspace*{\algorithmicindent}\textbf{Output} The estimator $\wh\zeta$.
\end{minipage}
\begin{algorithmic}[1]
\State Solve $\sumi\wt\S_\bb^{*\star}(\bO_i,\bb)=\0$ for $\wh\bb$.
\State Compute $r(\bO_i,\wh\bb)$ for $i=1,\ldots,n$.
\State Solve \eqref{eq:b1} for $b_1^{*\star}(x,\bz,\zeta,\wh\bb)$ on a grid for $(x,\bz,\zeta)$.
\State Solve \eqref{eq:b2} for $b_2^{*\star}(c,\bz,\zeta,\wh\bb)$ on a grid for $(c,\bz,\zeta)$.
\State Evaluate \eqref{eq:b3} for $b_3^{*\star}(\bz,\zeta,\wh\bb)$ on a grid for $(\bz,\zeta)$.
\State Solve $\sumi b^{*\star}(\bO_i,\zeta,\wh\bb) = 0$ for $\wh\zeta$.
\end{algorithmic}
\end{algorithm}

Inference on $\wh\zeta$ requires an estimate of the asymptotic
variance $\sigma^2\tau^{-2}$ from Theorem~\ref{th:asympt-normal}, and
Proposition~\ref{pro:sigmatau} expresses $\sigma^2\tau^{-2}$ in
terms of $b^{*\star}$. The proof is in Section~\ref{sec:sigmatau}.
\begin{Pro}\label{pro:sigmatau}
    Assume either $\eta_1^*=\eta_1$ or $\eta_2^\star=\eta_2$. Let $\wh\bb$ be the solution of $\sumi\wt\S_\bb^{*\star}(\bO_i,\bb)=\0$.
    Under Conditions \ref{con:a1}--\ref{con:a7}, $\sigma^2\tau^{-2}$ equals
    \be\label{eq:sigma2tau2}
\frac{\var[-E\{\partial b^{*\star}(\bO,\zeta,\bb)/\partial\bb\trans\}E\{\partial\wt \S_\bb^{*\star}(\bO,\bb)/\partial\bb\trans\}^{-1} \wt\S_\bb^{*\star}(\bO,\bb)+b^{*\star}(\bO,\zeta,\bb)]}{E[\d\{\zeta - r(\bO,\bb)\}\mid \bb,\eta_1,\eta_2]^2}.
\ee
\end{Pro}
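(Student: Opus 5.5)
Starting from Theorem~\ref{th:asympt-normal}, we have $\sigma^2\tau^{-2}=\var\{\bh\trans\bxi(\bO,\bb)+\phi\eff^{*\star}(\bO,\zeta,\bb)\}/\tau^2$. The plan is to rewrite the numerator and $\tau$ in terms of $b^{*\star}$ and $\wt\S_\bb^{*\star}$ and check that the common scalar factors cancel. Write $D\equiv E[\d\{\zeta-r(\bO,\bb)\}\mid\bb,\eta_1^*,\eta_2^\star]$, the working-model denominator. The reparametrization preceding Algorithm~\ref{alg:a1} gives the identity
\[
\phi\eff^{*\star}=\phi_\bb^{*\star}-D^{-1}b^{*\star}=\bfe^{*\star\trans}\wt\S_\bb^{*\star}-D^{-1}b^{*\star}.
\]
Since $\wh\bb$ solves $\sumi\wt\S_\bb^{*\star}(\bO_i,\bb)=\0$, Condition~\ref{con:a7} holds with
\[
\bxi=-E\{\partial\wt\S_\bb^{*\star}/\partial\bb\trans\}^{-1}\wt\S_\bb^{*\star}.
\]

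The first step is to compute $\tau$. The term $\bfe^{*\star\trans}\wt\S_\bb^{*\star}$ depends on $\zeta$ only through $\bfe^{*\star}$. Hence
\[
\partial/\partial\zeta\{\bfe^{*\star\trans}\wt\S_\bb^{*\star}\}=(\partial\bfe^{*\star}/\partial\zeta)\trans\wt\S_\bb^{*\star},
\]
and this has mean zero whenever one working model is correct, since $E\{\wt\S_\bb^{*\star}(\bO,\bb)\}=\0$ under double robustness (the SPARCC property from \cite{lee2024robust}). The term $D^{-1}b^{*\star}$ gives $-\partial D^{-1}/\partial\zeta\,E(b^{*\star})-D^{-1}E(\partial b^{*\star}/\partial\zeta)$. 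The first piece vanishes because $E\{b^{*\star}(\bO,\zeta,\bb)\}=0$ at the truth; this is the mean-zero property of Lemma~\ref{lem:l1} transported to $b^{*\star}$.

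For $E(\partial b^{*\star}/\partial\zeta)$, I would use the integral equations \eqref{eq:b1}--\eqref{eq:b3}. Their right-hand sides are conditional versions of $\pr\{r\le\zeta\}-(1-\alpha)$, and averaging gives $E\{b^{*\star}\}=\pr_{\rm true}\{r(\bO,\bb)\le\zeta\}-(1-\alpha)$ whenever one model is correct. That identity is the double-robustness calculation, and it can be differentiated in $\zeta$, giving
\[
E(\partial b^{*\star}/\partial\zeta)=E[\d\{\zeta-r(\bO,\bb)\}\mid\bb,\eta_1,\eta_2]
\]
under the true distribution. Thus $\tau=-D^{-1}E[\d\{\zeta-r\}\mid\bb,\eta_1,\eta_2]$.

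The second step is the same computation for $\bh=E(\partial\phi\eff^{*\star}/\partial\bb)$. Using $E(\wt\S_\bb^{*\star})=\0$ and $E(b^{*\star})=0$, we get
\[
\bh=E\{\partial\wt\S_\bb^{*\star}/\partial\bb\trans\}\trans\bfe^{*\star}-D^{-1}E(\partial b^{*\star}/\partial\bb).
\]
Writing $G\equiv E\{\partial\wt\S_\bb^{*\star}/\partial\bb\trans\}$, it follows that
\[
\bh\trans\bxi=-\bfe^{*\star\trans}\wt\S_\bb^{*\star}+D^{-1}E(\partial b^{*\star}/\partial\bb\trans)G^{-1}\wt\S_\bb^{*\star}.
\]
Adding $\phi\eff^{*\star}$ cancels the $\bfe^{*\star}$ terms, which is the key simplification. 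Multiplying through by $-D$, the sum becomes
\[
-D^{-1}\bigl[-E(\partial b^{*\star}/\partial\bb\trans)G^{-1}\wt\S_\bb^{*\star}+b^{*\star}\bigr].
\]
Its variance divided by $\tau^2$ therefore loses the $D^2$ factor. This leaves exactly \eqref{eq:sigma2tau2} with the true-density denominator.

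The main obstacle I expect is justifying $E(\partial b^{*\star}/\partial\zeta)=E[\d\{\zeta-r\}\mid\bb,\eta_1,\eta_2]$ and $E(b^{*\star})=0$ when only one working model is correct. I would first redo the argument of Lemma~\ref{lem:l1}, case by case for $\eta_1^*=\eta_1$ and $\eta_2^\star=\eta_2$. In each case, iterated conditional expectations collapse the $b_1$ or $b_2$ term through \eqref{eq:b1} or \eqref{eq:b2}, so that $E(b^{*\star})$ equals the true probability of coverage minus $1-\alpha$, identically in $\zeta$. Once this holds identically in $\zeta$, the derivative statement follows by differentiating under the integral, which Conditions~\ref{con:a5}--\ref{con:a6} permit.
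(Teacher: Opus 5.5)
Your proof is correct and follows the paper's route. The influence function $\bxi=-E\{\partial\wt\S_\bb^{*\star}/\partial\bb\trans\}^{-1}\wt\S_\bb^{*\star}$ cancels the $\bfe^{*\star\trans}\wt\S_\bb^{*\star}$ terms in $\bh\trans\bxi+\phi\eff^{*\star}$, and $\tau=-E[\d\{\zeta-r(\bO,\bb)\}\mid\bb,\eta_1,\eta_2]/E[\d\{\zeta-r(\bO,\bb)\}\mid\bb,\eta_1^*,\eta_2^\star]$ removes the working-model denominator; your derivation of $\tau$ from the identity $E\{b^{*\star}(\bO,\zeta',\bb)\}=\pr\{r(\bO,\bb)\le\zeta'\}-(1-\alpha)$ for all $\zeta'$, using $E(b^{*\star})=0$ to drop the denominator-derivative terms, is a tidier packaging of the paper's term-by-term differentiation of $a_1,a_2,a_3$, which leaves those terms implicit.
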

Estimating $\sigma^2\tau^{-2}$ means estimating a density and a
variance. The denominator is the density of $r(\bO,\bb)$ at $\zeta$,
estimated by $n^{-1}\sumi K_h\{\wh\zeta - r(\bO_i,\wh\bb)\}$, where
$K_h(\cdot)=K(\cdot/h)/h$ with bandwidth $h>0$ and a kernel $K\ge 0$
satisfying $\int K(t)dt=1$. The numerator is estimated by the
empirical mean $\wh E$ and empirical variance $\wh\var$. Replacing
$\bb$ and $\zeta$ with $\wh\bb$ and $\wh\zeta$, respectively, gives
\bse
\wh{\sigma^2\tau^{-2}} = \frac{\wh\var[-\wh E\{\partial
  b^{*\star}(\bO,\wh\zeta,\wh\bb)/\partial\bb\trans\}\wh
  E\{\partial\wt \S_\bb^{*\star}(\bO,\wh\bb)/\partial\bb\trans\}^{-1}
  \wt\S_\bb^{*\star}(\bO,\wh\bb)+b^{*\star}(\bO,\wh\zeta,\wh\bb)]}{[n^{-1}\sumi
  K_h\{\wh\zeta - r(\bO_i,\wh\bb)\}]^2}.
\ese
Neither $\wh\zeta$ nor $\wh{\sigma^2\tau^{-2}}$ requires estimating
$\bfe^{*\star}$ or
$E[\d\{\zeta-r(\bO,\bb)\}\mid\bb,\eta_1^*,\eta_2^\star]$, which
further reduces the computational burden of the PRESCCO method.

%%%%%%%%%%%%%%%%%%%%%%%%%%
\section{Simulation studies}\label{sec:sim}
%%%%%%%%%%%%%%%%%%%%%%%%%%

%%%%%%%%%%%%%%%%%%%%%%%%%%
\subsection{Simulation settings}
%%%%%%%%%%%%%%%%%%%%%%%%%%
We designed simulation studies to compare the PRESCCO method with the
three conformal prediction methods across a range of censoring rates,
from 20--30\% to 70--80\%.  We did not generate fully observed
covariates $\Z$, since double robustness, semiparametric efficiency,
and coverage all follow from the censoring mechanism and the
specification of the working models rather than from $\Z$.

Let $\TruncNormal(\mu,\sigma^2;a,b)$ denote $\Normal(\mu,\sigma^2)$
truncated to $[a,b]$. We generated the time-to-event covariate $X$
from $\eta_1=\TruncNormal(0,1^2;-1,1)$, the outcome $Y$ from
$Y\mid X\sim\Normal(\beta_1+\beta_2X,4^2)$ with
$\bb=(\beta_1,\beta_2)\trans=(0,3)\trans$, and the censoring time $C$
from $\eta_2=\TruncNormal(\gamma,1^2;-1,1)$ independently of $(X,Y)$.
Taking $\gamma=2,1,0,-1,-2$ gave low (20--30\%), low-to-moderate
(30--40\%), moderate (45--55\%), moderate-to-high (60--70\%), and high
(70--80\%) censoring, respectively.

To evaluate the PRESCCO method under misspecified working models, we
took\\ $\eta_1^*=\TruncNormal(-2,1^2;-1,1)$ and
$\eta_2^\star=\TruncNormal(\gamma^\star,1^2;-1,1)$, with
$\gamma^\star=0$ when $\gamma\neq0$ and $\gamma^\star=2$ when
$\gamma=0$. With $m_0(X_0,\bb)=\beta_1+\beta_2X_0$, the three centers
of Section~\ref{sec:pred} give the residual functions
$r_1(\bO_0,\bb)=|Y_0-m_1(W_0,\Delta_0,\bb)|$,
$r_2(\bO_0,\bb)=|Y_0-m_2(W_0,\Delta_0,\bb)|$, and
$r_1^*(\bO_0,\bb)=|Y_0-m_1^*(W_0,\Delta_0,\bb)|$, where $m_1^*$ is
computed under $\eta_1^*$.

For each censoring rate and each residual function, we simulated
$M=1{,}000$ observational studies, each following $n=1{,}000$ patients,
computed the estimated half-length from each study, and approximated the
coverage rate over $N=10{,}000$ independent new patients. For full
conformal prediction and jackknife+, the estimated half-length depends
on the new patient, so we averaged the estimated half-length over those
patients, as well. We ran these simulations for the following six
methods, three specifications of the PRESCCO method and the three
conformal prediction methods.
\begin{enumerate}[label=(\arabic*),ref=(\arabic*),start=1]
    \item The PRESCCO method under $(\eta_1,\eta_2)$, with estimated
    half-length $\wh\zeta$ and coverage rate
    $\pr\{r(\bO_0,\wh\bb) \le \wh\zeta\mid\bO_1,\ldots,\bO_n\}$.
    \item The PRESCCO method under $(\eta_1^*,\eta_2)$, with estimated
    half-length $\wh\zeta$ and coverage rate
    $\pr\{r(\bO_0,\wh\bb) \le \wh\zeta\mid\bO_1,\ldots,\bO_n\}$.
    \item The PRESCCO method under $(\eta_1,\eta_2^\star)$, with
    estimated half-length $\wh\zeta$ and coverage rate
    $\pr\{r(\bO_0,\wh\bb) \le \wh\zeta\mid\bO_1,\ldots,\bO_n\}$.
    \item Split conformal prediction, with $n_1=n/2=500$, estimated
    half-length $\wh\zeta\SCP$, and coverage rate
    $\pr\{r(\bO_0,\wh\bb) \le \wh\zeta\SCP\mid\bO_1,\ldots,\bO_n\}$.
    \item Full conformal prediction, with estimated half-length
    $E\{\wh\zeta\FCP(\bO_0)\mid\bO_1,\ldots,\bO_n\}$ and coverage rate
    $\pr[r\{\bO_0,\wh\bb(\bO_0)\} \le
    \wh\zeta\FCP(\bO_0)\mid\bO_1,\ldots,\bO_n]$.
    \item Jackknife+, with estimated half-length
    $E\{\wh\zeta\JK(\bO_0)\mid\bO_1,\ldots,\bO_n\}$ and coverage rate
    $\pr\{Y_0\in [l\JK(\bO_0),u\JK(\bO_0)]\mid\bO_1,\ldots,\bO_n\}$.
\end{enumerate}

Each residual function requires an estimator of $\bb$. The PRESCCO
method and split conformal prediction use SPARCC, which
Algorithm~\ref{alg:a1} requires for the PRESCCO method and which split
conformal prediction can afford, since it fits $\bb$ once. Full
conformal prediction refits $\bb$ at every candidate value of $y$ and
jackknife+ refits it $n$ times, so both use the complete case
estimator, which uses only the data from patients with $\Delta=1$.
%The choice does not drive the comparison that follows, since by Theorem~\ref{th:scp}, the variance of an estimated half-length built from residuals is governed by the density of the residual at an extreme quantile rather than by which estimator of $\bb$ is used.

%%%%%%%%%%%%%%%%%%%%%%%%%%%%%%%%%%%
\subsection{Simulation results}
%%%%%%%%%%%%%%%%%%%%%%%%%%%%%%%%%%%

\begin{figure}[!htbp]
    \centering
\includegraphics[width=\linewidth]{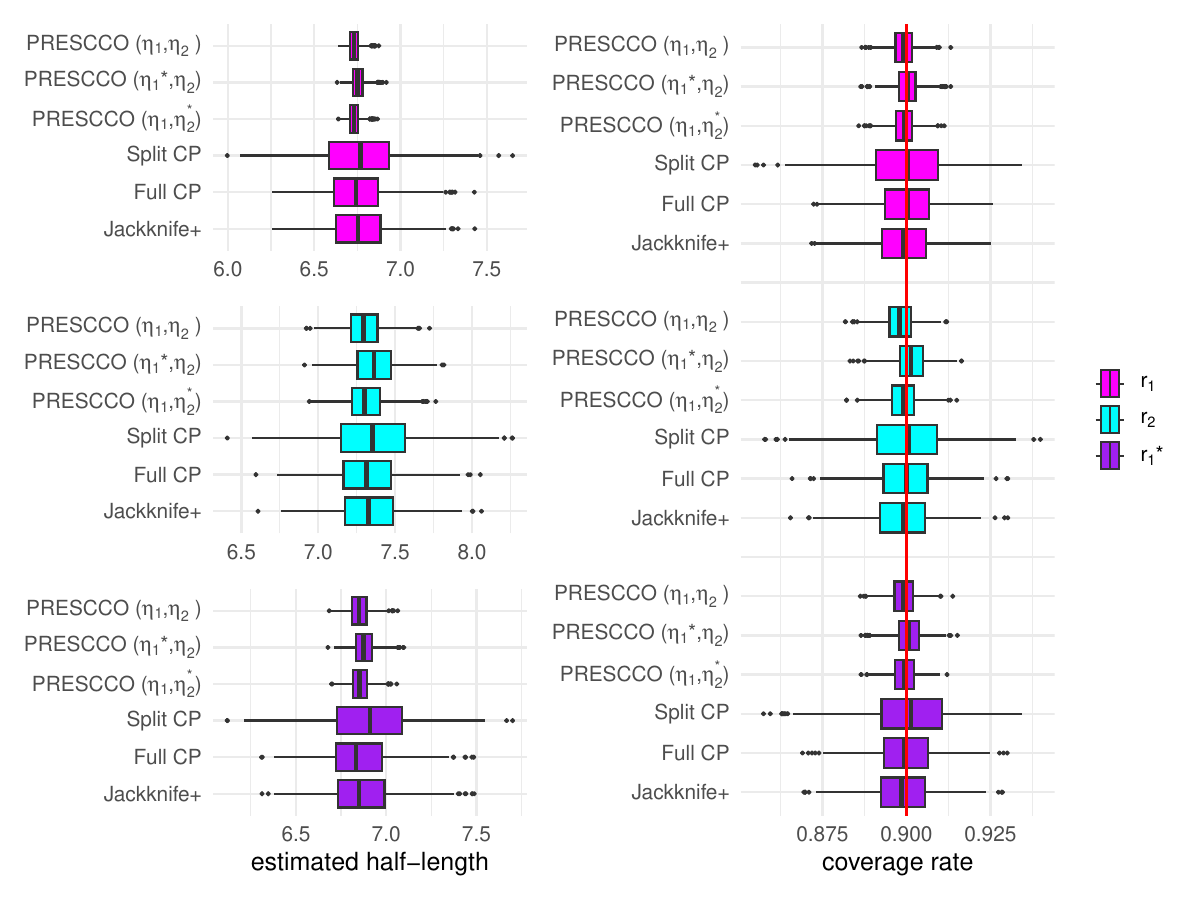}
\caption{Boxplots of the estimated half-length and the coverage rate
under moderate-to-high censoring (60--70\%) across 1,000 simulated
observational studies. PRESCCO $(\eta_1,\eta_2)$, PRESCCO method with
both working models correctly specified; $\eta_1^*$, misspecified
time-to-event model; $\eta_2^\star$, misspecified censoring model; CP,
conformal prediction.}
    \label{fig:f4}
\end{figure}

Figure~\ref{fig:f4} and Table~\ref{tab:t4} report the results under
moderate-to-high censoring, and Section~\ref{sec:addsim} reports the
results under the remaining censoring rates in
Figures~\ref{fig:f1}--\ref{fig:f5} and Tables~\ref{tab:t1}--\ref{tab:t5},
where the conclusions are the same.  Every method has an expected coverage rate between
$0.898$ and $0.901$ against a prediction level of $0.9$, and within
each residual function, the mean estimated half-lengths differ by less
than $1\%$. 
The agreement extends to the PRESCCO method under all three pairs of
working models. Each pair has at least one correctly specified model, as
Theorem~\ref{th:doubly-robust} requires, and misspecifying $\eta_1$ or
$\eta_2$ changes the mean estimated half-length by at most $0.068$
(Table~\ref{tab:t4}). The methods
differ instead in the standard deviation of the estimated
half-length.

As expected from Theorems~\ref{th:scp} and~\ref{th:asympt-normal}, the
estimated half-length $\wh\zeta$ from the PRESCCO method varies less
than the estimated half-length from any conformal prediction method. For
the residual function $r_1$, $\wh\zeta$ under $(\eta_1,\eta_2)$ has a
standard deviation of $0.037$, compared to $0.262$ for split conformal
prediction, $0.185$ for full conformal prediction, and $0.186$ for
jackknife+, which is $0.5\%$ of the mean estimated half-length compared
to $2.7$--$3.9\%$ across the three conformal prediction methods. The
residual functions $r_2$ and $r_1^*$ give the same ordering
(Table~\ref{tab:t4}). That variation is what two researchers would see,
each with their own observational study of patients from the same
population. Under the PRESCCO method the two researchers would report nearly the
same prediction interval. In contrast, under any one of the conformal
prediction methods, the estimated half-length reported by one researcher
can exceed the other's by so much that an outcome inside one prediction
interval is outside the other. The difference is not the work of a few
extreme estimates, since the boxplot for the PRESCCO method lies inside
the boxplot of every conformal prediction method (Figure~\ref{fig:f4}).

\begin{table}[!htbp]
\centering
\caption{Mean and standard deviation of the estimated half-length and
  the coverage rate under moderate-to-high censoring (60--70\%) across
  1,000 simulated observational studies. PRESCCO $(\eta_1,\eta_2)$,
  PRESCCO method with both working models correctly specified;
  $\eta_1^*$, misspecified time-to-event model; $\eta_2^\star$,
  misspecified censoring model; CP, conformal prediction; sd, standard
  deviation.}
\label{tab:t4}
\begin{tabular}{cc|cccc}
  \multicolumn{6}{c}{Moderate-to-high censoring}\\
 & method
 & \multicolumn{2}{c}{estimated half-length}
 & \multicolumn{2}{c}{coverage rate}\\
 & & mean & sd & mean & sd\\
\hline
\multirow{6}{*}{$r_1$}
  & PRESCCO $(\eta_1,\eta_2)$
  & 6.733 &0.037 & 0.899 &0.004 \\
  & PRESCCO $(\eta_1^*,\eta_2)$
  & 6.754 &0.045 & 0.900 &0.004 \\
  & PRESCCO $(\eta_1,\eta_2^\star)$
  & 6.734 &0.036 & 0.899 &0.004 \\
  & Split CP
  & 6.765 &0.262 & 0.900 &0.014 \\
  & Full CP
  & 6.743 &0.185 & 0.900 &0.010 \\
  & Jackknife+
  & 6.756 &0.186 & 0.899 &0.010 \\
\hline
\multirow{6}{*}{$r_2$}
  & PRESCCO $(\eta_1,\eta_2)$
  & 7.302 &0.128 & 0.898 &0.005 \\
  & PRESCCO $(\eta_1^*,\eta_2)$
  & 7.370 &0.156 & 0.901 &0.005 \\
  & PRESCCO $(\eta_1,\eta_2^\star)$
  & 7.316 &0.135 & 0.899 &0.005 \\
  & Split CP
  & 7.362 &0.304 & 0.900 &0.013 \\
  & Full CP
  & 7.322 &0.225 & 0.900 &0.010 \\
  & Jackknife+
  & 7.334 &0.226 & 0.899 &0.010 \\
\hline
\multirow{6}{*}{$r_1^*$}
  & PRESCCO $(\eta_1,\eta_2)$
  & 6.852 &0.060 & 0.899 &0.004 \\
  & PRESCCO $(\eta_1^*,\eta_2)$
  & 6.879 &0.069 & 0.901 &0.004 \\
  & PRESCCO $(\eta_1,\eta_2^\star)$
  & 6.856 &0.058 & 0.899 &0.004 \\
  & Split CP
  & 6.906 &0.262 & 0.901 &0.013 \\
  & Full CP
  & 6.846 &0.189 & 0.900 &0.010 \\
  & Jackknife+
  & 6.859 &0.190 & 0.899 &0.010 \\
\end{tabular}
\end{table}

The prediction interval is built from the estimated half-length, so the
variation in $\wh\zeta$ carries into the coverage rate.
Theorem~\ref{th:finite-bound} bounds how far the coverage rate in a
single observational study can deviate from the prediction level, and
the simulations show how far it does deviate. For the residual function
$r_1$, the coverage rate under $(\eta_1,\eta_2)$ has a standard
deviation of $0.004$, compared to $0.014$ for split conformal prediction
and $0.010$ for full conformal prediction and jackknife+, with $r_2$ and
$r_1^*$ close behind (Table~\ref{tab:t4}). Within two standard
deviations of the expected coverage rate, a prediction level of $0.9$ is
delivered by the PRESCCO method as a coverage rate between $0.891$ and
$0.907$, and by split conformal prediction as one between $0.872$ and
$0.928$. The coverage rate of a patient's prediction interval therefore
stays closer to the stated $90\%$ under the PRESCCO method, whichever
observational study produced the interval.

The comparison so far used the PRESCCO method under $(\eta_1,\eta_2)$,
and misspecifying one of the two working models leaves the ordering
intact. Misspecifying $\eta_1$ raises the standard deviation of
$\wh\zeta$ for $r_1$ from $0.037$ to $0.045$, a quarter of the smallest
standard deviation among the three conformal prediction methods. The
same holds throughout Table~\ref{tab:t4}: under every pair of working
models and every residual function, the standard deviation of $\wh\zeta$
is smaller than the standard deviation of the estimated half-length from
any of the three conformal prediction methods.
Theorem~\ref{th:asympt-normal} orders the three pairs as well, since
$\wh\zeta$ attains the smallest asymptotic variance when both working
models are correctly specified. With $M=1{,}000$ observational studies,
two standard deviations can be told apart only when they differ by more
than the Monte Carlo error, which is $s/\{2(M-1)\}^{1/2}$ for a standard
deviation $s$. For $r_2$, the difference between $0.128$ under
$(\eta_1,\eta_2)$ and $0.156$ under $(\eta_1^*,\eta_2)$ is six times
that error and follows the theorem, whereas for $r_1$ and $r_1^*$, the
correctly specified pair differs from $(\eta_1,\eta_2^\star)$ by $0.001$
and $0.002$, each within one Monte Carlo standard error.

As the censoring rate rises from 20--30\% to 70--80\%, the standard
deviation of $\wh\zeta$ from the PRESCCO method under $(\eta_1,\eta_2)$
for $r_1$ increases ninefold, from $0.006$ to $0.054$, whereas the
smallest standard deviation among the three conformal prediction methods
moves from $0.177$ to $0.208$ (Tables~\ref{tab:t1} and~\ref{tab:t5}). The standard
deviation of $\wh\zeta$ is thus thirty times smaller than the smallest
among the three conformal prediction methods under 20--30\% censoring
and four times smaller under 70--80\%. The pattern follows from where
each estimator draws its information. The conformal prediction methods
estimate the target half-length from the residuals alone, and every
patient contributes one residual, regardless of whether or not their
time to the event was observed. The PRESCCO method estimates the target
half-length from the observed-data likelihood, where every patient also
contributes some information. A patient who reached the event
contributes the outcome model at their own time to the event, and a
patient whose time to the event is right-censored contributes an average
of the outcome model over the times above $W$, as
in~\eqref{eq:llhd-structure}. The average carries less information about
$\zeta$ than the value it replaces, so the information available to the
PRESCCO method diminishes as the censoring rate rises. The advantage
narrows under heavy censoring because the PRESCCO method loses what the
conformal prediction methods never used.

The same accounting explains the residual function $r_2$, which we
include for this purpose. Its center $m_2$ treats $W_0$ as though it
were $X_0$, so it uses no time-to-event model, and a patient whose time
to the event is right-censored is compared with patients further along
in the disease course. Two consequences are visible in
Table~\ref{tab:t4}. The estimated half-length is longer under $r_2$ than under $r_1$, by
$1.8\%$ at a censoring rate of 20--30\% and $11.7\%$ at 70--80\%
(Tables~\ref{tab:t1} and~\ref{tab:t5}), since the prediction interval
must cover the outcomes of patients placed at the wrong point in the
disease course.  The advantage
of the PRESCCO method is also smaller under $r_2$, falling from
$29.5$-fold under $r_1$ to $9.2$-fold at 20--30\% censoring and from
$3.9$-fold to $1.3$-fold at 70--80\%, since the distributional
information that $m_1$ carries into the residual is what the PRESCCO
method converts into a less variable estimated half-length.

%%%%%%%%%%%%%%%%%%%%%%%%%%%%%%%%%%%%%%%%%%%%%%%%%%%%%%%%%%%%%%%%%%%%%
\section{Huntington disease data analysis}\label{sec:application}
%%%%%%%%%%%%%%%%%%%%%%%%%%%%%%%%%%%%%%%%%%%%%%%%%%%%%%%%%%%%%%%%%%%%%

Huntington disease is unique in that a genetic test identifies patients
decades before clinical signs appear, since a mutation of at least 40
cytosine-adenine-guanine (CAG) repeats in the huntingtin gene means the
patient is guaranteed to develop the disease. A patient who carries that gene mutation
is monitored for years while still free of clinical signs, and the cognitive
and functional test scores recorded at their clinic visits decline across those
years \citep{paulsen2014prediction, scahill2020biological}. Those scores are
what can prompt care early, but the decline is gradual, and no single score
is out of the ordinary on its own. A score is informative only when it is compared to the
scores of patients at the same point in the disease course, i.e., patients who are as far from their own first clinical signs. This point---the first instance of showing clinical signs---is the event,
which the Huntington Disease Integrated Staging System marks as Stage 2
\citep{tabrizi2022biological}. Two patients three years from Stage 2 are at
the same point in the disease course regardless of how old they are, and the time to Stage 2
is right-censored for most of them, as their clinical signs will not have appeared
before monitoring ends.

Enroll-HD is an international observational study, and its dataset holds
the scores recorded at each patient's visits. Study entry is the origin
from which every time in the dataset is measured. The outcome $Y$ is a
score recorded at study entry, and we analyze four scores
separately---three cognitive test scores (Stroop Color, Stroop Word, and
Stroop Interference) and the composite Unified Huntington Disease Rating
Scale (cUHDRS) score, which combines motor, cognitive, and daily
function into a single score \citep{schobel2017motor}. The time-to-event
covariate $X$ is the time from study entry to Stage 2, so a patient's
own $X$ is how far they are from their first clinical signs when $Y$ is
recorded, and the censoring time $C$ is the time the patient spends in
the study, which is all that is known of $X$ for a patient whose
clinical signs have not yet appeared. Two patients equally far from
Stage 2 can still differ in ways that affect these scores, and the fully
observed covariates $\Z=(Z_1,Z_2,Z_3)$, also recorded at study entry,
account for those differences. The normalized CAG-age-product (CAP)
score $Z_1$ combines the number of CAG repeats with age and rises faster
in patients whose disease advances faster \citep{zhang2011indexing},
while sex $Z_2$ and an International Standard Classification of
Education (ISCED) level of at least 4 $Z_3$ shape test performance apart
from the disease \citep{bull2014pilot, lipsmeier2022remote}. We analyzed
the $n=2{,}809$ patients who carry the gene mutation, had no clinical
signs at study entry, returned for at least one visit after entry, and
are at least 18 years of age. Their time to Stage 2 is right-censored
for $77.2\%$ of them.

Checking whether a prediction interval contains a patient's score
requires that the score take no part in constructing the interval, so we
use $75\%$ of the patients to construct the prediction intervals and
hold out the remaining $25\%$ as a test set. The test set patients are
new patients for this purpose, and split conformal prediction divides
the $75\%$ again into a training set and a calibration set. We apply the
PRESCCO method and the three conformal prediction methods at the
prediction level $1-\alpha=0.9$, taking the three centers $m_1$, $m_2$,
and $m_1^*$ and their respective residual functions $r_1$, $r_2$, and
$r_1^*$ from Section~\ref{sec:sim}. All four methods use the outcome
model $Y\mid X,\Z\sim\Normal\{(1,X,\Z\trans,X\Z\trans)\bb,\sigma^2\}$,
whose interactions let the score decline at a rate that depends on a
patient's covariates. The PRESCCO method additionally uses $\eta_1$ and
$\eta_2$, obtained by fitting
$X\mid\Z\sim\TruncNormal\{(1,\Z\trans)\balpha_1,\tau_1^2;0,8\}$ and
$C\mid\Z\sim\TruncNormal\{(1,\Z\trans)\balpha_2,\tau_2^2;0,8\}$ to those
$75\%$ by maximum likelihood. Because no model can be verified as
correct on real data, we also specify
$\eta_1^*=\TruncNormal(4,2^2;0,8)$ and
$\eta_2^\star=\TruncNormal(4,2^2;0,8)$, in which neither the time to
Stage 2 nor the time a patient spends in the study depends on their
covariates. We then apply the PRESCCO method under the pairs
$(\eta_1,\eta_2)$, $(\eta_1^*,\eta_2)$, and $(\eta_1,\eta_2^\star)$,
each of which leaves one of the fitted models in place, as
Theorem~\ref{th:doubly-robust} requires.

Only one coverage rate is available here. The Enroll-HD patients are one
sample from the population, so each method produces one estimated
half-length and one $\wh\bb$, and the coverage rate of
Section~\ref{sec:theory-length} is the probability that the prediction
interval built from them contains a new patient's score. The $N=702$
patients in the test set estimate that probability by the fraction $\wh p$
of their scores that fall inside their own prediction intervals, and since
each score falls inside or outside independently of the others, $\wh p$ is a
binomial proportion, for which
$\wh p\pm z_{1-\gamma/2}\{\wh p(1-\wh p)/N\}^{1/2}$ contains the coverage
rate with probability $1-\gamma$.  The choice of $\gamma$ sets how
demanding the comparison is. A $95\%$ band is wide enough to contain $0.9$
even when the coverage rate is far from it, so every method passes and none
is distinguished; a narrower band separates them. We report the $75\%$ band
and give the $70\%$ through $95\%$ bands in
Tables~\ref{tab:t6r1}--\ref{tab:t6r1star} of
Section~\ref{sec:add-real-data}, where the conclusions are the same. When the $75\%$ band contains $0.9$, the coverage rate is close to the
prediction level; when it excludes $0.9$, the coverage rate deviates from
the prediction level, above it or below it.

\begin{table}[!htbp]
\caption{Estimated half-length of the prediction interval, coverage rate,
and $75\%$ band for four scores recorded at study entry, based on the
Enroll-HD dataset. SC, Stroop Color; SW, Stroop Word; SI, Stroop
Interference; cUHDRS, composite Unified Huntington Disease Rating Scale.
PRESCCO $(\eta_1,\eta_2)$, PRESCCO method with both working models fitted
to the data; $\eta_1^*$, misspecified time-to-event model;
$\eta_2^\star$, misspecified censoring model; CP, conformal prediction.
Coverage rates whose $75\%$ band does not contain $0.9$ are marked in
red.}
\label{tab:t6_rowwise}
\centering
\small
\setlength{\tabcolsep}{3pt}
\begin{tabular}{cc|ccc|ccc}
  &  & \multicolumn{3}{c|}{SC} & \multicolumn{3}{c}{SW} \\
\cline{3-8}
$r$ & method
& \shortstack{estimated\\half-\\length} & \shortstack{coverage\\rate} & \shortstack{$75\%$\\band}
& \shortstack{estimated\\half-\\length} & \shortstack{coverage\\rate} & \shortstack{$75\%$\\band}\\
\hline
\multirow{6}{*}{$r_1$}
& PRESCCO $(\eta_1,\eta_2)$
& 21.904 & 0.902 & [0.889,0.915]
& 27.056 & 0.899 & [0.886,0.912] \\
& PRESCCO $(\eta_1^*,\eta_2)$
& 21.894 & 0.899 & [0.886,0.912]
& 27.085 & 0.902 & [0.889,0.915] \\
& PRESCCO $(\eta_1,\eta_2^\star)$
& 21.604 & 0.902 & [0.889,0.915]
& 26.813 & 0.900 & [0.887,0.913] \\
& Split CP
& 21.299 & 0.891 & [0.878,0.905]
& 25.539 & {\red 0.880} & {\red [0.866,0.894]} \\
& Full CP
& 23.635 & 0.889 & [0.875,0.902]
& 27.770 & {\red 0.886} & {\red [0.872,0.900]} \\
& Jackknife+
& 21.138 & {\red 0.848} & {\red [0.832,0.864]}
& 26.915 & {\red 0.873} & {\red [0.858,0.887]} \\
\hline
\multirow{6}{*}{$r_2$}
& PRESCCO $(\eta_1,\eta_2)$
& 22.500 & 0.903 & [0.890,0.916]
& 27.840 & 0.900 & [0.887,0.913] \\
& PRESCCO $(\eta_1^*,\eta_2)$
& 22.479 & 0.900 & [0.887,0.913]
& 27.884 & 0.896 & [0.883,0.909] \\
& PRESCCO $(\eta_1,\eta_2^\star)$
& 21.847 & 0.899 & [0.886,0.912]
& 27.111 & 0.897 & [0.884,0.910] \\
& Split CP
& 21.939 & {\red 0.883} & {\red [0.869,0.897]}
& 26.356 & {\red 0.880} & {\red [0.866,0.894]} \\
& Full CP
& 24.407 & 0.890 & [0.876,0.904]
& 28.896 & 0.890 & [0.876,0.904] \\
& Jackknife+
& 21.271 & {\red 0.839} & {\red [0.823,0.855]}
& 27.579 & {\red 0.877} & {\red [0.863,0.891]} \\
\hline
\multirow{6}{*}{$r_1^*$}
& PRESCCO $(\eta_1,\eta_2)$
& 21.727 & 0.903 & [0.890,0.916]
& 26.881 & 0.899 & [0.886,0.912] \\
& PRESCCO $(\eta_1^*,\eta_2)$
& 21.756 & 0.903 & [0.890,0.916]
& 26.852 & 0.900 & [0.887,0.913] \\
& PRESCCO $(\eta_1,\eta_2^\star)$
& 21.606 & 0.903 & [0.890,0.916]
& 26.784 & 0.899 & [0.886,0.912] \\
& Split CP
& 21.109 & 0.891 & [0.878,0.905]
& 25.451 & {\red 0.878} & {\red [0.864,0.893]} \\
& Full CP
& 23.270 & 0.887 & [0.873,0.901]
& 27.124 & {\red 0.886} & {\red [0.872,0.900]} \\
& Jackknife+
& 20.875 & {\red 0.854} & {\red [0.838,0.869]}
& 27.038 & {\red 0.883} & {\red [0.869,0.897]} \\
\hline
\end{tabular}
\\[1em]
\begin{tabular}{cc|ccc|ccc}
  &  & \multicolumn{3}{c|}{SI} & \multicolumn{3}{c}{cUHDRS} \\
\cline{3-8}
$r$ & method
& \shortstack{estimated\\half-\\length} & \shortstack{coverage\\rate} & \shortstack{$75\%$\\band}
& \shortstack{estimated\\half-\\length} & \shortstack{coverage\\rate} & \shortstack{$75\%$\\band} \\
\hline
\multirow{6}{*}{$r_1$}
& PRESCCO $(\eta_1,\eta_2)$
& 16.684 & 0.901 & [0.888,0.914]
& 2.333  & 0.896 & [0.883,0.909] \\
& PRESCCO $(\eta_1^*,\eta_2)$
& 16.657 & 0.901 & [0.888,0.914]
& 2.342  & 0.896 & [0.883,0.909] \\
& PRESCCO $(\eta_1,\eta_2^\star)$
& 16.380 & 0.902 & [0.889,0.915]
& 2.266  & 0.895 & [0.881,0.908] \\
& Split CP
& 16.020 & 0.890 & [0.876,0.904]
& 2.307  & 0.890 & [0.877,0.904] \\
& Full CP
& 16.327 & 0.887 & [0.873,0.901]
& 3.128  & {\red 0.917} & {\red [0.905,0.929]} \\
& Jackknife+
& 16.307 & {\red 0.882} & {\red [0.868,0.896]}
& 2.303  & {\red 0.755} & {\red [0.737,0.774]} \\
\hline
\multirow{6}{*}{$r_2$}
& PRESCCO $(\eta_1,\eta_2)$
& 17.274 & 0.905 & [0.893,0.918]
& 2.476  & 0.909 & [0.897,0.922] \\
& PRESCCO $(\eta_1^*,\eta_2)$
& 17.267 & 0.905 & [0.893,0.918]
& 2.487  & 0.909 & [0.897,0.922] \\
& PRESCCO $(\eta_1,\eta_2^\star)$
& 16.680 & 0.899 & [0.886,0.912]
& 2.312  & 0.889 & [0.875,0.902] \\
& Split CP
& 16.317 & 0.893 & [0.880,0.907]
& 2.386  & 0.893 & [0.880,0.907] \\
& Full CP
& 17.486 & 0.901 & [0.888,0.914]
& 3.191  & {\red 0.915} & {\red [0.903,0.927]} \\
& Jackknife+
& 16.324 & {\red 0.869} & {\red [0.854,0.883]}
& 2.376  & {\red 0.748} & {\red [0.729,0.767]} \\
\hline
\multirow{6}{*}{$r_1^*$}
& PRESCCO $(\eta_1,\eta_2)$
& 16.501 & 0.904 & [0.891,0.917]
& 2.291  & 0.895 & [0.881,0.908] \\
& PRESCCO $(\eta_1^*,\eta_2)$
& 16.493 & 0.902 & [0.889,0.915]
& 2.296  & 0.892 & [0.878,0.905] \\
& PRESCCO $(\eta_1,\eta_2^\star)$
& 16.367 & 0.902 & [0.889,0.915]
& 2.259  & 0.895 & [0.881,0.908] \\
& Split CP
& 16.055 & 0.898 & [0.885,0.911]
& 2.285  & 0.892 & [0.878,0.905] \\
& Full CP
& 16.124 & {\red 0.881} & {\red [0.867,0.895]}
& 3.097  & {\red 0.917} & {\red [0.905,0.929]} \\
& Jackknife+
& 16.368 & {\red 0.881} & {\red [0.867,0.895]}
& 2.270  & {\red 0.761} & {\red [0.743,0.780]} \\
\hline
\end{tabular}
\end{table}

All thirty-six prediction intervals the PRESCCO method produces have a
coverage rate close to the prediction level (Table~\ref{tab:t6_rowwise}).
Every coverage rate lies between $0.889$ and $0.909$ against a prediction
level of $0.9$, and every $75\%$ band contains $0.9$. The pair of working
models a researcher chose barely shows in the estimated half-length. For the Stroop Color test score under $r_1$ the three pairs give $21.904$,
$21.894$, and $21.604$, a spread of $1.4\%$, and eleven of the twelve
combinations of residual function and score spread by less than $3.6\%$; the
exception is the cUHDRS score under $r_2$ at $7.6\%$, the residual function whose
center uses no time-to-event model. Deliberately discarding the covariates
from $\eta_1$, or from $\eta_2$, therefore moves the prediction interval a
patient receives by a fraction of a point on a scale where the interval
itself runs more than twenty. Double robustness matters most here, where the
two working models cannot be checked against a truth the Enroll-HD dataset
does not supply, and the estimated half-length scarcely depends on which of
them was specified well.

The three conformal prediction methods each miss the prediction level,
and they miss it in different ways. Jackknife+ deviates below in all
twelve of its prediction intervals, and performs worst on the cUHDRS
score, where its coverage rate is $0.748$ to $0.761$ rather than $0.9$,
so about one patient in four has a score outside a prediction interval
that promised one in ten. Split conformal prediction deviates below in
four of its twelve prediction intervals, on the Stroop Word score under
every residual function and on the Stroop Color score under $r_2$, with
coverage rates of $0.878$ to $0.883$. On the other eight it holds a
coverage rate close to the prediction level. Full conformal prediction
deviates in both directions, below on the Stroop Interference score
under $r_1^*$ and above on the cUHDRS score under every residual
function, where its coverage rate reaches $0.915$ to $0.917$ and its
estimated half-length runs $25$ to $41\%$ longer than the PRESCCO
method's, $3.097$ to $3.191$ compared to $2.259$ to $2.487$.

A coverage rate below the prediction level leaves scores outside a
prediction interval that should contain them, as jackknife+ does on the
cUHDRS score. A coverage rate above the prediction level does the
reverse, so under full conformal prediction, a cUHDRS score has to sit a
quarter to two-fifths further from the center of the prediction interval
before falling outside it. The PRESCCO method makes neither error,
holding a coverage rate close to the prediction level in all thirty-six
prediction intervals---three pairs of working models by three residual
functions by four scores, at a censoring rate of $77.2\%$. A score
outside such a prediction interval is outside because it differs from
the scores of patients at the same point in the disease course, not
because of which patients the study enrolled, and that is what a
clinician needs before deciding a patient warrants closer monitoring.

%%%%%%%%%%%%%%%%%%%%%%%%%%%%%%%%%%%%%%%%%%%%%%%
\section{Discussion}\label{sec:discussion}
%%%%%%%%%%%%%%%%%%%%%%%%%%%%%%%%%%%%%%%%%%%%%%%
Until now, no method had computed a prediction interval as a function of the observed covariates in the
right-censored covariate setting. We adapted three conformal prediction
methods to the setting, each of which estimates the target half-length from
the residuals computed for the patients an observational study followed.
We then developed the PRESCCO method, which estimates the target half-length as
a parameter of the observed-data likelihood instead. Prediction
intervals are usually judged by the expected coverage rate alone, which
cannot separate two methods whose prediction intervals both contain the
outcome at the prediction level, so a method producing a needlessly long
prediction interval is not counted as worse. Estimating the target
half-length using the likelihood separates them, and the PRESCCO method makes it
one. The target half-length then has an efficient influence function, so
there is a smallest variance, and the estimator the PRESCCO method
produces attains it. That estimator has a limiting distribution, so a
researcher can attach a confidence interval to an estimated half-length
and test whether two estimated half-lengths differ. It is also doubly
robust, inheriting that property from the structure of the influence
function rather than from any property of the residuals.

The prediction interval we construct has the same length for every
patient, which is the simplest form the target half-length can take, but
not the only one. Letting the target half-length depend on $\Delta$ or
on components of $\Z$ would tailor the prediction interval to patient
subgroups, and the efficient influence function and coverage results
developed here apply to a target half-length of that form as readily as
to a single number. The same is true of the center, which we fix before
estimating the target half-length and choose from three natural
candidates. A center selected to minimize the target half-length would
give the shortest prediction interval at the same coverage, and the
efficient influence function would be derived the same way. Right-censoring
itself is one instance of a wider structure. It enters the observed-data
likelihood through the two integrals of \eqref{eq:llhd-structure}, and
wherever a mechanism for missing data enters that likelihood the same
way, a half-length becomes a parameter of the likelihood to which semiparametric theory can be applied,
with an efficient influence function derived as it is here. One requirement does not relax: the center must be fixed, because it determines the residual function and hence the target half-length, the parameter to be estimated. Changing $\bb$ changes the center and therefore defines a different target half-length, so we require an estimator $\wh\bb$ that converges to $\bb$.

The right-censored covariate setting arises wherever a covariate is a
time to an event that many patients have not reached, and it is the
patients earliest in their disease course whose time to the event is
most often right-censored, since a patient is more likely to leave the
study before an event that is still years away. Enroll-HD has a
censoring rate of $77.2\%$, and among those patients the PRESCCO method
holds a coverage rate close to the prediction level while the conformal
prediction methods do not. A clinician seeing a patient years from their
first clinical signs can now ask whether that patient's score differs
from the scores of others at the same point in the disease course, and
get an answer that does not depend on which patients happened to enroll.

\section*{Supplementary Material}

Derivations for all theoretical claims and additional numerical results are provided in the supplement. The \texttt{R} package \texttt{prescco} is available at \url{https://github.com/khnhan/prescco}.

\bibliographystyle{abbrvnat}
\bibliography{predref}

\clearpage
\newpage

\appendix

\begin{center}
{\Large\bf Supplementary Material to ``PRESCCO: Efficient Prediction Intervals under a
Right-Censored Covariate''}
\end{center}

\counterwithin{equation}{section}
\counterwithin{table}{section}
\counterwithin{figure}{section}
\renewcommand{\theHequation}{\thesection.\arabic{equation}}
\renewcommand{\theHtable}{\thesection.\arabic{table}}
\renewcommand{\theHfigure}{\thesection.\arabic{figure}}
\setcounter{Pro}{0}\renewcommand{\thePro}{\thesection.\arabic{Pro}}
\setcounter{Th}{0}\renewcommand{\theTh}{\thesection.\arabic{Th}}
\setcounter{Lem}{0}\renewcommand{\theLem}{\thesection.\arabic{Lem}}
\setcounter{Rem}{0}\renewcommand{\theRem}{\thesection.\arabic{Rem}}
\setcounter{Cor}{0}\renewcommand{\theCor}{\thesection.\arabic{Cor}}

\section{Proofs and technical derivations}
%%%%%%%%%%%%%%%%%%%%%
\subsection{Proof of Theorem~\ref{th:scp}}\label{sec:scp}

\begin{proof}
The result follows from the standard asymptotic normality of sample
quantiles. Conditional on $\wh\bb$, the calibration-set residuals
$r(\bO_{n_1+1},\wh\bb),\ldots,r(\bO_n,\wh\bb)$ are independent draws
from the distribution of $r(\bO_0,\wh\bb)$, and $\zeta\SCP$ is the
corresponding $(1-\alpha)$ population quantile.
\end{proof}

%%%%%%%%%%%%%%%%%%%%%
\subsection{Proof of Proposition \ref{pro:1}}\label{sec:pro1}
%%%%%%%%%%%%%%%%%%%%%
%%%%%%%%%%%%%%%%%%%%%
\begin{proof}
Since $f_{Y|X,\Z}$ is parameterized by $\bb$, the tangent space
associated with $\bb$ is $\Lambda_{\bb}=\{\bfe\trans
\S_\bb(y,w,\delta,\z) \}$. Following \cite{lee2024robust}, 
the tangent space associated with $\eta_1$, $\eta_2$, and
$\eta_3$ is $\Lambda_1\oplus\Lambda_2\oplus\Lambda_3$. Hence
$\calT=\Lambda_\bb+\Lambda_1\oplus\Lambda_2\oplus\Lambda_3$.
Since
\bse
&&E\{\ba (X,\z) \mid \z\}\\
&=& E\left(E\{I(X\le C)\mid X,\z\} \ba (X,\bz) +
E\left[I(X>C)\frac{E\{I(X>C)\ba(X,\bz)
      \mid C,Y,\z\}}{E\{I(X>C)\mid C,Y,\z\}} \mid X,\bz\right] \mid \z \right) \\ 
&=& E\left(E\left[I(X>C)\frac{E\{I(X>C)\S_\bb^F(Y,X,\bz)
      \mid C,Y,\z\}}{E\{I(X>C)\mid C,Y,\z\}}\mid X,\bz\right]\mid \z\right)\\
&=& E\left[I(X>C)\frac{E\{I(X>C)\S_\bb^F(Y,X,\bz)
      \mid C,Y,\z\}}{E\{I(X>C)\mid C,Y,\z\}}\mid \bz\right]\\
&=&E[I(X>C)E\{\S_\bb^F(Y,X,\bz)\mid C,X,\z\}
      \mid \z]\\
&=&\0,
\ese
we have that
$\wt\S_\bb(y,w,\delta,\z)-\S_\bb(y,w,\delta,\z)\in\Lambda_1\subset
\Lambda_1\oplus\Lambda_2\oplus\Lambda_3$. Meanwhile, $\S_\bb(y,w,\delta,\z)\perp \Lambda_2$ since for any $a_2 (c,\bz)$ with $E \left\{a_2 (C,\bz) \mid \bz\right\} = 0$,
\bse
&&E\left(\S_\bb(Y,W,\Delta,\Z)\left[\Delta \frac{E\{I(C \ge W) a_2
(C,\Z)\mid W,\Z\}}{E\{I(C \ge W)\mid W,\Z\}} + (1-\Delta)a_2 (W,\Z)\right]\right)\\
&=& E\left[\Delta \S_\bb^F(Y,X,\bZ)\frac{E\{I(C \ge X) a_2
(C,\Z)\mid X,\Z\}}{E\{I(C \ge X)\mid X,\Z\}}\right.\\
&&\left.+ (1-\Delta)\frac{E\{I(X>C)\S_\bb^F(Y,X,\bZ)\mid
  C,Y,\Z\}}{E\{I(X>C)\mid C,Y,\Z\}}a_2 (C,\Z)\right] \\
&=& E\{I(X \le C)\S_\bb^F(Y,X,\bZ) a_2
(C,\Z)+ I(X>C)\S_\bb^F(Y,X,\bZ)a_2 (C,\Z)\} \\
&=& E\{\S_\bb^F(Y,X,\bZ) a_2
(C,\Z)\} \\
&=& E[E\{\S_\bb^F(Y,X,\bZ) \mid X,C,\Z\} a_2
(C,\Z)] \\
&=& \0.
\ese 
Also, $\S_\bb(y,w,\delta,\z)\perp \Lambda_3$ since for any $a_3 (\bz)$ with $E \left\{a_3 (\bZ)\right\} = 0$,
\bse
&&E\{\S_\bb(Y,W,\Delta,\Z)a_3
(\Z)\}\\
&=& E\left(E\left[\Delta \S_\bb^F(Y,X,\bZ)+ (1-\Delta)\frac{E\{I(X>C)\S_\bb^F(Y,X,\bZ)\mid
  C,Y,\Z\}}{E\{I(X>C)\mid C,Y,\Z\}}\mid \Z\right]a_3 (\Z)\right) \\
&=& E[E\{I(X \le C)\S_\bb^F(Y,X,\bZ)+ I(X>C)\S_\bb^F(Y,X,\bZ)\mid \Z\}a_3 (\Z)] \\
&=& E[E\{\S_\bb^F(Y,X,\bZ) \mid X,\Z\} a_3(\Z)] \\
&=& \0.
\ese
Thus, $\S_\bb(y,w,\delta,\z)\perp (\Lambda_2 \oplus \Lambda_3)$. Combining this with $\wt\S_\bb(y,w,\delta,\z)-\S_\bb(y,w,\delta,\z)\in\Lambda_1\perp(\Lambda_2 \oplus \Lambda_3)$, we get 
$\wt\S_\bb(y,w,\delta,\z)\perp
(\Lambda_2\oplus\Lambda_3)$.  Furthermore, note that 
\be
&&E\{\wt\S_\bb(Y,W,\Delta,\z)\mid x,\z\} \n\\
&=& E\left(I(x\le C) \{\S_\bb^F(Y,x,\bz) - \ba(x,\z)\}\right.\n\\
&&\left.+ I(x>C)\frac{E[I(X>C)\{\S_\bb^F(Y,X,\bz)-\ba(X,\z)\}\mid C,Y,\z]}{E\{I(X>C)\mid C,Y,\z\}}\mid x,\z\right)\n\\
&=& E\{I(x\le C) \S_\bb^F(Y,x,\bz)\mid x,\z\}\n\\
&=& E\{I(x\le C) \mid x,\z \}E\{\S_\bb^F(Y,x,\bz)\mid x,\z\}\n\\
&=&\0, \label{eq:e11}
\ee
where the second equality holds by \eqref{eq:e10}, and the fourth equality holds since $Y\indep C \mid X,\Z$.
Then
$\wt\S_\bb(y,w,\delta,\z)\perp
\Lambda_1$ since for any $a_1 (x,\bz)$ with $E \{a_1 (X,\bz)\mid\bz\} = 0$, 
\bse
&& E\left(\wt\S_\bb(Y,W,\Delta,\Z)
  \left[\Delta a_1(X,\Z) + (1-\Delta) \frac{E\{I(X> C)
    a_1 (X,\Z)\mid C,Y,\Z\}}{E\{I(X > C)\mid C, Y,\Z\}} \right]\right)\\
&=& E\left[\wt\S_\bb(Y,X,1,\Z)\Delta a_1(X,\Z)
+I(X>C)\wt\S_\bb(Y,C,0,\Z)\frac{E\{I(X> C)
    a_1 (X,\Z)\mid C, Y,\Z\}}{E\{I(X > C)\mid C, Y,\Z\}} \right]\\
&=& E\{\wt\S_\bb(Y,X,1,\Z)\Delta a_1(X,\Z)
+\wt\S_\bb(Y,C,0,\Z)I(X> C)
    a_1 (X,\Z) \}\\
&=&E\{\wt\S_\bb(Y,W,\Delta,\Z)a_1
(X,\Z)\}\\
&=&E[E\{\wt\S_\bb(Y,W,\Delta,\Z)\mid X,\Z\}a_1
(X,\Z)]\\
&=&  \0,
\ese
where the last equality holds by \eqref{eq:e11}.

As a result, we obtain that  $\wt\S_\bb(y,w,\delta,\z)\perp
\Lambda_1\oplus\Lambda_2\oplus\Lambda_3$.
Hence 
$\calT
=\wt \Lambda_\bb\oplus\Lambda_1\oplus\Lambda_2\oplus\Lambda_3$.
\end{proof}

%%%%%%%%%%%%%%%%%%%%%
%%%%%%%%%%%%%%%%%%%%%
\subsection{Proof of Proposition \ref{pro:2}}\label{sec:pro2}
\begin{proof}
Let
$\eta_1(x,\z,\bg_1)$, $\eta_2(c,\z,\bg_2)$, and $\eta_3(\z,\bg_3)$
be arbitrary parametric submodels for $\eta_1(x,\z)$, $\eta_2(c,\z)$,
and $\eta_3(\z)$, respectively. Let $\zeta=\zeta(\bb,\bg_1,\bg_2,\bg_3)$.
Let $\S_{\bb}(\bo)= \S_\bb(y,w,\delta,\bz,\bb)$, $\wt\S_{\bb}(\bo)= \wt\S_\bb(y,w,\delta,\bz,\bb)$, and $\S_{\bg_j}(\bo)= \partial\log\{f_{Y,W,\Delta,\bZ}(y,w,\delta,\bz)\} /
\partial\bg_j$ for $j=1,2,3$.
Then it suffices to prove that $\phi\eff \in \calT$ and
${\partial\zeta(\bb,\bg_1,\bg_2,\bg_3)}/{\partial\bb} =
E\{\phi\eff(\bO,\zeta)\S_\bb(\bO)\}$ and
${\partial\zeta(\bb,\bg_1,\bg_2,\bg_3)}/{\partial\bg_j}= E\{\phi\eff(\bO,\zeta)\S_{\bg_j}(\bO)\}$
for $j=1, 2, 3$.

Note that $\phi_\bb\in\wt\Lambda_\bb$
and $\phi_j\in\Lambda_j$ for $j=1, 2, 3$, hence
$\phi\eff\in\calT$.
Also, note that $\zeta$ is defined as
\bse
E[I\{r(\bO,\bb) \le \zeta\}]=\pr\{r(\bO,\bb) \le \zeta\} = 1-\alpha.
\ese
Taking the derivative with respect to $\bb$, we obtain
\bse
\0 &=& \frac{\partial E[I\{r(\bO,\bb) \le \zeta\}]}{\partial \bb}\\
 &=& E[\d\{\zeta - r(\bO,\bb)\}\{\frac{\partial \zeta}{\partial \bb} - \frac{\partial r(\bO,\bb)}{\partial \bb}\}] + E[I\{r(\bO,\bb) \le \zeta\}\S_\bb(\bO)],
\ese
which leads to
\bse
\frac{\partial \zeta}{\partial \bb}
 &=&  \frac{E[\d\{\zeta - r(\bO,\bb)\}\partial r(\bO,\bb)/\partial \bb - I\{r(\bO,\bb) \le \zeta\}\S_\bb(\bO)]}{E[\d\{\zeta - r(\bO,\bb)\}]}.
\ese
Similarly, for $\bg_1$, $\bg_2$, and $\bg_3$, we have
\bse
\frac{\partial \zeta}{\partial \bg_j}
 = \frac{E[\d\{\zeta - r(\bO,\bb)\}\partial r(\bO,\bb)/\partial \bg_j -
   I\{r(\bO,\bb) \le \zeta\}\S_{\bg_j}(\bO)]}{E[\d\{\zeta - r(\bO,\bb)\}]}=-\frac{E[
   I\{r(\bO,\bb) \le \zeta\}\S_{\bg_j}(\bO)]}{E[\d\{\zeta - r(\bO,\bb)\}]}. 
\ese

First, let $\S_{\bg_1}^F(x,\z) = \partial\log\{f_{X|\bZ}(x,\bz,\bg_1)\}/\partial\bg_1$. Then
\bse
\S_{\bg_1}(\bo) = \delta \S_{\bg_1}^F(w,\bz) + (1-\delta) \frac{E\{I(X> w)
    \S_{\bg_1}^F(X,\z)\mid y,\z\}}{E\{I(X > w)\mid y,\z\}}.
\ese
Since $\S_{\bg_1}(\bo) \in \Lambda_1$, we have that 
\bse
&&E\{\phi\eff(\bO,\zeta)\S_{\bg_1}(\bO)\} - \frac{\partial \zeta}{\partial \bg_1}\\
&=&E\{\phi_1(\bO,\zeta)\S_{\bg_1}(\bO)\} - \frac{\partial \zeta}{\partial \bg_1}\\
&=& E\left\{\left(\frac{I\{r(\bO,\bb) \le \zeta\}}{E[\d\{\zeta - r(\bO,\bb)\}]} +\Delta a_1(W,\bZ) + (1-\Delta) \frac{E\{I(X> W)
    a_1(X,\bZ)\mid W,Y,\Z\}}{E\{I(X > W)\mid W,Y,\Z\}}\right)\right.\\
&&\left.\times[\Delta \S_{\bg_1}^F(W,\bZ) + (1-\Delta) \frac{E\{I(X> W)
    \S_{\bg_1}^F(X,\Z)\mid W,Y,\Z\}}{E\{I(X > W)\mid W,Y,\Z\}}]\right\}.
\ese
  
For any function $h(c,y,\z)$, we know that
\bse
&&E[(1-\Delta) h(W,Y,\Z) \frac{E\{I(X> W)
    \S_{\bg_1}^F(X,\Z)\mid W,Y,\Z\}}{E\{I(X > W)\mid W,Y,\Z\}}]\\
&=& E[ h(C,Y,\Z) E\{I(X> C)
    \S_{\bg_1}^F(X,\Z)\mid C,Y,\Z\}]\\
&=& E\{I(X> C) h(C,Y,\Z) 
    \S_{\bg_1}^F(X,\Z)\}\\
&=& E\{(1-\Delta) h(W,Y,\Z)
    \S_{\bg_1}^F(X,\Z)\},
\ese
so we get
\bse
&&E\{\phi\eff(\bO,\zeta)\S_{\bg_1}(\bO)\} - \frac{\partial \zeta}{\partial \bg_1}\\
&&= E\left\{\left(\frac{I\{r(\bO,\bb) \le \zeta\}}{E[\d\{\zeta - r(\bO,\bb)\}]} + \Delta a_1(W,\bZ) + (1-\Delta) \frac{E\{I(X> W)
    a_1(X,\bZ)\mid W,Y,\Z\}}{E\{I(X > W)\mid W,Y,\Z\}}\right)\S_{\bg_1}^F(X,\bZ)\right\}.
\ese
Using the definition of $a_1(X,\Z)$, we further obtain
\bse
&& E\{\phi\eff(\bO,\zeta)\S_{\bg_1}(\bO)\}  - \frac{\partial \zeta}{\partial \bg_1}\\
&&= E\left\{\left(\frac{I\{r(\bO,\bb) \le \zeta\}}{E[\d\{\zeta - r(\bO,\bb)\}]} - \frac{E[I\{r(\bO,\bb) \le \zeta\}\mid X,\Z] - E[I\{r(\bO,\bb) \le \zeta\}\mid \Z]}{E[\d\{\zeta - r(\bO,\bb)\}]}\right)\S_{\bg_1}^F(X,\bZ)\right\}\\
&&= E\left(\frac{E[I\{r(\bO,\bb) \le \zeta\}\mid \Z]}{E[\d\{\zeta - r(\bO,\bb)\}]}\S_{\bg_1}^F(X,\bZ)\right)\\
&&= E\left(\frac{E[I\{r(\bO,\bb) \le \zeta\}\mid \Z]}{E[\d\{\zeta - r(\bO,\bb)\}]}E\{\S_{\bg_1}^F(X,\bZ)\mid \Z\}\right)\\
&&=\0,
\ese
 i.e., ${\partial \zeta}/{\partial \bg_1}=E\{\phi\eff(\bO,\zeta)\S_{\bg_1}(\bO)\}$.

Second, let $\S_{\bg_2}^F(c,\z) = \partial\log\{f_{C|\bZ}(c,\bz,\bg_2)\}/\partial \bg_2$. Then
\bse
\S_{\bg_2}(\bo) = \delta \frac{E\{I(C \ge w) \S_{\bg_2}^F
    (C,\z)\mid \z\}}{E\{I(C \ge w)\mid \z\}} + (1-\delta)\S_{\bg_2}^F (w,\z).
\ese
Since $\S_{\bg_2}(\bo) \in \Lambda_2$, given
  $C\indep Y\mid X,\Z$, we have 
\bse
&&E\{\phi\eff(\bO,\zeta)\S_{\bg_2}(\bO)\} - \frac{\partial \zeta}{\partial \bg_2}\\
&=&E\{\phi_2(\bO,\zeta)\S_{\bg_2}(\bO)\} - \frac{\partial \zeta}{\partial \bg_2}\\
&=& E\left\{\left(\frac{I\{r(\bO,\bb) \le \zeta\}}{E[\d\{\zeta - r(\bO,\bb)\}]} + \Delta \frac{E\{I(C \ge W) a_2
    (C,\Z)\mid W,\Z\}}{E\{I(C \ge W)\mid W,\Z\}}+ (1-\Delta)a_2
    (W,\Z)\right)\right.\\
    &&\times \left. [\Delta \frac{E\{I(C \ge W) \S_{\bg_2}^F (C,\Z)\mid W,\Z\}}{E\{I(C \ge W)\mid W,\Z\}} + (1-\Delta)\S_{\bg_2}^F (W,\Z)] \right\}.
\ese
For any function $h(x,y,\z)$, we know that
\bse
&&E[\Delta h(W,Y,\Z) \frac{E\{I(C \ge W) \S_{\bg_2}^F (C,\Z)\mid W,\Z\}}{E\{I(C \ge W)\mid W,\Z\}}]\\
&=&E[E\{I(C \ge X)\mid X,Y,\Z\} h(X,Y,\Z) \frac{E\{I(C \ge X) \S_{\bg_2}^F (C,\Z)\mid X,\Z\}}{E\{I(C \ge X)\mid X,\Z\}}]\\
&=&E[h(X,Y,\Z) E\{I(C \ge X) \S_{\bg_2}^F (C,\Z)\mid X,\Z\}]\\
&=& E\{I(X \le C) h(X,Y,\Z) \S_{\bg_2}^F (C,\Z)\}\\
&=& E\{\Delta h(W,Y,\Z)\S_{\bg_2}^F(C,\Z)\},
\ese
so we get
\bse
&&E\{\phi\eff(\bO,\zeta)\S_{\bg_2}(\bO)\} - \frac{\partial \zeta}{\partial \bg_2}\\
&&= E\left\{\left(\frac{I\{r(\bO,\bb) \le \zeta\}}{E[\d\{\zeta - r(\bO,\bb)\}]} + \Delta \frac{E\{I(C \ge W) a_2
    (C,\Z)\mid W,\Z\}}{E\{I(C \ge W)\mid W,\Z\}}+ (1-\Delta)a_2
    (W,\Z)\right)\S_{\bg_2}^F (C,\Z) \right\}.
\ese
Using the definition of $a_2(c,\z)$, we have
\bse
&&E\{\phi\eff(\bO,\zeta)\S_{\bg_2}(\bO)\} - \frac{\partial \zeta}{\partial \bg_2}\\
&&= E\left\{\left(\frac{I\{r(\bO,\bb) \le \zeta\}}{E[\d\{\zeta - r(\bO,\bb)\}]} - \frac{E[I\{r(\bO,\bb) \le \zeta\}\mid C,\Z] - E[I\{r(\bO,\bb) \le \zeta\}\mid \Z]}{E[\d\{\zeta - r(\bO,\bb)\}]}\right)\S_{\bg_2}^F (C,\Z) \right\}\\
&&= E\left(\frac{E[I\{r(\bO,\bb) \le \zeta\}\mid \Z]}{E[\d\{\zeta - r(\bO,\bb)\}]}\S_{\bg_2}^F (C,\Z) \right)\\
&&= E\left(\frac{E[I\{r(\bO,\bb) \le \zeta\}\mid \Z]}{E[\d\{\zeta - r(\bO,\bb)\}]}E\{\S_{\bg_2}^F (C,\Z) \mid \Z\}\right)\\
&&= \0,
\ese
i.e., ${\partial \zeta}/{\partial \bg_2}=E\{\phi\eff(\bO,\zeta)\S_{\bg_2}(\bO)\}$.

Third, let $\S_{\bg_3}^F(\bz) = \partial\log\{f_{\bZ}(\bz,\bg_3)\}/\partial \bg_3$. Then $\S_{\bg_3}(\bo) = \S_{\bg_3}^F(\bz)$.
Since $\S_{\bg_3}(\bo) \in \Lambda_3$, we have that
\bse
&&E\{\phi\eff(\bO,\zeta)\S_{\bg_3}(\bO)\} - \frac{\partial \zeta}{\partial \bg_3}\\
&&=E\{\phi_3(\bO,\zeta)\S_{\bg_3}(\bO)\} - \frac{\partial \zeta}{\partial \bg_3}\\
&&= E\left\{\left(\frac{I\{r(\bO,\bb) \le \zeta\}}{E[\d\{\zeta - r(\bO,\bb)\}]} + a_3(\Z)\right)\S_{\bg_3}^F(\bZ)\right\}\\
&&= E\left\{\left(\frac{I\{r(\bO,\bb) \le \zeta\}}{E[\d\{\zeta - r(\bO,\bb)\}]} - \frac{E[I\{r(\bO,\bb) \le \zeta\}\mid \Z] - E[I\{r(\bO,\bb) \le \zeta\}]}{E[\d\{\zeta - r(\bO,\bb)\}]}\right)\S_{\bg_3}^F(\bZ)\right\}\\
&&= E\left(\frac{E[I\{r(\bO,\bb) \le \zeta\}]}{E[\d\{\zeta - r(\bO,\bb)\}]}\S_{\bg_3}^F(\bZ)\right)\\
&&= \0,
\ese
i.e., ${\partial \zeta}/{\partial \bg_3}=E\{\phi\eff(\bO,\zeta)\S_{\bg_3}(\bO)\}$.

Lastly, noting that  $\S_\bb(\bo) = \wt\S_\bb(\bo) + \{\S_\bb(\bo) -
\wt\S_\bb(\bo)\} \in \wt\Lambda_\bb \oplus\Lambda_1$,
we have
\bse
&&E\{\phi\eff(\bO,\zeta)\S_{\bb}(\bO)\}\\
&=& E(\{\phi_\bb(\bO,\zeta)+\phi_1(\bO,\zeta)\}[\wt\S_\bb(\bO) +
\{\S_\bb(\bO) - \wt\S_\bb(\bO)\}])\\
&=& E\{\phi_\bb(\bO,\zeta)\wt\S_\bb(\bO) \}
+ E[\phi_1(\bO,\zeta)\{\S_\bb(\bO) - \wt\S_\bb(\bO)\}]\\
&=& E[\{\bfe\trans\wt\S_\bb(Y,W,\Delta,\Z)\}\wt\S_\bb(Y,W,\Delta,\Z) ]\\
&& +
   E\left([\Delta a_1(W,\bZ) + (1-\Delta) \frac{E\{I(X> W)
    a_1 (X,\Z)\mid W,Y,\Z\}}{E\{I(X > W)\mid W,Y,\Z\}}]\right.\\
&&\left.\times[\Delta \ba(W,\bZ) + (1-\Delta) \frac{E\{I(X> W)
       \ba (X,\Z)\mid W,Y,\Z\}}{E\{I(X > W)\mid W,Y,\Z\}}]\right)\\
&=&E\left[\wt\S_\bb(Y,W,\Delta,\Z)^{\otimes2}\bfe +
\Delta a_1(W,\Z)\ba(W,\bZ)\right.\\
&&\left.+ (1-\Delta) \frac{E\{I(X> W)
    a_1(X,\Z)\mid W,Y,\Z\}}{E\{I(X > W)\mid W,Y,\Z\}}\frac{E\{I(X> W)
    \ba (X,\Z)\mid W,Y,\Z\}}{E\{I(X > W)\mid W,Y,\Z\}}\right]\\
&=& E\{\wt\S_\bb(Y,W,\Delta,\Z)^{\otimes2}\}\bfe
+E\left[\Delta a_1 (W,\Z)\ba(X,\bZ)\right.\\
&&\left.+ (1-\Delta) \frac{E\{I(X> W)
    a_1 (X,\Z)\mid W,Y,\Z\}}{E\{I(X > W)\mid W,Y,\Z\}}\ba (X,\Z)\right]\\
&=&
\frac{E[\d\{\zeta - r(\bO,\bb)\}\partial r(\bO,\bb)/\partial \bb]}{E[\d\{\zeta - r(\bO,\bb)\}]}-\frac{E[I\{r(\bO,\bb) \le \zeta\} \wt\S_\bb(\bO)]}{E[\d\{\zeta - r(\bO,\bb)\}]}\\
&&-
E\left(
    \frac{E[I\{r(\bO,\bb) \le \zeta\}\mid X,\Z]-E[I\{r(\bO,\bb) \le
        \zeta\}\mid \Z]}{E[\d\{\zeta -
        r(\bO,\bb)\}]}\ba(X,\bZ)\right)\\
&=&
\frac{E[\d\{\zeta - r(\bO,\bb)\}\partial r(\bO,\bb)/\partial \bb
-I\{r(\bO,\bb) \le \zeta\}\S_\bb(\bO)
]}{E[\d\{\zeta - r(\bO,\bb)\}]}\\
&=&\frac{\partial\zeta}{\partial\bb},
\ese
where the third-to-last equality holds by the definition of $a_1 (x,\z)$
  and the definition of $\bfe$, and the second-to-last
equality uses $E\{\S_\bb(\bO)-\ba(X,\Z)\mid\bO\}=\wt\S_\bb(\bO)$ and $E\{\ba(X,\Z)\mid\Z\}=\0$.
\end{proof}

%%%%%%%%%%%%%%%%%%%%%
\subsection{Derivation for Remark~\ref{rem:no-censoring}}\label{sec:no-censoring}
%%%%%%%%%%%%%%%%%%%%%

We derive the efficient influence function $\phi\eff(\bo,\zeta,\bb)$
and the estimator $\wh\zeta$ of Remark~\ref{rem:no-censoring}, where
no patient's time to the event is right-censored, that is,
$\pr(X\le C\mid \eta_1,\eta_2)=1$. In this case, $\Delta=1$ and $W=X$,
so $m_1(W,\Delta,\Z,\bb)$ and $m_2(W,\Delta,\Z,\bb)$ both reduce to
$m_0(X,\Z,\bb)$.

Since $r(\bO,\bb)=|Y-m_0(X,\Z,\bb)|$ is independent of $C$ given
$\Z$, we have
\bse
 a_2(c,\z) = - \frac{E[I\{|Y-m_0(X,\z,\bb)| \le \zeta\}\mid c,\z,\bb,\eta_1] - E[I\{|Y-m_0(X,\z,\bb)| \le \zeta\}\mid \z,\bb,\eta_1,\eta_2]}{E[\d\{\zeta - r(\bO,\bb)\}\mid\bb,\eta_1,\eta_2]} = 0.
\ese
Also, $a_1(x,\bz)$ simplifies to
\bse
a_1(x,\bz)
&=& - \frac{E[I\{|Y - m_0(x, \z,\bb)|\le \zeta\}\mid x,\z,\bb] - E[I\{|Y - m_0(x, \z,\bb)| \le \zeta\}\mid \z,\bb,\eta_1,\eta_2]}{E[\d\{\zeta - r(\bO,\bb)\}\mid\bb,\eta_1,\eta_2]}.
\ese
Moreover, $\ba(x,\z,\bb)=\0$ and
$\wt\S_\bb(y,w,\delta,\z,\bb)=\S_\bb^F(y,x,\z,\bb)$, which gives
$\phi_\bb(\bo,\zeta,\bb) = \bfe\trans\S_\bb^F(y,x,\z,\bb)$. Then
$\phi\eff(\bo,\zeta,\bb)$ can be written as
\be \label{eq:phieff-nocensoring}
\qquad\phi\eff(\bo,\zeta,\bb)=
\bfe\trans\S_\bb^F(y,x,\z,\bb) - \frac{E[I\{|Y - m_0(x, \z,\bb)| \le \zeta\}\mid x,\z,\bb] - (1-\alpha)}{E[\d\{\zeta - r(\bO,\bb)\}\mid\bb,\eta_1,\eta_2]}.
\ee

Suppose that we choose $\wh\bb$ to be the maximum likelihood estimator
for $\bb$, that is, the solution of
$\sumi\S_\bb^F(Y_i,X_i,\Z_i,\bb)=\0$. Then $\wh\zeta$ solves
\bse
n^{-1}\sumi E[I\{|Y_i - m_0(X_i, \Z_i,\wh\bb)|\le \zeta\}\mid X_i,\Z_i,\wh\bb]=1-\alpha,
\ese
which is \eqref{eq:no-cens-ee}. If the outcome model satisfies
$Y = m_0(X, \Z,\bb)+\epsilon$, where the distribution of $\epsilon$
does not depend on $\bb$, then the estimating equation becomes
\bse
n^{-1}\sumi \pr(|\epsilon | \le \zeta\mid X_i,\Z_i)=1-\alpha,
\ese
which depends only on the covariates $X_i$ and $\Z_i$. If in addition
the conditional distribution of $\epsilon$ given $(X,\Z)$ does not
depend on $(X,\Z)$, then the estimating equation becomes
$\pr(|\epsilon | \le \zeta)=1-\alpha$, so $\wh\zeta$ is the
$(1-\alpha)$ quantile of $|\epsilon|$, a nonrandom quantity that does
not depend on $\bO_1,\ldots,\bO_n$.

%%%%%%%%%%%%%%%%%%%%%
%%%%%%%%%%%%%%%%%%%%%
\subsection{Proof of Theorem \ref{th:doubly-robust}}\label{sec:doubly-robust}
%%%%%%%%%%%%%%%%%%%%%
%%%%%%%%%%%%%%%%%%%%%
%%%%%%%%%%%%%%%%%%%%%%%%
%%%%%%%%%%%%%%%%%%%%%%%%
\begin{Lem} \label{lem:l1}
\begin{enumerate}
[label=(\roman*),ref=(\roman*),start=1]
    \item \label{item1} If $\eta_1^* = \eta_1$, then $E\{\phi_\bb^{\star}(\bO,\zeta,\bb)\} = E\{\phi_1^{\star}(\bO,\zeta,\bb)\} = E\{\phi_2^{\star}(\bO,\zeta,\bb)+\phi_3^{\star}(\bO,\zeta,\bb)\} = 0$. Thus, $E\{\phi\eff^{\star}(\bO,\zeta,\bb)\}=0$.
    \item \label{item2} If $\eta_2^\star = \eta_2$, then $E\{\phi_\bb^{*}(\bO,\zeta,\bb)\} = E\{\phi_1^{*}(\bO,\zeta,\bb)+\phi_3^{*}(\bO,\zeta,\bb)\} = E\{\phi_2^{*}(\bO,\zeta,\bb)\} = 0$. Thus, $E\{\phi\eff^{*}(\bO,\zeta,\bb)\}=0$.
\end{enumerate}
\end{Lem}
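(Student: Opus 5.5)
The plan is to verify each mean-zero claim by conditioning on the variable whose distribution is governed by the correctly specified model. The strategy is the same in both parts: use the defining integral equation of the misspecified-model object only through its form, and use the correct model to evaluate the outer expectation. Throughout, the true data distribution is used for the outer expectation $E$, while starred quantities enter only through the functions $\ba^{*\star}$, $a_j^{*\star}$, $\bfe^{*\star}$ and the inner conditional expectations.

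For part (i), with $\eta_1^*=\eta_1$, start with $\phi_1^\star$. Its defining form is $\Delta a_1(W,\bZ)+(1-\Delta)E\{I(X>W)a_1(X,\bZ)\mid W,Y,\bZ,\eta_1\}/E\{I(X>W)\mid W,Y,\bZ,\eta_1\}$. Because the inner conditional expectation is now under the true $\eta_1$, the identity used in the proof of Proposition~\ref{pro:2} (for any $h(c,y,\z)$) applies. Combined with $C\indep(X,Y)\mid\bZ$, it replaces the censored term by $(1-\Delta)a_1(X,\bZ)$. This gives $E\{\phi_1^\star\}=E\{a_1^\star(X,\bZ)\}=E[E\{a_1^\star(X,\bZ)\mid\bZ,\eta_1\}]=0$ by the constraint on $a_1^\star$, which is imposed under $\eta_1^*=\eta_1$.

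The same reduction handles $\phi_\bb^\star=\bfe^{\star\trans}\wt\S_\bb^\star$. It turns $E\{\wt\S_\bb^\star\}$ into $E\{\S_\bb^F(Y,X,\bZ)-\ba^\star(X,\bZ)\}$. The first piece vanishes because it is a score. The second has mean zero because of the analogue of the display $E\{\ba(X,\z)\mid\z\}=\0$ in the proof of Proposition~\ref{pro:1}. That computation needs to be redone with the outer expectation over $C$ taken under $\eta_2^\star$ and the inner one under the true $\eta_1$. I expect it to go through because only the $X$-integration against $f_{X|\Z}$ and the tower property over the inner ratio are used.

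For $\phi_2^\star+\phi_3^\star$, take the expectation conditional on $C$ and $\bZ$ under the true $\eta_1$. The left side of the $a_2^\star$ integral equation is exactly $E\{\phi_2^\star\mid c,\z,\eta_1\}$. It therefore equals $-\{E[I(r\le\zeta)\mid c,\z,\eta_1]-E[I(r\le\zeta)\mid\z,\eta_1,\eta_2^\star]\}/D^\star$, where $D^\star$ is the starred density denominator. Adding $a_3^\star(\z)=-\{E[I(r\le\zeta)\mid\z,\eta_1,\eta_2^\star]-(1-\alpha)\}/D^\star$ cancels the $\eta_2^\star$-dependent middle term. What remains is $-\{E[I(r\le\zeta)\mid C,\bZ,\eta_1]-(1-\alpha)\}/D^\star$. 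Its true expectation is $-\{\pr(r(\bO,\bb)\le\zeta)-(1-\alpha)\}/D^\star=0$ by the definition of $\zeta$.

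Part (ii) is symmetric with the roles swapped. Under $\eta_2^\star=\eta_2$, condition on $(X,\bZ)$. The $a_1^*$ equation says $E\{\phi_1^*\mid x,\z,\eta_2\}$ equals the centered conditional coverage, and adding $a_3^*$ again telescopes to $E[I(r\le\zeta)\mid X,\bZ]-(1-\alpha)$, which has mean zero. For $\phi_2^*$, the analogous identity for $h(x,y,\z)$ with the true $\eta_2$ gives $E\{\phi_2^*\}=E\{a_2^*(C,\bZ)\}=0$. For $\phi_\bb^*$, equation \eqref{eq:e11} holds with the equation defining $\ba^*$ evaluated under the true $\eta_2$; its proof used only \eqref{eq:e10} in $x$ and $Y\indep C\mid X,\Z$. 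So $E\{\wt\S_\bb^*\mid X,\bZ\}=\0$, hence $E\{\phi_\bb^*\}=0$.

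The main obstacle is the $\phi_\bb$ term in part (i). There $\ba^\star$ solves an equation whose outer expectation uses the wrong censoring law, so the shortcut through \eqref{eq:e11} is unavailable. One must check carefully that $E\{\ba^\star(X,\bZ)\mid\bZ\}=\0$ still holds by integrating \eqref{eq:e10} against the true $f_{X|\Z}$.
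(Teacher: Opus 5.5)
Your proposal is correct and follows the paper's proof piece by piece. In each part, the correctly specified model makes the relevant conditional expectations exact; the mean-zero constraint on $a_1^{\star}$ (resp.\ $a_2^{*}$) disposes of $\phi_1^\star$ (resp.\ $\phi_2^*$); and the integral equation for $a_2^\star$ (resp.\ $a_1^*$) telescopes with $a_3$ into centered coverage with mean zero. The step you flagged does go through, and the paper handles it the same way: it observes that $E\{\S_\bb^F(Y,X,\bZ)-\ba^\star(X,\bZ)\}$ does not involve the censoring law, evaluates it with $C$ distributed under $\eta_2^\star$ (where the defining equation for $\ba^\star$ holds and the inner expectations under $\eta_1$ are exact), and then repeats the cancellation behind \eqref{eq:e11}, which is your computation packaged slightly differently.
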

%%%%%%%%%%%%%%%%%%%%%%%%
%%%%%%%%%%%%%%%%%%%%%%%%
\begin{proof}[Proof of Lemma \ref{lem:l1}]
\ref{item1} Suppose that $\eta_1^* = \eta_1$. Then 
\bse
&&E\{\wt\S_\bb^{\star}(Y,W,\Delta,\Z)\mid\bb,\eta_1,\eta_2\}\n\\
&=&E\left(\Delta \{\S_\bb^{F}(Y,W,\bZ) - \ba^{\star}(W,\Z)\}\right.\n\\
&&+
\left.(1-\Delta)\frac{E[I(X>W)\{\S_\bb^{F}(Y,X,\bZ)-\ba^{\star}(X,\Z)\}\mid W,Y,\Z,\bb,\eta_1]}{E\{I(X>W)\mid W,Y,\Z,\bb,\eta_1\}}\mid\bb,\eta_1,\eta_2\right)\n\\
&=&E\{\S_\bb^{F}(Y,X,\bZ) - \ba^{\star}(X,\Z)\mid \bb,\eta_1\}\n\\
&=& E\{\S_\bb^{F}(Y,X,\bZ) - \ba^{\star}(X,\Z)\mid \bb,\eta_1,\eta_2^\star\}\n\\
&=&E\left\{E\left(\Delta \{\S_\bb^{F}(Y,W,\bZ) - \ba^{\star}(W,\Z)\}\right.\right.\n\\
&&\left.\left.+(1-\Delta)\frac{E[I(X>W)\{\S_\bb^{F}(Y,X,\bZ)-\ba^{\star}(X,\Z)\}\mid W,Y,\Z,\bb,\eta_1]}{E\{I(X>W)\mid W,Y,\Z,\bb,\eta_1\}}\mid X,\Z,\bb,\eta_1,\eta_2^\star\right)\mid\eta_1\right\}\n\\
&=&E[E\{\Delta \S_\bb^{F}(Y,W,\bZ)\mid X,\Z,\bb,\eta_2^\star\}\mid\eta_1]\n\\
&=&E[I(X\le C)E\{\S_\bb^{F}(Y,X,\bZ)\mid X,C,\Z,\bb\}\mid\eta_1,\eta_2^\star]\n\\
&=&\0,
\ese
where the first equality holds by the definition of
$\wt\S_\bb^{\star}$, and the fifth equation holds by the definition
of $\ba^{\star}$. Thus, $E\{\phi_\bb^{\star}(\bO,\zeta)\}=0$. Next, 
\bse
&&E\{\phi_1^{\star}(\bO,\zeta)\mid\bb,\eta_1,\eta_2\}\\
&=& E\left[\Delta a_1^{\star}(W,\bZ) + (1-\Delta) \frac{E\{I(X> W)
a_1^{\star}(X,\bZ)\mid W,Y,\Z,\bb,\eta_1\}}{E\{I(X > W)\mid W,Y,\Z,\bb,\eta_1\}}\mid \bb,\eta_1,\eta_2\right]\n\\
&=& E\{a_1^{\star}(X,\bZ)\mid\eta_1\}\n\\
&=& E[E\{a_1^{\star}(X,\bZ)\mid\bZ,\eta_1\}]\n\\
&=& 0,
\ese
where the first equality holds by the definition of $\phi_1^{\star}$. 
Also,
\bse
&&E\{\phi_2^{\star}(\bO,\zeta)+\phi_3^{\star}(\bO,\zeta)\mid\bb,\eta_1,\eta_2\}\\
&=& E\left[\Delta \frac{E\{I(C \ge W) a_2^{\star}
    (C,\Z)\mid W,\Z,\eta_2^\star\}}{E\{I(C \ge W)\mid W,\Z,\eta_2^\star\}}+ (1-\Delta)a_2^{\star}
    (W,\Z)+a_3^{\star}(\Z)\mid \eta_1,\eta_2\right]\n\\
&=& E\left(E\left[\Delta \frac{E\{I(C \ge W) a_2^{\star}
    (C,\Z)\mid W,\Z,\eta_2^\star\}}{E\{I(C \ge W)\mid W,\Z,\eta_2^\star\}}+ (1-\Delta)a_2^{\star}
    (W,\Z) \mid C,\Z,\eta_1\right]+a_3^{\star}(\Z)\mid\eta_2\right)\n\\
&=& E\left(- \frac{E[I\{r(\bO,\bb) \le \zeta\}\mid C,\Z,\bb,\eta_1] - E[I\{r(\bO,\bb) \le \zeta\}\mid \Z,\bb,\eta_1,\eta_2^\star]}{E[\d\{\zeta - r(\bO,\bb)\}\mid\bb,\eta_1^*,\eta_2^\star]}\right.\\
&&\left.- \frac{E[I\{r(\bO,\bb) \le \zeta\}\mid \Z,\bb,\eta_1,\eta_2^\star] - E[I\{r(\bO,\bb) \le \zeta\}\mid\bb,\eta_1,\eta_2]}{E[\d\{\zeta - r(\bO,\bb)\}\mid\bb,\eta_1^*,\eta_2^\star]}\mid\eta_2\right)\\
&=& E\left(- \frac{E[I\{r(\bO,\bb) \le \zeta\}\mid C,\Z,\bb,\eta_1] - E[I\{r(\bO,\bb) \le \zeta\}\mid\bb,\eta_1,\eta_2]}{E[\d\{\zeta - r(\bO,\bb)\}\mid\bb,\eta_1^*,\eta_2^\star]}\mid\eta_2\right)\\
&=& 0,
\ese
where the first equality holds by the definitions of $\phi_2^{\star}$
and $\phi_3^{\star}$, and the third equality holds by the definitions
of $a_2^{\star}$ and $a_3^{\star}$.

\ref{item2} Suppose that $\eta_2^\star = \eta_2$. Then
\bse
&&E\{\wt\S_\bb^{*}(Y,W,\Delta,\Z)\mid\bb,\eta_1,\eta_2\}\\
&=&E\left(\Delta \{\S_\bb^{F}(Y,W,\bZ) - \ba^{*}(W,\Z)\}\right.\\
&&+
\left.(1-\Delta)\frac{E[I(X>W)\{\S_\bb^{F}(Y,X,\bZ)-\ba^{*}(X,\Z)\}\mid W,Y,\Z,\bb,\eta_1^*]}{E\{I(X>W)\mid W,Y,\Z,\bb,\eta_1^*\}}\mid\bb,\eta_1,\eta_2\right)\\
&=&E\left\{E\left(\Delta \{\S_\bb^{F}(Y,W,\bZ) - \ba^{*}(W,\Z)\}\right.\right.\\
&&+
\left.\left.(1-\Delta)\frac{E[I(X>W)\{\S_\bb^{F}(Y,X,\bZ)-\ba^{*}(X,\Z)\}\mid W,Y,\Z,\bb,\eta_1^*]}{E\{I(X>W)\mid W,Y,\Z,\bb,\eta_1^*\}}\mid X,\Z,\bb,\eta_2\right)\mid\eta_1\right\}\\
&=&E[E\{\Delta \S_\bb^{F}(Y,W,\bZ)\mid X,\Z,\bb,\eta_2\}\mid\eta_1]\\
&=&E[I(X\le C)E\{\S_\bb^{F}(Y,X,\bZ)\mid X,C,\Z,\bb\}\mid\eta_1,\eta_2]\\
&=& \0,
\ese
where the first equality holds by the definition of $\wt\S_\bb^{*}$,
and the third equality holds by the definition of $\ba^{*}$. 
  This implies that $E\{\phi_\bb^{*}(\bO,\zeta)\} = 0$.
Next, 
\bse
&&E\{\phi_1^{*}(\bO,\zeta)+\phi_3^{*}(\bO,\zeta)\mid\bb,\eta_1,\eta_2\}\\
&=& E\left[\Delta a_1^{*}(W,\bZ) + (1-\Delta) \frac{E\{I(X> W)
a_1^{*}(X,\bZ)\mid W,Y,\Z,\bb,\eta_1^*\}}{E\{I(X > W)\mid W,Y,\Z,\bb,\eta_1^*\}}+a_3^{*}(\bZ)\mid \bb,\eta_1,\eta_2\right]\n\\
&=& E\left(E\left[\Delta a_1^{*}(W,\bZ) + (1-\Delta) \frac{E\{I(X> W)
a_1^{*}(X,\bZ)\mid W,Y,\Z,\bb,\eta_1^*\}}{E\{I(X > W)\mid W,Y,\Z,\bb,\eta_1^*\}}\mid X,\Z,\bb,\eta_2\right]+a_3^{*}(\bZ)\mid \eta_1\right)\n\\
&=& E\left(-\frac{E[I\{r(\bO,\bb) \le \zeta\}\mid X,\Z,\bb,\eta_2] - E[I\{r(\bO,\bb) \le \zeta\}\mid \Z,\bb,\eta_1^*,\eta_2]}{E[\d\{\zeta - r(\bO,\bb)\}\mid\bb,\eta_1^*,\eta_2]}\right.\\
&&\left.-\frac{E[I\{r(\bO,\bb) \le \zeta\}\mid \Z,\bb,\eta_1^*,\eta_2]-E[I\{r(\bO,\bb) \le \zeta\}\mid \bb,\eta_1,\eta_2]}{E[\d\{\zeta - r(\bO,\bb)\}\mid\bb,\eta_1^*,\eta_2]}\mid \eta_1\right)\n\\
&=& E\left(-\frac{E[I\{r(\bO,\bb) \le \zeta\}\mid X,\Z,\bb,\eta_2] -E[I\{r(\bO,\bb) \le \zeta\}\mid \bb,\eta_1,\eta_2]}{E[\d\{\zeta - r(\bO,\bb)\}\mid\bb,\eta_1^*,\eta_2]}\mid \eta_1\right)\n\\
&=& 0,
\ese
where the first equality holds by the definitions of $\phi_1^{*}$ and
$\phi_3^{*}$, and the third equality holds by the definitions of
$a_1^{*}$ and $a_3^{*}$. Lastly,
\bse
&&E\{\phi_2^{*}(\bO,\zeta)\mid\bb,\eta_1,\eta_2\}\\
&=& E\left[\Delta \frac{E\{I(C \ge W) a_2^{*}
    (C,\Z)\mid W,\Z,\eta_2\}}{E\{I(C \ge W)\mid W,\Z,\eta_2\}}+ (1-\Delta)a_2^{*}
    (W,\Z)\mid \eta_1,\eta_2\right]\n\\
&=& E[E\{a_2^{*}(C,\Z)\mid \eta_1\}\mid\eta_2]\n\\
&=& 0,
\ese
where the first equality holds by the definition of $\phi_2^{*}$.
\end{proof}

\begin{proof}[Proof of Theorem \ref{th:doubly-robust}]
First, by Conditions
\ref{con:a1}--\ref{con:a3}, we apply 
\cite{newey1994large}, Lemma 2.4 to obtain 
that
\bse
\sup_{\|\brho-\bb\|_2\le\epsilon}\sup_{\theta\in\Omega}|n^{-1}\sumi\phi\eff^{*\star}
(\bO_i,\theta,\brho)- E\{\phi\eff^{*\star}
(\bO,\theta,\brho)\}|=o_p(1)
\ese
for any $\epsilon>0$.
By Condition \ref{con:a4}, there exists a sequence $(\epsilon_n)$ such that $\epsilon_n \downarrow 0$ and $P(\|\wh\bb-\bb\|_2>\epsilon_n)\to 0$. Then we have that
\be
&&\sup_{\theta\in\Omega}|n^{-1}\sumi\phi\eff^{*\star}
(\bO_i,\theta,\wh\bb)- E\{\phi\eff^{*\star}
(\bO,\theta,\bb)\}|\n\\
&\le&\sup_{\|\brho-\bb\|_2\le\epsilon_n}\sup_{\theta\in\Omega}|n^{-1}\sumi\phi\eff^{*\star}
(\bO_i,\theta,\brho)- E\{\phi\eff^{*\star}
(\bO,\theta,\brho)\}|\n\\
&&+\sup_{\|\brho-\bb\|_2\le\epsilon_n}\sup_{\theta\in\Omega}|E\{\phi\eff^{*\star}
(\bO,\theta,\brho)\}- E\{\phi\eff^{*\star}
(\bO,\theta,\bb)\}|+ o_p(1)\n\\
&=& o_p(1),\label{eq:e13}
\ee
where the first term of the third line converges to zero by Condition
\ref{con:a3}. 

Let $Q_0(\theta)=-[E\{\phi\eff^{*\star} (\bO,\theta,\bb)\}]^2$ and
$\wh Q_n(\theta)=-[n^{-1}\sumi\phi\eff^{*\star}
(\bO_i,\theta,\wh\bb)]^2$.
Then $Q_0(\theta)$ is uniquely maximized at
  $\theta=\zeta$ in $\Omega$,
    and $\Omega$ is compact by Condition
  \ref{con:a1}. Also, $Q_0(\theta)$ is continuous by Conditions
  \ref{con:a2} and \ref{con:a3} with the dominated convergence
  theorem. Lastly, by \eqref{eq:e13} and Condition \ref{con:a3}, $\wh
  Q_n(\theta)$ converges uniformly in probability to $Q_0(\theta)$ on
  $\theta \in \Omega$. Thus, by Lemma~\ref{lem:l1}, we can apply 
  \cite{newey1994large}, Theorem 2.1 to obtain that $\wh\zeta$ is consistent for
  $\zeta$. 
\end{proof}
 %%%%%%%%%%%%%%%%%%%%%
%%%%%%%%%%%%%%%%%%%%%
\subsection{Proof of Theorem \ref{th:asympt-normal}}\label{sec:asympt-normal}
%%%%%%%%%%%%%%%%%%%%%
%%%%%%%%%%%%%%%%%%%%%
\begin{proof}
By Conditions \ref{con:a1}, \ref{con:a5}, and \ref{con:a6}, we
can apply \cite{newey1994large}, Lemma 2.4 and an argument similar to that used to prove \eqref{eq:e13} to obtain
\be
\sup_{\theta \in \Omega}|n^{-1}\sumi\frac{\partial\phi\eff^{*\star} (\bO_i,\theta,\wh\bb)}{\partial\theta}- E\{\frac{\partial\phi\eff^{*\star} (\bO,\theta,\bb)}{\partial\theta}\}|=o_p(1).\label{eq:e14}
\ee

Next,
\bse
0 &=& n^{-1/2}\sumi \phi\eff^{*\star}(\bO_i,\wh\zeta,\wh\bb)\\
&=&
n^{-1/2}\sumi\frac{\partial\phi\eff^{*\star}(\bO_i,\zeta,\bb)}{\partial\zeta}(\wh\zeta-\zeta)
+n^{-1/2}\sumi\frac{\partial\phi\eff^{*\star}(\bO_i,\zeta,\bb)}{\partial\bb\trans}(\wh\bb-\bb)\\
&&+ n^{1/2} o_p(\|(\wh\zeta,\wh\bb\trans)\trans-(\zeta,\bb\trans)\trans\|_2)+ n^{-1/2}\sumi \phi\eff^{*\star}(\bO_i,\zeta,\bb)\\
&=&
n^{-1/2}\sumi\frac{\partial\phi\eff^{*\star}(\bO_i,\zeta,\bb)}{\partial\zeta}(\wh\zeta-\zeta)
+n^{-1/2}\sumi\frac{\partial\phi\eff^{*\star}(\bO_i,\zeta,\bb)}{\partial\bb\trans}(\wh\bb-\bb)\\
&&+ n^{-1/2}\sumi \phi\eff^{*\star}(\bO_i,\zeta,\bb)
+n^{1/2} o_p(|\wh\zeta-\zeta|)+o_p(1)\\
&=& n^{1/2}(\wh\zeta-\zeta)\{\tau + o_p(1)\} 
 + \{\bh+o_p(1)\}\trans \{n^{-1/2}\sumi \bxi(\bO_i,\bb)+o_p(1)\}\\
 &&+ n^{-1/2}\sumi \phi\eff^{*\star}(\bO_i,\zeta,\bb) + o_p(n^{1/2}|\wh\zeta-\zeta|)+o_p(1)\\
 &=& n^{1/2}(\wh\zeta-\zeta)\tau
 + n^{-1/2}\sumi \{\bh\trans\bxi(\bO_i,\bb)+\phi\eff^{*\star}(\bO_i,\zeta,\bb)\}+ o_p(n^{1/2}|\wh\zeta-\zeta|)+o_p(1),
\ese
where $\bh= E\{\partial\phi\eff^{*\star}(\bO,\zeta,\bb)/\partial\bb\}$.
Here, the second equality holds by Taylor's
theorem,
the third equality holds by the $n^{1/2}$-consistency of $\wh\bb$,
the fourth equality holds by \eqref{eq:e14} and the law of large
numbers, and
the last equality holds by the
  fact that the second term is $O_p(1)$, hence
  $n^{1/2}(\wh\zeta-\zeta)=O_p(1)$.
Thus, 
\bse
n^{1/2}(\wh\zeta-\zeta) &=& -\tau^{-1} n^{-1/2}\sumi \{\bh\trans\bxi(\bO_i,\bb)+\phi\eff^{*\star}(\bO_i,\zeta,\bb)\}+o_p(1)\\
&\stackrel{d}{\to}& \Normal(0,\tau^{-2}\sigma^2),
\ese
where Slutsky's theorem and the central limit theorem are used to prove
asymptotic normality.

In particular, assume that $\eta_1^*=\eta_1$ and $\eta_2^\star=\eta_2$. 
Since
\bse
\tau &=& E\{\frac{\partial\phi\eff(\bO,\zeta,\bb)}{\partial\zeta}\}\\
&=& E\left[\frac{\partial\bfe\trans}{\partial\zeta}\wt\S_\bb(\bO) + \Delta \frac{\partial a_1(W,\bZ)}{\partial\zeta} + (1-\Delta) \frac{E\{I(X> W)
    \partial a_1(X,\bZ)/\partial\zeta\mid W,Y,\Z,\bb,\eta_1\}}{E\{I(X > W)\mid W,Y,\Z,\bb,\eta_1\}}\right.\\
   &&\left.+\Delta \frac{E\{I(C \ge W) \partial a_2
(C,\Z)/\partial\zeta\mid W,\Z,\eta_2\}}{E\{I(C \ge W)\mid W,\Z,\eta_2\}} + (1-\Delta)\frac{\partial a_2(W,\bZ)}{\partial\zeta} + \frac{\partial a_3(\bZ)}{\partial\zeta}\mid \bb,\eta_1,\eta_2\right]\\
&=& \frac{\partial\bfe\trans}{\partial\zeta}E\{\wt\S_\bb(\bO)\}
- E\left\{\frac{\partial}{\partial\zeta}\left(\frac{E[I\{r(\bO,\bb) \le \zeta\}\mid X,\Z,\bb,\eta_2]}{E[\d\{\zeta - r(\bO,\bb)\}\mid\bb,\eta_1,\eta_2]}\right.\right.\\
&&-\frac{E[I\{r(\bO,\bb) \le \zeta\}\mid \Z,\bb,\eta_1,\eta_2]}{E[\d\{\zeta - r(\bO,\bb)\}\mid\bb,\eta_1,\eta_2]}\\
&&\left.\left.+\frac{E[I\{r(\bO,\bb) \le \zeta\}\mid C,\Z,\bb,\eta_1] - (1-\alpha)}{E[\d\{\zeta - r(\bO,\bb)\}\mid\bb,\eta_1,\eta_2]}\right)\mid \bb,\eta_1,\eta_2\right\}\\
&=& - E\left(\frac{E[\d\{\zeta - r(\bO,\bb)\}\mid X,\Z,\bb,\eta_2] - E[\d\{\zeta - r(\bO,\bb)\}\mid \Z,\bb,\eta_1,\eta_2]}{E[\d\{\zeta - r(\bO,\bb)\}\mid\bb,\eta_1,\eta_2]}\right.\\
&&\left.+\frac{E[\d\{\zeta - r(\bO,\bb)\}\mid C,\Z,\bb,\eta_1]}{E[\d\{\zeta - r(\bO,\bb)\}\mid\bb,\eta_1,\eta_2]}\mid \bb,\eta_1,\eta_2\right)\\
&=&-\frac{E[\d\{\zeta - r(\bO,\bb)\}\mid \bb,\eta_1,\eta_2]}{E[\d\{\zeta - r(\bO,\bb)\}\mid\bb,\eta_1,\eta_2]},
\ese
we have that $\tau=-1$. In addition,
\bse
\bh&=&E\{\frac{\partial\phi\eff(\bO,\zeta,\bb)}{\partial\bb}\}\\
&=&E\{\frac{\dd\phi\eff(\bO,\zeta,\bb)}{\dd\bb}\}-E\{\frac{\partial\phi\eff(\bO,\zeta,\bb)}{\partial\zeta}\}\frac{\partial\zeta}{\partial\bb}\\
&=&-E\{\phi\eff(\bO,\zeta,\bb)\S_\bb(\bO)\} + \frac{\partial\zeta}{\partial\bb},
\ese
where the last line holds since $E\{\phi\eff(\bO,\zeta,\bb)\mid\bb\} = 0$ and $\tau=-1$.
Note that $\phi_j \in \Lambda_j$ for $j=1,2,3$, while
$\wt\S_\bb\in\wt\Lambda_\bb$ and $\S_\bb-\wt\S_\bb\in\Lambda_1$. 
Since $\wt\Lambda_\bb$ and $\Lambda_j$ for $j=1,2,3$ are orthogonal to
each other, we get that, for $j=2,3$,
$E\{\phi_j(\bO,\zeta,\bb)\S_\bb(\bO)\}=\0$, and for $j=1$,  
\bse
&&E\{\phi_1(\bO,\zeta,\bb)\S_\bb(\bO)\}\\
&=& E[\phi_1(\bO,\zeta,\bb)\{\S_\bb(\bO)-\wt\S_\bb(\bO)\}]\\
&=&E\left[\Delta a_1(W,\Z)\ba (W,\bZ)\right.\\
&&\left.+ (1-\Delta)\frac{E\{I(X>W)a_1(X,\bZ)
      \mid W,Y,\Z\}}{E\{I(X>W)\mid W,Y,\Z\}}\frac{E\{I(X>W)\ba(X,\bZ)
      \mid W,Y,\Z\}}{E\{I(X>W)\mid W,Y,\Z\}}\right]\\
&=&E\left(\ba (X,\bZ)\left[\Delta a_1(W,\Z)+ (1-\Delta)\frac{E\{I(X>W)a_1(X,\bZ)
      \mid W,Y,\Z\}}{E\{I(X>W)\mid W,Y,\Z\}}\right]\right)\\
&=&-E\left(\ba (X,\bZ)\left[\frac{E[I\{r(\bO,\bb) \le \zeta\}\mid X,\Z] - E[I\{r(\bO,\bb) \le \zeta\}\mid \Z]}{E[\d\{\zeta - r(\bO,\bb)\}]}\right]\right)\\
&=& -\frac{E\{E(\ba (X,\bZ)[I\{r(\bO,\bb) \le \zeta\}]\mid X,\Z)\}}{E[\d\{\zeta - r(\bO,\bb)\}]}\\
&& +\frac{E(E\{\ba (X,\bZ)\mid \Z\}[I\{r(\bO,\bb) \le \zeta\}])}{E[\d\{\zeta - r(\bO,\bb)\}]}\\
&=& -\frac{E(\ba (X,\bZ)[I\{r(\bO,\bb) \le \zeta\}])}{E[\d\{\zeta - r(\bO,\bb)\}]},
\ese
where the last equality holds since $E\{\ba (X,\bZ)\mid \Z\}=\0$. 
Moreover, 
\bse
&&E\{\phi_\bb(\bO,\zeta,\bb)\S_\bb(\bO)\}\\
&=& E\{\phi_\bb(\bO,\zeta,\bb)\wt\S_\bb(\bO)\}\\
&=& E\{\wt\S_\bb(\bO)^{\otimes2}\bfe\} \\
&=& \frac{E[\d\{\zeta - r(\bO,\bb)\}\partial r(\bO,\bb)/\partial \bb]-E[I\{r(\bO,\bb) \le \zeta\}\{\S_\bb(\bO)-\ba(X,\bZ)\}]}{E[\d\{\zeta - r(\bO,\bb)\}]}.
\ese 
Thus, we get that
\bse
\bh &=& -E\{\phi_1(\bO,\zeta,\bb)\S_\bb(\bO)\}-E\{\phi_\bb(\bO,\zeta,\bb)\S_\bb(\bO)\}+ \frac{\partial\zeta}{\partial\bb}\\
&=& \frac{E[\d\{\zeta - r(\bO,\bb)\}\partial \{\zeta - r(\bO,\bb)\}/\partial \bb]+E[I\{r(\bO,\bb) \le \zeta\}\S_\bb(\bO)]}{E[\d\{\zeta - r(\bO,\bb)\}]}.
\ese 
By taking the derivative of $E[I\{r(\bO,\bb) \le \zeta\}\mid\bb]=1-\alpha$ with respect to $\bb$, we have that
\bse
E\left[\d\{\zeta - r(\bO,\bb)\} \frac{\partial\{\zeta-r(\bO,\bb)\}}{\partial \bb}\right]+E[I\{r(\bO,\bb) \le \zeta\}\S_\bb(\bO)] = \0,
\ese
which implies that  $\bh= \0$. Lastly, substituting $\tau=-1$ and
$\bh=\0$, we get $\sqrt n(\wh\zeta - \zeta)
\stackrel{d}{\to}\Normal[0,\var\{\phi\eff(\bO,\zeta,\bb)\}]$. 
\end{proof}

\subsection{Proof of Theorem \ref{th:bias}}
\label{sec:bias}
\begin{proof}
   For notational brevity, we define
\bse
\wh\bA\equiv
n^{-1}\sumi\frac{\partial\bPhi(\bO_i,\zeta,\bb)}{\partial(\zeta,\bb\trans)},\,
\bA\equiv
E\frac{\partial\bPhi(\bO_i,\zeta,\bb)}{\partial(\zeta,\bb\trans)},
\mbox { and }
\wh\bB_j(\theta,\brho)\equiv n^{-1}\sumi\frac{\partial^2\bPhi_j(\bO_i,\theta,\brho)}{\partial(\theta,\brho\trans)\trans\partial(\theta,\brho\trans)}.
\ese
By Taylor's theorem, we have that 
\bse
&&|E[I\{r(\bO_0,\wh\bb) \le \wh\zeta\}]-(1-\alpha)|\\
&=&|E(E[I\{r(\bO_0,\wh\bb) \le \wh\zeta\}\mid\wh\zeta,\wh\bb]-(1-\alpha))|\\
&=&
\left|E\left(\frac{\partial E[I\{r(\bO_0,\bb) \le \zeta\mid\zeta,\bb\}]}{\partial(\zeta,\bb\trans)}\{(\wh\zeta,\wh\bb\trans)\trans-(\zeta,\bb\trans)\trans\}\right)\right.\\
&&\left.
+\frac{1}{2}E\left(\{(\wh\zeta,\wh\bb\trans)-(\zeta,\bb\trans)\}\frac{\partial^2E[I\{r(\bO_0,\wt\bb) \le \wt\zeta\mid \wt\zeta,\wt\bb\}]}{\partial(\zeta,\bb\trans)\trans\partial(\zeta,\bb\trans)}\{(\wh\zeta,\wh\bb\trans)\trans-(\zeta,\bb\trans)\trans\}\right)\right|\\
&\le& M_1\|E\{(\wh\zeta,\wh\bb\trans)\trans-(\zeta,\bb\trans)\trans\}\|_2+\frac{M_2}{2}E\{\|(\wh\zeta,\wh\bb\trans)\trans-(\zeta,\bb\trans)\trans\|_2^2\},
\ese
where $(\wt\zeta,\wt\bb\trans)\trans$
 is on the line connecting $(\wh\zeta,\wh\bb\trans)\trans$ and
  $(\zeta,\bb\trans)\trans$.
First, by the $n^{1/2}$-consistency of
$(\wh\zeta,\wh\bb\trans)\trans$, we have
$E\{\|(\wh\zeta,\wh\bb\trans)\trans-(\zeta,\bb\trans)\trans\|_2^2\}=O(n^{-1})$.
Then it suffices to prove
$\|E\{(\wh\zeta,\wh\bb\trans)\trans-(\zeta,\bb\trans)\trans\}\|_2=O(n^{-1})$. 
Now, by Taylor's theorem, 
\bse
\0&=&\sumi\bPhi(\bO_i,\wh\zeta,\wh\bb)\\
&=&\sumi\bPhi(\bO_i,\zeta,\bb)+\sumi\frac{\partial\bPhi(\bO_i,\zeta,\bb)}{\partial(\zeta,\bb\trans)}\{(\wh\zeta,\wh\bb\trans)\trans-(\zeta,\bb\trans)\trans\}\\
&&+\left[
  \{(\wh\zeta,\wh\bb\trans)-(\zeta,\bb\trans)\}\frac{1}{2}\sumi\frac{\partial^2\bPhi_j(\bO_i,\wc\zeta,\wc\bb)}{\partial(\zeta,\bb\trans)\trans\partial(\zeta,\bb\trans)}\{(\wh\zeta,\wh\bb\trans)\trans-(\zeta,\bb\trans)\trans\}\right]_{j=1}^{d_\bb+1}\\
&=&\sumi\bPhi(\bO_i,\zeta,\bb)+n\wh\bA\{(\wh\zeta,\wh\bb\trans)\trans-(\zeta,\bb\trans)\trans\}\\
&&+\left[
  \{(\wh\zeta,\wh\bb\trans)-(\zeta,\bb\trans)\}\frac{n}{2}\wh\bB_j(\wc\zeta,\wc\bb)\{(\wh\zeta,\wh\bb\trans)\trans-(\zeta,\bb\trans)\trans\}\right]_{j=1}^{d_\bb+1},
\ese
where $(\wc\zeta,\wc\bb\trans)\trans$
 is on the line connecting $(\wh\zeta,\wh\bb\trans)\trans$ and
  $(\zeta,\bb\trans)\trans$. 
Then
\be\label{eq:e15}
&&\|E\{(\wh\zeta,\wh\bb\trans)\trans-(\zeta,\bb\trans)\trans\}\|_2\n\\
&=&
\left\|E\left(\wh\bA^{-1} n^{-1}\sumi\bPhi(\bO_i,\zeta,\bb)+
\wh\bA^{-1} 
\frac{1}{2}\{(\wh\zeta,\wh\bb\trans)-(\zeta,\bb\trans)\}
\wh\bB_j(\wc\zeta,\wc\bb)\{(\wh\zeta,\wh\bb\trans)\trans-(\zeta,\bb\trans)\trans\}
\right)\right\|_2\n\\
&=&
\left\|E\{
      (\wh\bA^{-1}-\bA^{-1})
    n^{-1}\sumi\bPhi(\bO_i,\zeta,\bb)\}\right.\n\\
&&+\left.E\left(
        \wh\bA^{-1}\frac{1}{2}\left[\{(\wh\zeta,\wh\bb\trans)-(\zeta,\bb\trans)\}
\wh\bB_j(\wc\zeta,\wc\bb)\{(\wh\zeta,\wh\bb\trans)\trans-(\zeta,\bb\trans)\trans\}\right]_{j=1}^{d_\bb+1}
\right)\right\|_2\n\\
&\le&\|E\{
      (\wh\bA^{-1}-\bA^{-1})
    n^{-1}\sumi\bPhi(\bO_i,\zeta,\bb)\}\|_2\n\\
    &&+\left\|E\left(
        \wh\bA^{-1}\frac{1}{2}\left[\{(\wh\zeta,\wh\bb\trans)-(\zeta,\bb\trans)\}
\wh\bB_j(\wc\zeta,\wc\bb)\{(\wh\zeta,\wh\bb\trans)\trans-(\zeta,\bb\trans)\trans\}\right]_{j=1}^{d_\bb+1}
\right)\right\|_2\n\\
&\le& E\{\|
   (\wh\bA^{-1}-\bA^{-1})
    n^{-1}\sumi\bPhi(\bO_i,\zeta,\bb)\|_2\}\n\\
    &&+E\left(\left\|
        \wh\bA^{-1}\frac{1}{2}\left[\{(\wh\zeta,\wh\bb\trans)-(\zeta,\bb\trans)\}
\wh\bB_j(\wc\zeta,\wc\bb)\{(\wh\zeta,\wh\bb\trans)\trans-(\zeta,\bb\trans)\trans\}\right]_{j=1}^{d_\bb+1}\right\|_2
\right)\n\\
&\le&
E\left[\|\bA^{-1}\|_2\|\wh\bA-\bA\|_2
  \|\wh\bA^{-1}\|_2\|n^{-1}\sumi\bPhi(\bO_i,\zeta,\bb)\|_2\right]\n\\
&&+\frac{1}{2}E\left(\|\wh\bA^{-1}\|_2\left\|\left[\{(\wh\zeta,\wh\bb\trans)-(\zeta,\bb\trans)\}\wh\bB_j(\wc\zeta,\wc\bb)\{(\wh\zeta,\wh\bb\trans)\trans-(\zeta,\bb\trans)\trans\}\right]_{j=1}^{d_\bb+1}\right\|_2
\right),
\ee
where the
first inequality holds by the triangle inequality, the
second inequality holds by Jensen's inequality using the fact that
$\|\cdot\|_2$ is convex, and the last inequality holds by the
definition of spectral norm. 
We now give upper bounds for each component. First, the continuous mapping
theorem and Condition \ref{con:c3} yield 
$
\|\wh\bA^{-1}\|_2\to\|\bA^{-1}\|_2 = \lambda_1^{-1} 
$
almost surely. By Condition \ref{con:c4}, we get
$\|n^{-1}\sumi\bPhi(\bO_i,\zeta,\bb)\|_2 = O_p(n^{-1/2})$,
and $\|\wh\bA-\bA\|_2 = O_p(n^{-1/2})$. Hence the first term in
\eqref{eq:e15} is of order $O_p(n^{-1})$.
Lastly, Condition \ref{con:c5} leads to
\bse
&&\left\|\left[\{(\wh\zeta,\wh\bb\trans)-(\zeta,\bb\trans)\}\wh\bB_j(\wc\zeta,\wc\bb)\{(\wh\zeta,\wh\bb\trans)\trans-(\zeta,\bb\trans)\trans\}\right]_{j=1}^{d_\bb+1}\right\|_2\n\\
&\le& \{M_5^{1/2}+o_p(1)\} O\{\|(\wh\zeta,\wh\bb\trans)\trans-(\zeta,\bb\trans)\trans\|_2^2\}\n\\
&=& O_p(n^{-1}).
\ese
Inserting the upper bounds into \eqref{eq:e15}, we get
$\|E\{(\wh\zeta,\wh\bb\trans)\trans-(\zeta,\bb\trans)\trans\}\|_2 =
O(n^{-1})$, which completes the proof. 
\end{proof}

\subsection{Proof of Theorem \ref{th:finite-bound}}\label{sec:finite-bound}

\begin{proof}
By Taylor's theorem,
\bse
&& |E[I\{r(\bO_0,\wh\bb) \le \wh\zeta\}\mid \wh\zeta,\wh\bb] - (1-\alpha)|\\
&=& |E[I\{r(\bO_0,\wh\bb) \le \wh\zeta\}\mid \wh\zeta,\wh\bb] - E[I\{r(\bO_0,\bb) \le \zeta\}\mid \zeta,\bb]|\\
&=& \left|\frac{\partial E[I\{r(\bO_0,\bb) \le \zeta\}\mid \zeta,\bb]}{\partial(\zeta,\bb\trans)\trans} \{(\wh\zeta,\wh\bb\trans)\trans-(\zeta,\bb\trans)\trans\} \right.\\
&&\left.+ \frac{1}{2}
\{(\wh\zeta,\wh\bb\trans)-(\zeta,\bb\trans)\}\frac{\partial E[I\{r(\bO_0,\wt\bb) \le \wt\zeta\}\mid \wt\zeta,\wt\bb]}{\partial(\zeta,\bb\trans)\partial(\zeta,\bb\trans)\trans} \{(\wh\zeta,\wh\bb\trans)\trans-(\zeta,\bb\trans)\trans\}\right|\\
&\le& M_1 \|(\wh\zeta,\wh\bb\trans)\trans-(\zeta,\bb\trans)\trans\|_2 + \frac{M_2}{2} \|(\wh\zeta,\wh\bb\trans)\trans-(\zeta,\bb\trans)\trans\|_2^2,
\ese
where the second equality holds by Taylor's theorem,
$(\wt\zeta,\wt\bb\trans)\trans$ is on the line connecting
  $(\wh\zeta,\wh\bb\trans)\trans$ and $(\zeta,\bb\trans)\trans$,
and the inequality holds by Conditions \ref{con:c1} and \ref{con:c2}.

Recall the definitions
\bse
\wh\bA\equiv
n^{-1}\sumi\frac{\partial\bPhi(\bO_i,\zeta,\bb)}{\partial(\zeta,\bb\trans)},\,
\bA\equiv
E\frac{\partial\bPhi(\bO_i,\zeta,\bb)}{\partial(\zeta,\bb\trans)},
\mbox { and }
\wh\bB_j(\theta,\brho)\equiv n^{-1}\sumi\frac{\partial^2\bPhi_j(\bO_i,\theta,\brho)}{\partial(\theta,\brho\trans)\trans\partial(\theta,\brho\trans)}
\ese
in the proof of Theorem \ref{th:bias}.
Taylor's theorem leads to
\bse
\0&=&\sumi\bPhi(\bO_i,\wh\zeta,\wh\bb)\\
&=&\sumi\bPhi(\bO_i,\zeta,\bb)+\sumi\frac{\partial\bPhi(\bO_i,\zeta,\bb)}{\partial(\zeta,\bb\trans)}\{(\wh\zeta,\wh\bb\trans)\trans-(\zeta,\bb\trans)\trans\}\\
&&+\left[
  \{(\wh\zeta,\wh\bb\trans)-(\zeta,\bb\trans)\}\frac{1}{2}\sumi\frac{\partial^2\bPhi_j(\bO_i,\wc\zeta,\wc\bb)}{\partial(\zeta,\bb\trans)\trans\partial(\zeta,\bb\trans)}\{(\wh\zeta,\wh\bb\trans)\trans-(\zeta,\bb\trans)\trans\}\right]_{j=1}^{d_\bb+1}\\
&=&\sumi\bPhi(\bO_i,\zeta,\bb)+n\wh\bA\{(\wh\zeta,\wh\bb\trans)\trans-(\zeta,\bb\trans)\trans\}\\
&&+\left[
  \{(\wh\zeta,\wh\bb\trans)-(\zeta,\bb\trans)\}\frac{n}{2}\wh\bB_j(\wc\zeta,\wc\bb)\{(\wh\zeta,\wh\bb\trans)\trans-(\zeta,\bb\trans)\trans\}\right]_{j=1}^{d_\bb+1},
\ese
where $(\wc\zeta,\wc\bb\trans)\trans$
 is on the line connecting $(\wh\zeta,\wh\bb\trans)\trans$ and
  $(\zeta,\bb\trans)\trans$. 
Meanwhile, if $X_1,...,X_n$ are independent and identically distributed $(\nu,\omega)$-sub-exponential random variables, then
 for all $|t|<1/\omega$,
    \bse
    E[e^{t\{\sumi X_i-E(\sumi X_i)\}}] = \prod_{i=1}^n E[e^{t\{ X_i-E(X_i)\}}] \le e^{n\nu^2t^2/2}.
    \ese
That is,
$\sumi X_i$ is $(n^{1/2}\nu,\omega)$-sub-exponential. 
Moreover, \cite{wainwright2019high}, Proposition 2.9 states that if
$X$ is $(\nu,\omega)$-sub-exponential, then
\bse
P\{|X-E(X)|>t\}\le 2\exp\{-\min(\frac{t^2}{2\nu^2}, \frac{t}{2\omega})\}.
\ese 
Combining the results above, we get that
\bse
&&P\{\|n^{-1}\sumi\bPhi(\bO_i,\zeta,\bb)\|_2 > n^{-1/2}(\log n)^{1/2} \delta \}\\
&\le& \sum_{j=1}^{d_\bb+1} P\{|n^{-1}\sumi\bbe_j\trans\bPhi(\bO_i,\zeta,\bb)| > n^{-1/2}(\log n)^{1/2} \delta (d_\bb+1)^{-1/2}\}\\
&\le& \sum_{j=1}^{d_\bb+1} 2\exp[-\min\{\frac{(\log n) \delta ^2(d_\bb+1)^{-1}}{2\nu_j^2}, \frac{n^{1/2}(\log n)^{1/2} \delta (d_\bb+1)^{-1/2}}{2\omega_j}\}]\\
&\le& \sum_{j=1}^{d_\bb+1} 2[n^{-\delta ^2(d_\bb+1)^{-1}2^{-1}\nu_j^{-2}}+\exp\{- \frac{n^{1/2}(\log n)^{1/2} \delta (d_\bb+1)^{-1/2}}{2\omega_j}\}].
\ese 

Moreover, by Chebyshev's inequality,
\bse
P(\|\wh\bA-\bA\|_2>\frac{\lambda_1}{4})&\le&\frac{4^2}{\lambda_1^2}E[\|n^{-1}\sumi\frac{\partial\bPhi(\bO_i,\zeta,\bb)}{\partial(\zeta,\bb\trans)}-
E\{\frac{\partial\bPhi(\bO_i,\zeta,\bb)}{\partial(\zeta,\bb\trans)}\}
\|_2^2]\\
&\le&\frac{4^2}{\lambda_1^2}E[\|n^{-1}\sumi\frac{\partial\bPhi(\bO_i,\zeta,\bb)}{\partial(\zeta,\bb\trans)}-
E\{\frac{\partial\bPhi(\bO_i,\zeta,\bb)}{\partial(\zeta,\bb\trans)}\}
\|_F^2]\\
&\le&\frac{4^2}{n\lambda_1^2}\trace(E[\{\frac{\partial\bPhi(\bO,\zeta,\bb)}{\partial(\zeta,\bb\trans)}\}^{\otimes2}]).
\ese
Consider $(\theta,\brho\trans)\trans$ satisfying $\|(\theta,\brho\trans)\trans-(\zeta,\bb\trans)\trans\|_2 < c_0$. Since $(\theta,\brho\trans)\trans$ lies in a compact set,
  by
  Chebyshev's inequality and Condition \ref{con:c5},
\be
&&P\left(n^{-1} \sumi\sup_{\|\bv\|_2=1}\sup_{\theta,\brho}\left\|\left\{\bv\trans \frac{\partial^2\bPhi_j(\bO_i,\theta,\brho)}{\partial(\theta,\brho\trans)\trans\partial(\theta,\brho\trans)}\right\}_{j=1}^{d_\bb+1}\right\|_2\right.\n\\
&&\qquad\left.-E\left[\sup_{\|\bv\|_2=1}\sup_{\theta,\brho}\left\|\left\{\bv\trans \frac{\partial^2\bPhi_j(\bO,\theta,\brho)}{\partial(\theta,\brho\trans)\trans\partial(\theta,\brho\trans)}\right\}_{j=1}^{d_\bb+1}\right\|_2\right]
>\frac{\lambda_1}{2c_0}-M_5^{1/2}\right)\n\\
&\le& (\frac{\lambda_1}{2c_0}-M_5^{1/2})^{-2}n^{-1}\var\left[\sup_{\|\bv\|_2=1}\sup_{\theta,\brho}\left\|\bv\trans\left\{\frac{\partial^2\bPhi_j(\bO,\theta,\brho)}{\partial(\theta,\brho\trans)\trans\partial(\theta,\brho\trans)}\right\}_{j=1}^{d_\bb+1}\right\|_2\right]\n\\
&\le& (\frac{\lambda_1}{2c_0}-M_5^{1/2})^{-2}\frac{M_5}{n},\label{eq:e17}
\ee
where $M_5$ is defined in Condition \ref{con:c5}. 
Note that by Jensen's inequality,
\be
E\left[\sup_{\|\bv\|_2=1}\sup_{\theta,\brho}\left\|\left\{\bv\trans \frac{\partial^2\bPhi_j(\bO,\theta,\brho)}{\partial(\theta,\brho\trans)\trans\partial(\theta,\brho\trans)}\right\}_{j=1}^{d_\bb+1}\right\|_2\right] \le M_5^{1/2}. \label{eq:e18}
\ee
Also, by the triangle inequality,
\be
&&\sup_{\|\bv\|_2=1}\left\|\left\{\bv\trans\wh\bB_j(\wc\zeta,\wc\bb)\right\}_{j=1}^{d_\bb+1}\right\|_2\n\\
&\le& \sup_{\|\bv\|_2=1}\sup_{\theta,\brho}\left\|\left\{\bv\trans\wh\bB_j(\theta,\brho)\right\}_{j=1}^{d_\bb+1}\right\|_2\n\\
&=&\sup_{\|\bv\|_2=1}\sup_{\theta,\brho}\left\|\left\{\bv\trans n^{-1} \sumi\frac{\partial^2\bPhi_j(\bO_i,\theta,\brho)}{\partial(\theta,\brho\trans)\trans\partial(\theta,\brho\trans)}\right\}_{j=1}^{d_\bb+1}\right\|_2\n\\
&\le&n^{-1} \sup_{\|\bv\|_2=1}\sup_{\theta,\brho}\sumi\left\|\left\{\bv\trans \frac{\partial^2\bPhi_j(\bO_i,\theta,\brho)}{\partial(\theta,\brho\trans)\trans\partial(\theta,\brho\trans)}\right\}_{j=1}^{d_\bb+1}\right\|_2\n\\
&\le&n^{-1} \sumi\sup_{\|\bv\|_2=1}\sup_{\theta,\brho}\left\|\left\{\bv\trans \frac{\partial^2\bPhi_j(\bO_i,\theta,\brho)}{\partial(\theta,\brho\trans)\trans\partial(\theta,\brho\trans)}\right\}_{j=1}^{d_\bb+1}\right\|_2.\label{eq:e19}
\ee
Denote $M_6\equiv
\{\lambda_1/(2c_0)-M_5^{1/2}\}^{-2}M_5$.
Combining \eqref{eq:e18} and \eqref{eq:e19} into \eqref{eq:e17}, we get that 
\bse
P\left[n^{-1} \sumi\sup_{\|\bv\|_2=1}\left\|\left\{\bv\trans\wh\bB_j(\wc\zeta,\wc\bb)\right\}_{j=1}^{d_\bb+1}\right\|_2
>\frac{\lambda_1}{2c_0}\right]\le M_6 n^{-1}.
\ese
 Then, with probability at least $1-M_6 n^{-1}$, 
\bse
\left\|\left[
  \{(\wh\zeta,\wh\bb\trans)-(\zeta,\bb\trans)\}\wh\bB_j(\wc\zeta,\wc\bb)\right]_{j=1}^{d_\bb+1}\right\|_2
&\le&\|(\wh\zeta,\wh\bb\trans)-(\zeta,\bb\trans)\|_2\sup_{\|\bv\|_2=1}\left\|\left\{\bv\trans\wh\bB_j(\wc\zeta,\wc\bb)\right\}_{j=1}^{d_\bb+1}\right\|_2\\
  &\le&\frac{\lambda_1}{2}.
\ese

Thus, with probability at least $1-4^2\trace(E[\{\partial\bPhi(\bO,\zeta,\bb)/\partial(\zeta,\bb\trans)\}^{\otimes2}])\lambda_1^{-2}n^{-1}-M_6n^{-1}$, we have
\bse
&&\left\|\wh\bA+\frac{1}{2}\left[
  \{(\wh\zeta,\wh\bb\trans)-(\zeta,\bb\trans)\}\wh\bB_j(\wc\zeta,\wc\bb)\right]_{j=1}^{d_\bb+1}\right\|_2 \\
  &\ge&\|\bA\|_2-\|\wh\bA-\bA\|_2-\frac{1}{2}\left\|\left[
  \{(\wh\zeta,\wh\bb\trans)-(\zeta,\bb\trans)\}\wh\bB_j(\wc\zeta,\wc\bb)\right]_{j=1}^{d_\bb+1}\right\|_2 \\
  &\ge& \lambda_1-\frac{\lambda_1}{4}-\frac{\lambda_1}{4}\\
  &=&\frac{\lambda_1}{2}.
\ese
Finally, with probability at least 
\bse
&&1-4^2\trace(E[\{\frac{\partial\bPhi(\bO,\zeta,\bb)}{\partial(\zeta,\bb\trans)}\}^{\otimes2}])\lambda_1^{-2}n^{-1}-M_6n^{-1}\\
&&-\sum_{j=1}^{d_\bb+1} 2[n^{-\delta ^2(d_\bb+1)^{-1}2^{-1}\nu_j^{-2}}+\exp\{- \frac{n^{1/2}(\log n)^{1/2} \delta (d_\bb+1)^{-1/2}}{2\omega_j}\}],
\ese
we have that
\bse
\|(\wh\zeta,\wh\bb\trans)\trans-(\zeta,\bb\trans)\trans\|_2&=&\left\|\left(\wh\bA+\frac{1}{2}\left[
  \{(\wh\zeta,\wh\bb\trans)-(\zeta,\bb\trans)\}\wh\bB_j(\wc\zeta,\wc\bb)\right]_{j=1}^{d_\bb+1}\right)^{-1}\{n^{-1}\sumi\bPhi(\bO_i,\zeta,\bb)\}\right\|_2\\
  &\le&\left\|\left(\wh\bA+\left[
  \{(\wh\zeta,\wh\bb\trans)-(\zeta,\bb\trans)\}\frac{1}{2}\wh\bB_j(\wc\zeta,\wc\bb)\right]_{j=1}^{d_\bb+1}\right)^{-1}\right\|_2\|n^{-1}\sumi\bPhi(\bO_i,\zeta,\bb)\|_2\\
  &\le&\frac{2\delta n^{-1/2}(\log n)^{1/2}}{\lambda_1},
\ese
which implies that
\bse
|E[I\{r(\bO_0,\wh\bb) \le \wh\zeta\}\mid \wh\zeta,\wh\bb] - (1-\alpha)| \le \frac{2M_1\delta n^{-1/2} (\log n)^{1/2}}{\lambda_1}+ \frac{2M_2\delta ^2n^{-1}\log n}{\lambda_1^2}.
\ese
Since $e^{-x} \le x^{-2}$ for $x>0$,
we can take $k=\min\{1,\delta ^2(d_\bb+1)^{-1}2^{-1}\min_{j=1,...,d_\bb+1}(\nu_j^{-2})\}$ and
\bse
A&=&4^2\trace(E[\{\frac{\partial\bPhi(\bO,\zeta,\bb)}{\partial(\zeta,\bb\trans)}\}^{\otimes2}])\lambda_1^{-2}+M_6+\sum_{j=1}^{d_\bb+1}
[2+8\omega_j^2\delta^{-2} (d_\bb+1)].
\ese
Then we obtain the desired result.
\end{proof}

%%%%%%%%%%%%%%%%%%%%%
\subsection{Verification of Condition~\ref{con:c7}}\label{sec:sub-exp}
%%%%%%%%%%%%%%%%%%%%%

We verify Condition~\ref{con:c7} when no patient's time to the event
is right-censored, so that $\phi\eff$ takes the explicit form
\eqref{eq:phieff-nocensoring} derived for
Remark~\ref{rem:no-censoring}. Suppose the outcome follows the linear
model $Y=(X,\Z)\trans\bb+\epsilon$ with $\epsilon\sim
\Normal(0,\sigma_\epsilon^2)$ and $(X,\Z)\trans\sim
\Normal(\bmu,\bSigma)$, and that
$\bpsi(\bO,\bb,\eta_1^*,\eta_2^\star)=\S_\bb^F(Y,X,\Z,\bb)$ is the
estimating function for $\wh\bb$.

The $j$-th component of $\S_\bb^F(Y,X,\Z,\bb)=(X,\Z)\trans\epsilon$ is
$\{2^{1/2}\sigma_\epsilon(\mu_j^2+\Sigma_{jj})^{1/2},
2^{1/2}\sigma_\epsilon\Sigma_{jj}^{1/2}\}$-sub-exponential, and
generally not sub-Gaussian. By \eqref{eq:phieff-nocensoring},
$\phi\eff(\bO,\zeta,\bb)$ is
$(\sigma_\epsilon[4\{\bfe\trans(\bmu\bmu\trans+\bSigma)\bfe\}+2^{-1}
d_0^{-2}]^{1/2}, 2^{3/2}\sigma_\epsilon
(\bfe\trans\bSigma\bfe)^{1/2})$-sub-exponential, where $d_0$ is the
density of $|V|$ at its $(1-\alpha)$ quantile for $V\sim
\Normal(0,1)$. Every component of $\bPhi(\bO,\zeta,\bb)$ therefore has
a sub-exponential tail, so Condition~\ref{con:c7} holds.

Right-censoring introduces structure this calculation does not
address, so the verification is a baseline for the right-censored case
rather than a proof of it.

\subsection{Proof of Proposition~\ref{pro:sigmatau}}\label{sec:sigmatau}
 \begin{proof}
For $\wh\bb$ chosen to be the solution of $\sumi\wt\S_\bb^{*\star}(\bO_i,\bb)=\0$, the influence function of $\wh\bb$ is
\bse
\bxi(\bo,\bb) = -E\{\frac{\partial\wt \S_\bb^{*\star}(\bO,\bb)}{\partial\bb\trans}\}^{-1} \wt\S_\bb^{*\star}(\bo,\bb),
\ese
which leads to
\bse
\sigma^2 &=& \var\{\bh\trans\bxi(\bO,\bb) +\phi\eff^{*\star}(\bO,\zeta,\bb)\} \\
&=& \var[E\{\frac{\partial\bfe^{*\star\rm T}\wt\S_\bb^{*\star}(\bO,\bb)}{\partial\bb} + \frac{\partial\sum_{j=1}^3\phi_j^{*\star}(\bO,\zeta,\bb)}{\partial\bb}\}\trans\bxi(\bO,\bb) \\
&&+\bfe^{*\star\rm T}\wt\S_\bb^{*\star}(y,w,\delta,\z)+\sum_{j=1}^3\phi_j^{*\star}(\bO,\zeta,\bb)]\\
&=& \var[\bfe^{*\star\rm T}E\{\frac{\partial\wt\S_\bb^{*\star}(\bO,\bb)}{\partial\bb\trans}\} \bxi(\bO,\bb) + E\{ \frac{\partial\sum_{j=1}^3\phi_j^{*\star}(\bO,\zeta,\bb)}{\partial\bb}\}\trans\bxi(\bO,\bb) \\
&&+\bfe^{*\star\rm T}\wt\S_\bb^{*\star}(y,w,\delta,\z)+\sum_{j=1}^3\phi_j^{*\star}(\bO,\zeta,\bb)]\\
&=& \var[-E\{ \frac{\partial\sum_{j=1}^3\phi_j^{*\star}(\bO,\zeta,\bb)}{\partial\bb}\}\trans E\{\frac{\partial\wt \S_\bb^{*\star}(\bO,\bb)}{\partial\bb\trans}\}^{-1} \wt\S_\bb^{*\star}(\bO,\bb)+\sum_{j=1}^3\phi_j^{*\star}(\bO,\zeta,\bb)].
\ese
Also, $\tau$ can be represented as
\bse
\tau &=& E\{\frac{\partial\phi\eff^{*\star}(\bO,\zeta,\bb)}{\partial\zeta}\}\\
&=& E\left[\frac{\partial\bfe^{*\star\rm T}}{\partial\zeta}\wt\S_\bb^{*\star}(\bO) + \Delta \frac{\partial a_1^{*\star}(W,\bZ)}{\partial\zeta} + (1-\Delta) \frac{E\{I(X> W)
    \partial a_1^{*\star}(X,\bZ)/\partial\zeta\mid W,Y,\Z,\bb,\eta_1^*\}}{E\{I(X > W)\mid W,Y,\Z,\bb,\eta_1^*\}}\right.\\
   &&\left.+\Delta \frac{E\{I(C \ge W) \partial a_2^{*\star}
(C,\Z)/\partial\zeta\mid W,\Z,\eta_2^\star\}}{E\{I(C \ge W)\mid W,\Z,\eta_2^\star\}} + (1-\Delta)\frac{\partial a_2^{*\star}(W,\bZ)}{\partial\zeta} + \frac{\partial a_3^{*\star}(\bZ)}{\partial\zeta}\mid \bb,\eta_1,\eta_2\right]\\
&=& \frac{\partial\bfe^{*\star\rm T}}{\partial\zeta}E\{\wt\S_\bb^{*\star}(\bO)\}
- E\left\{\frac{\partial}{\partial\zeta}\left(\frac{E[I\{r(\bO,\bb) \le \zeta\}\mid X,\Z,\bb,\eta_2^\star] - E[I\{r(\bO,\bb) \le \zeta\}\mid \Z,\bb,\eta_1^*,\eta_2^\star]}{E[\d\{\zeta - r(\bO,\bb)\}\mid\bb,\eta_1^*,\eta_2^\star]}\right.\right.\\
&&\left.\left.+\frac{E[I\{r(\bO,\bb) \le \zeta\}\mid C,\Z,\bb,\eta_1^*] - (1-\alpha)}{E[\d\{\zeta - r(\bO,\bb)\}\mid\bb,\eta_1^*,\eta_2^\star]}\right)\mid \bb,\eta_1,\eta_2\right\}\\
&=& - E\left(\frac{E[\d\{\zeta - r(\bO,\bb)\}\mid X,\Z,\bb,\eta_2^\star] - E[\d\{\zeta - r(\bO,\bb)\}\mid \Z,\bb,\eta_1^*,\eta_2^\star]}{E[\d\{\zeta - r(\bO,\bb)\}\mid\bb,\eta_1^*,\eta_2^\star]}\right.\\
&&\left.+\frac{E[\d\{\zeta - r(\bO,\bb)\}\mid C,\Z,\bb,\eta_1^*]}{E[\d\{\zeta - r(\bO,\bb)\}\mid\bb,\eta_1^*,\eta_2^\star]}\mid \bb,\eta_1,\eta_2\right)\\
&=&-\frac{E[\d\{\zeta - r(\bO,\bb)\}\mid \bb,\eta_1,\eta_2]}{E[\d\{\zeta - r(\bO,\bb)\}\mid\bb,\eta_1^*,\eta_2^\star]}.
\ese
Thus, $\sigma^2\tau^{-2}$, the asymptotic variance of $n^{1/2}(\wh\zeta - \zeta)$, can be rewritten as
\bse
\sigma^2\tau^{-2} = \frac{\var[-E\{\partial b^{*\star}(\bO,\zeta,\bb)/\partial\bb\trans\}E\{\partial\wt \S_\bb^{*\star}(\bO,\bb)/\partial\bb\trans\}^{-1} \wt\S_\bb^{*\star}(\bO,\bb)+b^{*\star}(\bO,\zeta,\bb)]}{E[\d\{\zeta - r(\bO,\bb)\}\mid \bb,\eta_1,\eta_2]^2}.
\ese

 \end{proof}

\section{Additional numerical results}
\subsection{Additional simulation results}
\label{sec:addsim}

Figures \ref{fig:f1}--\ref{fig:f5} and Tables
\ref{tab:t1}--\ref{tab:t5}
contain the additional simulation results under low, low-to-moderate,
moderate, and high censoring.

\begin{figure}[!htbp]
  \centering
  \includegraphics[width=\linewidth]{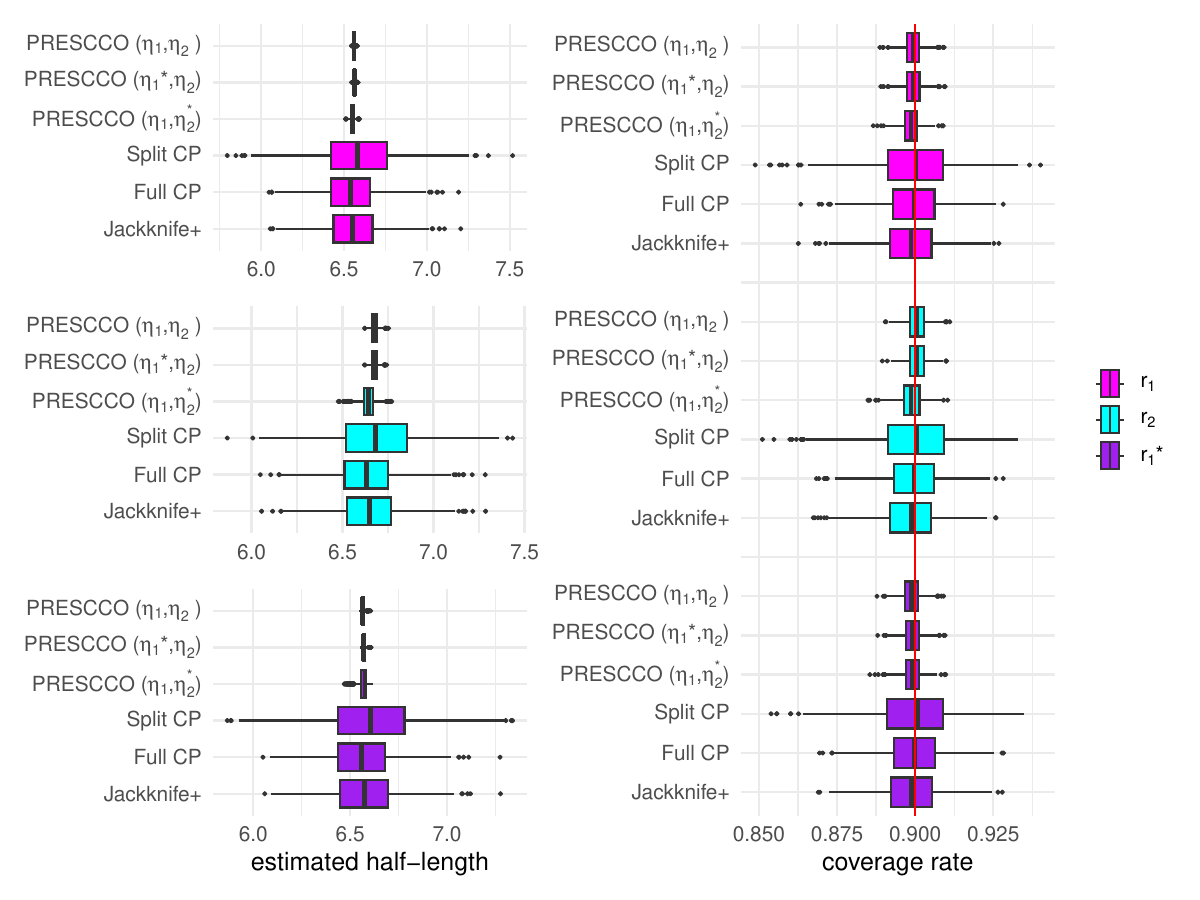}
  \caption{Boxplots of the estimated half-length and the coverage rate
    under low censoring (20--30\%) across 1,000 simulated
    observational studies. PRESCCO $(\eta_1,\eta_2)$, PRESCCO method
    with both working models correctly specified; $\eta_1^*$,
    misspecified time-to-event model; $\eta_2^\star$, misspecified
    censoring model; CP, conformal prediction.}
  \label{fig:f1}
\end{figure}

\begin{table}[!htbp]
\centering
\caption{Mean and standard deviation of the estimated half-length and
  the coverage rate under low censoring (20--30\%) across 1,000
  simulated observational studies. PRESCCO $(\eta_1,\eta_2)$, PRESCCO
  method with both working models correctly specified; $\eta_1^*$,
  misspecified time-to-event model; $\eta_2^\star$, misspecified
  censoring model; CP, conformal prediction; sd, standard deviation.}
\label{tab:t1}
\begin{tabular}{cc|cccc}
  \multicolumn{6}{c}{Low censoring}\\
 & method
 & \multicolumn{2}{c}{estimated half-length}
 & \multicolumn{2}{c}{coverage rate}\\
 & & mean & sd & mean & sd\\
\hline
\multirow{6}{*}{$r_1$}
  & PRESCCO $(\eta_1,\eta_2)$
  & 6.560 &0.006 & 0.899 &0.003 \\
  & PRESCCO $(\eta_1^*,\eta_2)$
  & 6.563 &0.006 & 0.899 &0.003\\
  & PRESCCO $(\eta_1,\eta_2^\star)$
  & 6.549 &0.014 & 0.899 &0.003 \\
  & Split CP
  & 6.587 &0.252 & 0.900 &0.014 \\
  & Full CP
  & 6.543 &0.177 & 0.899 &0.010 \\
  & Jackknife+
  & 6.556 &0.178 & 0.899 &0.010 \\
\hline
\multirow{6}{*}{$r_2$}
  & PRESCCO $(\eta_1,\eta_2)$
  & 6.678 &0.020 & 0.901 &0.003 \\
  & PRESCCO $(\eta_1^*,\eta_2)$
  & 6.676 &0.019 & 0.901 &0.003 \\
  & PRESCCO $(\eta_1,\eta_2^\star)$
  & 6.642 &0.043 & 0.899 &0.004 \\
  & Split CP
  & 6.679 &0.258 & 0.900 &0.014 \\
  & Full CP
  & 6.636 &0.183 & 0.900 &0.010 \\
  & Jackknife+
  & 6.649 &0.183 & 0.899 &0.010 \\
\hline
\multirow{6}{*}{$r_1^*$}
  & PRESCCO $(\eta_1,\eta_2)$
  & 6.566 &0.009 & 0.899 &0.003 \\
  & PRESCCO $(\eta_1^*,\eta_2)$
  & 6.572 &0.009 & 0.899 &0.003 \\
  & PRESCCO $(\eta_1,\eta_2^\star)$
  & 6.569 &0.025 & 0.899 &0.003 \\
  & Split CP
  & 6.605 &0.252 & 0.900 &0.014 \\
  & Full CP
  & 6.559 &0.177 & 0.900 &0.010 \\
  & Jackknife+
  & 6.572 &0.177 & 0.899 &0.010 \\
\end{tabular}
\end{table}
\clearpage
\newpage

\begin{figure}[!htbp]
    \centering
\includegraphics[width=\linewidth]{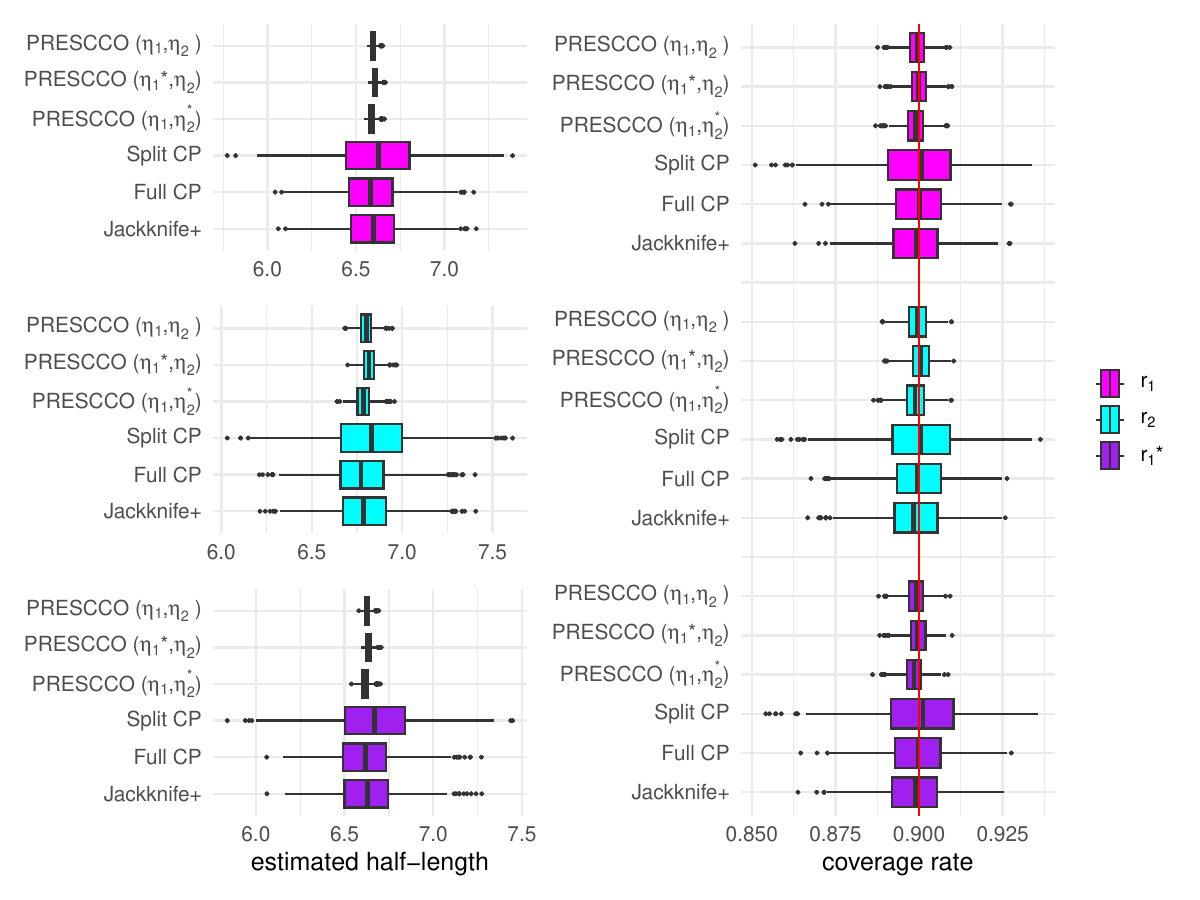}
\caption{Boxplots of the estimated half-length and the coverage
    rate under low-to-moderate censoring (30--40\%) across 1,000
    simulated observational studies. PRESCCO $(\eta_1,\eta_2)$,
    PRESCCO method with both working models correctly specified;
    $\eta_1^*$, misspecified time-to-event model; $\eta_2^\star$,
    misspecified censoring model; CP, conformal prediction.}
    \label{fig:f2}
\end{figure}

\begin{table}[!htbp]
\centering
\caption{Mean and standard deviation of the estimated half-length and
  the coverage rate under low-to-moderate censoring (30--40\%) across
  1,000 simulated observational studies. PRESCCO $(\eta_1,\eta_2)$,
  PRESCCO method with both working models correctly specified;
  $\eta_1^*$, misspecified time-to-event model; $\eta_2^\star$,
  misspecified censoring model; CP, conformal prediction; sd, standard
  deviation.}
\label{tab:t2}
\begin{tabular}{cc|cccc}
  \multicolumn{6}{c}{Low-to-moderate censoring}\\
 & method
 & \multicolumn{2}{c}{estimated half-length}
 & \multicolumn{2}{c}{coverage rate}\\
 & & mean & sd & mean & sd\\
\hline
\multirow{6}{*}{$r_1$}
  & PRESCCO $(\eta_1,\eta_2)$
  & 6.597 &0.015 & 0.899 &0.003 \\
  & PRESCCO $(\eta_1^*,\eta_2)$
  & 6.608 &0.017 & 0.900 &0.003 \\
  & PRESCCO $(\eta_1,\eta_2^\star)$
  & 6.589 &0.018 & 0.899 &0.003 \\
  & Split CP
  & 6.628 &0.256 & 0.900 &0.014 \\
  & Full CP
  & 6.588 &0.182 & 0.900 &0.010 \\
  & Jackknife+
  & 6.601 &0.182 & 0.899 &0.010 \\
\hline
\multirow{6}{*}{$r_2$}
  & PRESCCO $(\eta_1,\eta_2)$
  & 6.800 &0.041 & 0.900 &0.004 \\
  & PRESCCO $(\eta_1^*,\eta_2)$
  & 6.818 &0.042 & 0.901 &0.004 \\
  & PRESCCO $(\eta_1,\eta_2^\star)$
  & 6.786 &0.048 & 0.899 &0.004 \\
  & Split CP
  & 6.831 &0.259 & 0.900 &0.013 \\
  & Full CP
  & 6.781 &0.188 & 0.900 &0.010 \\
  & Jackknife+
  & 6.793 &0.189 & 0.899 &0.010 \\
\hline
\multirow{6}{*}{$r_1^*$}
  & PRESCCO $(\eta_1,\eta_2)$
  & 6.627 &0.018 & 0.899 &0.003 \\
  & PRESCCO $(\eta_1^*,\eta_2)$
  & 6.637 &0.019 & 0.900 &0.003 \\
  & PRESCCO $(\eta_1,\eta_2^\star)$
  & 6.615 &0.024 & 0.899 &0.003 \\
  & Split CP
  & 6.672 &0.250 & 0.901 &0.013 \\
  & Full CP
  & 6.619 &0.182 & 0.900 &0.010 \\
  & Jackknife+
  & 6.632 &0.182 & 0.899 &0.010 \\
\end{tabular}
\end{table}

\newpage
\clearpage

\begin{figure}[!htbp]
    \centering
\includegraphics[width=\linewidth]{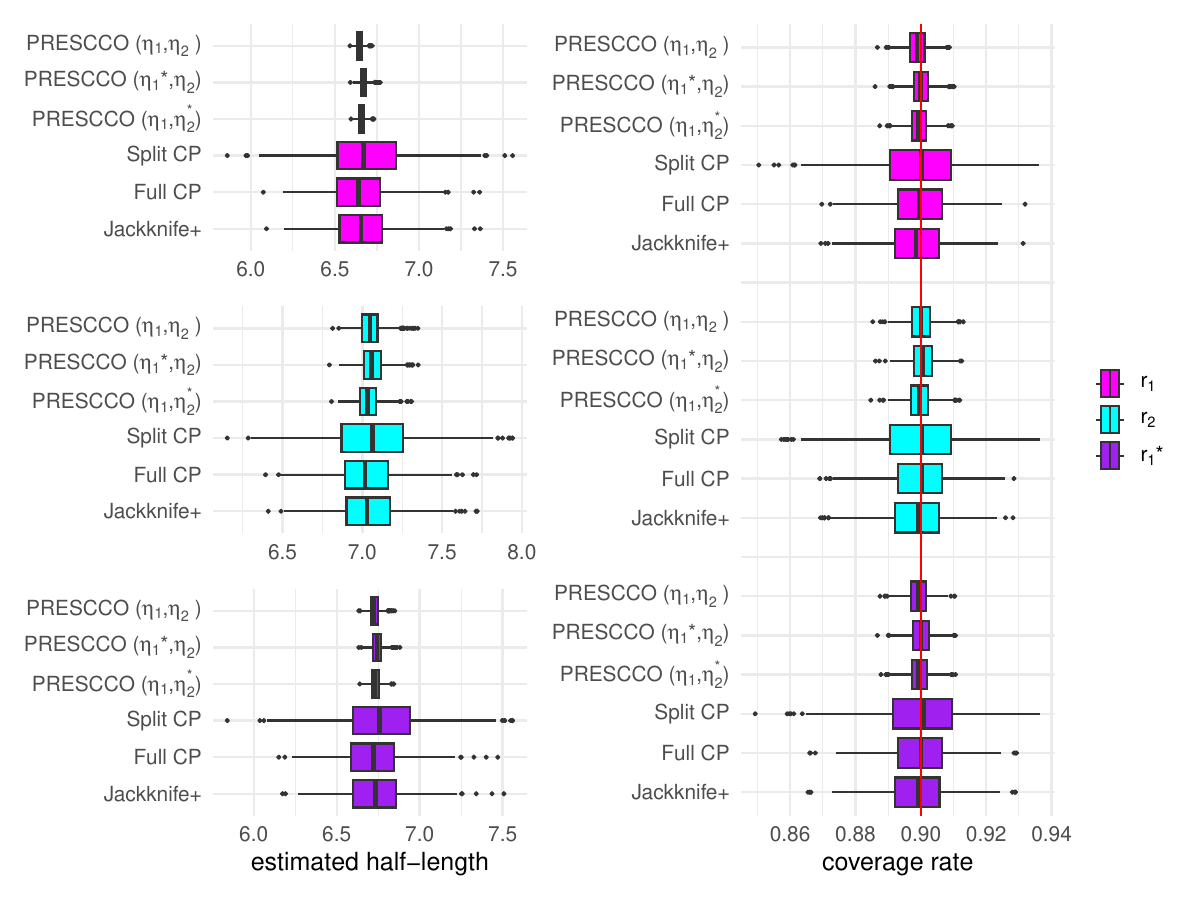}
  \caption{Boxplots of the estimated half-length and the coverage
    rate under moderate censoring (45--55\%) across 1,000 simulated
    observational studies. PRESCCO $(\eta_1,\eta_2)$, PRESCCO method
    with both working models correctly specified; $\eta_1^*$,
    misspecified time-to-event model; $\eta_2^\star$, misspecified
    censoring model; CP, conformal prediction.}
    \label{fig:f3}
\end{figure}

\begin{table}[!htbp]
\centering
\caption{Mean and standard deviation of the estimated half-length and
  the coverage rate under moderate censoring (45--55\%) across 1,000
  simulated observational studies. PRESCCO $(\eta_1,\eta_2)$, PRESCCO
  method with both working models correctly specified; $\eta_1^*$,
  misspecified time-to-event model; $\eta_2^\star$, misspecified
  censoring model; CP, conformal prediction; sd, standard deviation.}
\label{tab:t3}
\begin{tabular}{cc|cccc}
  \multicolumn{6}{c}{Moderate censoring}\\
 & method
 & \multicolumn{2}{c}{estimated half-length}
 & \multicolumn{2}{c}{coverage rate}\\
 & & mean & sd & mean & sd\\
\hline
\multirow{6}{*}{$r_1$}
  & PRESCCO $(\eta_1,\eta_2)$
  & 6.650 &0.020 & 0.899 &0.003 \\
  & PRESCCO $(\eta_1^*,\eta_2)$
  & 6.671 &0.025 & 0.900 &0.003 \\
  & PRESCCO $(\eta_1,\eta_2^\star)$
  & 6.659 &0.021 & 0.899 &0.003 \\
  & Split CP
  & 6.686 &0.260 & 0.900 &0.014 \\
  & Full CP
  & 6.649 &0.186 & 0.900 &0.010 \\
  & Jackknife+
  & 6.663 &0.186 & 0.899 &0.010 \\
\hline
\multirow{6}{*}{$r_2$}
  & PRESCCO $(\eta_1,\eta_2)$
  & 7.052 &0.079 & 0.900 &0.004 \\
  & PRESCCO $(\eta_1^*,\eta_2)$
  & 7.063 &0.082 & 0.901 &0.004 \\
  & PRESCCO $(\eta_1,\eta_2^\star)$
  & 7.038 &0.075 & 0.900 &0.004 \\
  & Split CP
  & 7.067 &0.289 & 0.900 &0.014 \\
  & Full CP
  & 7.028 &0.205 & 0.900 &0.010 \\
  & Jackknife+
  & 7.041 &0.205 & 0.899 &0.010 \\
\hline
\multirow{6}{*}{$r_1^*$}
  & PRESCCO $(\eta_1,\eta_2)$
  & 6.728 &0.032 & 0.899 &0.003 \\
  & PRESCCO $(\eta_1^*,\eta_2)$
  & 6.743 &0.034 & 0.900 &0.003 \\
  & PRESCCO $(\eta_1,\eta_2^\star)$
  & 6.733 &0.031 & 0.900 &0.003 \\
  & Split CP
  & 6.771 &0.265 & 0.900 &0.014 \\
  & Full CP
  & 6.721 &0.193 & 0.900 &0.010 \\
  & Jackknife+
  & 6.734 &0.193 & 0.899 &0.010 \\
\end{tabular}
\end{table}

\newpage
\clearpage

\begin{figure}[!htbp]
  \centering
\includegraphics[width=\linewidth]{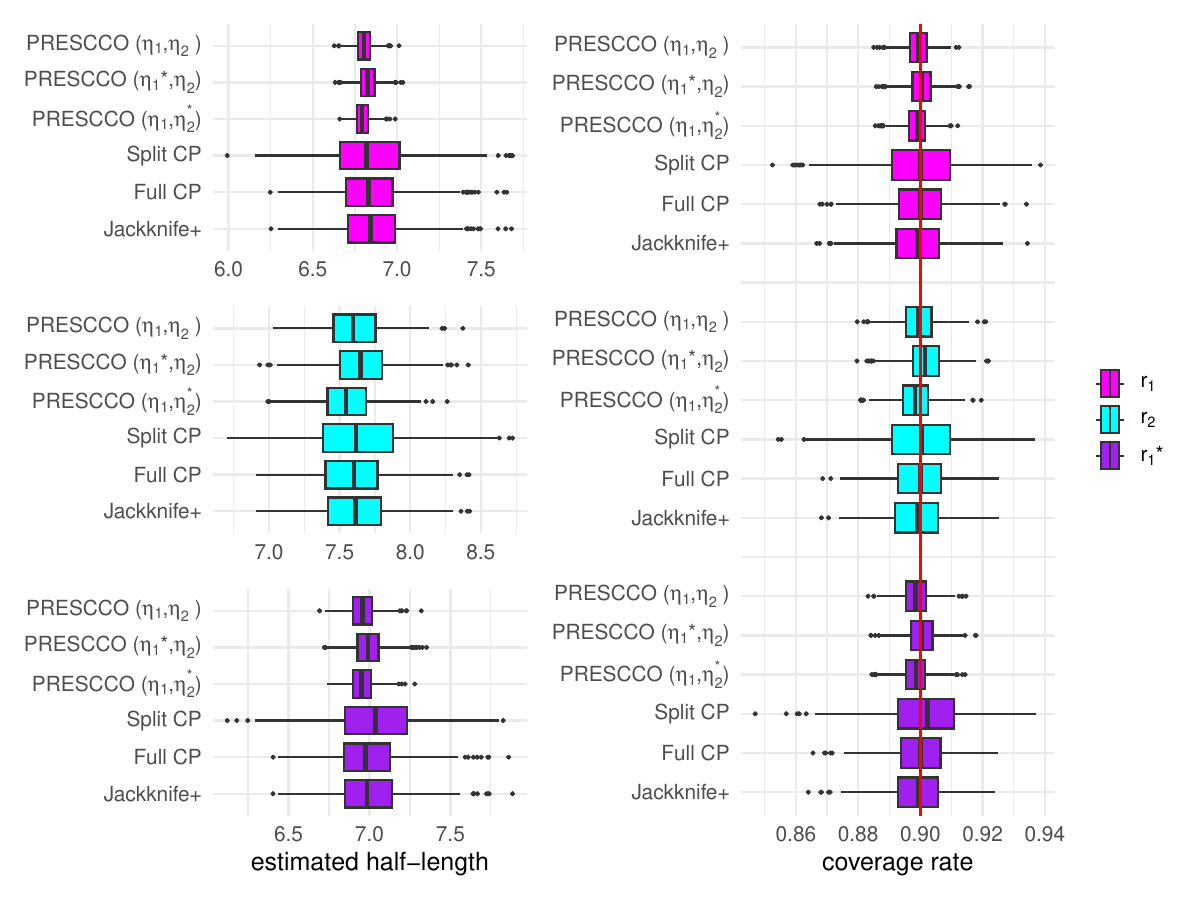}
   \caption{Boxplots of the estimated half-length and the coverage
    rate under high censoring (70--80\%) across 1,000 simulated
    observational studies. PRESCCO $(\eta_1,\eta_2)$, PRESCCO method
    with both working models correctly specified; $\eta_1^*$,
    misspecified time-to-event model; $\eta_2^\star$, misspecified
    censoring model; CP, conformal prediction.}
    \label{fig:f5}
\end{figure}

\begin{table}[!htbp]
\centering
\caption{Mean and standard deviation of the estimated half-length and
  the coverage rate under high censoring (70--80\%) across 1,000
  simulated observational studies. PRESCCO $(\eta_1,\eta_2)$, PRESCCO
  method with both working models correctly specified; $\eta_1^*$,
  misspecified time-to-event model; $\eta_2^\star$, misspecified
  censoring model; CP, conformal prediction; sd, standard deviation.}
\label{tab:t5}
\begin{tabular}{cc|cccc}
  \multicolumn{6}{c}{High censoring}\\
 & method
 & \multicolumn{2}{c}{estimated half-length}
 & \multicolumn{2}{c}{coverage rate}\\
 & & mean & sd & mean & sd\\
\hline
\multirow{6}{*}{$r_1$}
  & PRESCCO $(\eta_1,\eta_2)$
  & 6.805 &0.054 & 0.899 &0.004 \\
  & PRESCCO $(\eta_1^*,\eta_2)$
  & 6.827 &0.063 & 0.900 &0.004 \\
  & PRESCCO $(\eta_1,\eta_2^\star)$
  & 6.796 &0.049 & 0.899 &0.004 \\
  & Split CP
  & 6.838 &0.266 & 0.900 &0.014 \\
  & Full CP
  & 6.844 &0.208 & 0.900 &0.010 \\
  & Jackknife+
  & 6.857 &0.208 & 0.899 &0.010 \\
\hline
\multirow{6}{*}{$r_2$}
  & PRESCCO $(\eta_1,\eta_2)$
  & 7.602 &0.208 & 0.899 &0.006 \\
  & PRESCCO $(\eta_1^*,\eta_2)$
  & 7.649 &0.230 & 0.902 &0.006 \\
  & PRESCCO $(\eta_1,\eta_2^\star)$
  & 7.551 &0.197 & 0.898 &0.006 \\
  & Split CP
  & 7.635 &0.366 & 0.900 &0.014 \\
  & Full CP
  & 7.598 &0.268 & 0.900 &0.010 \\
  & Jackknife+
  & 7.609 &0.268 & 0.899 &0.010 \\
\hline
\multirow{6}{*}{$r_1^*$}
  & PRESCCO $(\eta_1,\eta_2)$
  & 6.959 &0.087 & 0.899 &0.005 \\
  & PRESCCO $(\eta_1^*,\eta_2)$
  & 6.994 &0.102 & 0.901 &0.005 \\
  & PRESCCO $(\eta_1,\eta_2^\star)$
  & 6.956 &0.081 & 0.899 &0.005 \\
  & Split CP
  & 7.040 &0.280 & 0.902 &0.013 \\
  & Full CP
  & 6.986 &0.217 & 0.900 &0.010 \\
  & Jackknife+
  & 6.998 &0.218 & 0.899 &0.010 \\
\end{tabular}
\end{table}

\clearpage 
\subsection{Additional results on Huntington disease data analysis}\label{sec:add-real-data}

Tables~\ref{tab:t6r1}--\ref{tab:t6r1star} report the estimated half-length
of the prediction interval and the $70\%$ through $95\%$ bands for the
coverage rate under the residual functions $r_1$, $r_2$, and $r_1^*$ in the Enroll-HD data analysis.

\begin{table}
\caption{Estimated half-length of the prediction interval and the
coverage rate with $70$--$95\%$ bands under the residual function $r_1$,
for four scores recorded at study entry, based on the Enroll-HD dataset.
SC, Stroop Color; SW, Stroop Word; SI, Stroop Interference; cUHDRS,
composite Unified Huntington Disease Rating Scale. PRESCCO
$(\eta_1,\eta_2)$, PRESCCO method with both working models fitted to the
data; $\eta_1^*$, misspecified time-to-event model; $\eta_2^\star$,
misspecified censoring model; CP, conformal prediction. Bands that do
not contain $0.9$ are marked in red.}
\label{tab:t6r1}
\centering
\scriptsize
\setlength{\tabcolsep}{3pt}
\begin{tabular}{c|cc|cc|cc|cc}
  \multirow{2}{*}{band}
    & \multicolumn{2}{c|}{SC}
    & \multicolumn{2}{c|}{SW}
    & \multicolumn{2}{c|}{SI}
    & \multicolumn{2}{c}{cUHDRS} \\
    & \shortstack{estimated\\half-length} & \shortstack{coverage\\rate}
    & \shortstack{estimated\\half-length} & \shortstack{coverage\\rate}
    & \shortstack{estimated\\half-length} & \shortstack{coverage\\rate}
    & \shortstack{estimated\\half-length} & \shortstack{coverage\\rate} \\
\hline
\multicolumn{9}{l}{PRESCCO $(\eta_1,\eta_2)$}\\
Value
  & \multirow{7}{*}{21.904} & 0.902
  & \multirow{7}{*}{27.056} & 0.899
  & \multirow{7}{*}{16.684} & 0.901
  & \multirow{7}{*}{2.333}  & 0.896 \\
 70\% &  & [0.890,0.913] &  & [0.887,0.911] &  & [0.889,0.912] &  & [0.884,0.908] \\
 75\% &  & [0.889,0.915] &  & [0.886,0.912] &  & [0.888,0.914] &  & [0.883,0.909] \\
 80\% &  & [0.887,0.916] &  & [0.884,0.913] &  & [0.886,0.915] &  & [0.881,0.911] \\
 85\% &  & [0.885,0.918] &  & [0.882,0.915] &  & [0.884,0.917] &  & [0.879,0.913] \\
 90\% &  & [0.883,0.920] &  & [0.880,0.917] &  & [0.882,0.919] &  & [0.877,0.915] \\
 95\% &  & [0.880,0.924] &  & [0.876,0.921] &  & [0.879,0.923] &  & [0.873,0.919] \\
\hline
\multicolumn{9}{l}{PRESCCO $(\eta_1^*,\eta_2)$}\\
Value
  & \multirow{7}{*}{21.894} & 0.899
  & \multirow{7}{*}{27.085} & 0.902
  & \multirow{7}{*}{16.657} & 0.901
  & \multirow{7}{*}{2.342}  & 0.896 \\
 70\% &  & [0.887,0.911] &  & [0.890,0.913] &  & [0.889,0.912] &  & [0.884,0.908] \\
 75\% &  & [0.886,0.912] &  & [0.889,0.915] &  & [0.888,0.914] &  & [0.883,0.909] \\
 80\% &  & [0.884,0.913] &  & [0.887,0.916] &  & [0.886,0.915] &  & [0.881,0.911] \\
 85\% &  & [0.882,0.915] &  & [0.885,0.918] &  & [0.884,0.917] &  & [0.879,0.913] \\
 90\% &  & [0.880,0.917] &  & [0.883,0.920] &  & [0.882,0.919] &  & [0.877,0.915] \\
 95\% &  & [0.876,0.921] &  & [0.880,0.924] &  & [0.879,0.923] &  & [0.873,0.919] \\
\hline
\multicolumn{9}{l}{PRESCCO $(\eta_1,\eta_2^\star)$}\\
Value
  & \multirow{7}{*}{21.604} & 0.902
  & \multirow{7}{*}{26.813} & 0.900
  & \multirow{7}{*}{16.380} & 0.902
  & \multirow{7}{*}{2.266}  & 0.895 \\
 70\% &  & [0.890,0.913] &  & [0.888,0.912] &  & [0.891,0.914] &  & [0.883,0.907] \\
 75\% &  & [0.889,0.915] &  & [0.887,0.913] &  & [0.889,0.915] &  & [0.881,0.908] \\
 80\% &  & [0.887,0.916] &  & [0.886,0.915] &  & [0.888,0.917] &  & [0.880,0.909] \\
 85\% &  & [0.885,0.918] &  & [0.884,0.916] &  & [0.886,0.918] &  & [0.878,0.911] \\
 90\% &  & [0.883,0.920] &  & [0.882,0.919] &  & [0.884,0.921] &  & [0.875,0.914] \\
 95\% &  & [0.880,0.924] &  & [0.878,0.922] &  & [0.880,0.924] &  & [0.872,0.917] \\
\hline
\multicolumn{9}{l}{Split CP}\\
Value
  & \multirow{7}{*}{21.299} & 0.891
  & \multirow{7}{*}{25.539} & 0.880
  & \multirow{7}{*}{16.020} & 0.890
  & \multirow{7}{*}{2.307}  & 0.890 \\
 70\% &  & [0.879,0.904] &  & {\red [0.867,0.893]} &  & [0.878,0.902] &  & [0.878,0.902] \\
 75\% &  & [0.878,0.905] &  & {\red [0.866,0.894]} &  & [0.876,0.904] &  & [0.877,0.904] \\
 80\% &  & [0.876,0.907] &  & {\red [0.864,0.896]} &  & [0.875,0.905] &  & [0.875,0.905] \\
 85\% &  & [0.875,0.908] &  & {\red [0.862,0.898]} &  & [0.873,0.907] &  & [0.873,0.907] \\
 90\% &  & [0.872,0.911] &  & [0.860,0.900] &  & [0.871,0.910] &  & [0.871,0.910] \\
 95\% &  & [0.868,0.915] &  & [0.856,0.904] &  & [0.867,0.913] &  & [0.867,0.913] \\
\hline
\multicolumn{9}{l}{Full CP}\\
Value
  & \multirow{7}{*}{23.635} & 0.889
  & \multirow{7}{*}{27.770} & 0.886
  & \multirow{7}{*}{16.327} & 0.887
  & \multirow{7}{*}{3.128}  & 0.917 \\
 70\% &  & [0.876,0.901] &  & {\red [0.873,0.898]} &  & {\red [0.875,0.899]} &  & {\red [0.906,0.927]} \\
 75\% &  & [0.875,0.902] &  & {\red [0.872,0.900]} &  & [0.873,0.901] &  & {\red [0.905,0.929]} \\
 80\% &  & [0.873,0.904] &  & [0.870,0.901] &  & [0.872,0.902] &  & {\red [0.903,0.930]} \\
 85\% &  & [0.871,0.906] &  & [0.868,0.903] &  & [0.870,0.904] &  & {\red [0.901,0.932]} \\
 90\% &  & [0.869,0.908] &  & [0.866,0.905] &  & [0.867,0.907] &  & [0.899,0.934] \\
 95\% &  & [0.865,0.912] &  & [0.862,0.909] &  & [0.864,0.910] &  & [0.896,0.937] \\
\hline
\multicolumn{9}{l}{Jackknife+}\\
Value
  & \multirow{7}{*}{21.138} & 0.848
  & \multirow{7}{*}{26.915} & 0.873
  & \multirow{7}{*}{16.307} & 0.882
  & \multirow{7}{*}{2.303}  & 0.755 \\
 70\% &  & {\red [0.834,0.862]} &  & {\red [0.860,0.886]} &  & {\red [0.870,0.895]} &  & {\red [0.739,0.772]} \\
 75\% &  & {\red [0.832,0.864]} &  & {\red [0.858,0.887]} &  & {\red [0.868,0.896]} &  & {\red [0.737,0.774]} \\
 80\% &  & {\red [0.831,0.865]} &  & {\red [0.857,0.889]} &  & {\red [0.867,0.898]} &  & {\red [0.735,0.776]} \\
 85\% &  & {\red [0.829,0.868]} &  & {\red [0.855,0.891]} &  & {\red [0.865,0.900]} &  & {\red [0.732,0.779]} \\
 90\% &  & {\red [0.826,0.870]} &  & {\red [0.852,0.893]} &  & [0.862,0.902] &  & {\red [0.729,0.782]} \\
 95\% &  & {\red [0.821,0.875]} &  & {\red [0.848,0.897]} &  & [0.859,0.906] &  & {\red [0.724,0.787]} \\
\hline
\end{tabular}
\end{table}

\begin{table}
\caption{Estimated half-length of the prediction interval and the
coverage rate with $70$--$95\%$ bands under the residual function $r_2$,
for four scores recorded at study entry, based on the Enroll-HD dataset.
SC, Stroop Color; SW, Stroop Word; SI, Stroop Interference; cUHDRS,
composite Unified Huntington Disease Rating Scale. PRESCCO
$(\eta_1,\eta_2)$, PRESCCO method with both working models fitted to the
data; $\eta_1^*$, misspecified time-to-event model; $\eta_2^\star$,
misspecified censoring model; CP, conformal prediction. Bands that do
not contain $0.9$ are marked in red.}
\label{tab:t6r2}
\centering
\scriptsize
\setlength{\tabcolsep}{3pt}
\begin{tabular}{c|cc|cc|cc|cc}
  \multirow{2}{*}{band}
    & \multicolumn{2}{c|}{SC}
    & \multicolumn{2}{c|}{SW}
    & \multicolumn{2}{c|}{SI}
    & \multicolumn{2}{c}{cUHDRS} \\
    & \shortstack{estimated\\half-length} & \shortstack{coverage\\rate}
    & \shortstack{estimated\\half-length} & \shortstack{coverage\\rate}
    & \shortstack{estimated\\half-length} & \shortstack{coverage\\rate}
    & \shortstack{estimated\\half-length} & \shortstack{coverage\\rate} \\
\hline
\multicolumn{9}{l}{PRESCCO $(\eta_1,\eta_2)$}\\
Value
  & \multirow{7}{*}{22.500} & 0.903
  & \multirow{7}{*}{27.840} & 0.900
  & \multirow{7}{*}{17.274} & 0.905
  & \multirow{7}{*}{2.476}  & 0.909 \\
 70\% &  & [0.891,0.915] &  & [0.888,0.912] &  & [0.894,0.917] &  & [0.898,0.920] \\
 75\% &  & [0.890,0.916] &  & [0.887,0.913] &  & [0.893,0.918] &  & [0.897,0.922] \\
 80\% &  & [0.889,0.917] &  & [0.886,0.915] &  & [0.891,0.920] &  & [0.895,0.923] \\
 85\% &  & [0.887,0.919] &  & [0.884,0.916] &  & [0.889,0.921] &  & [0.894,0.925] \\
 90\% &  & [0.885,0.921] &  & [0.882,0.919] &  & [0.887,0.924] &  & [0.891,0.927] \\
 95\% &  & [0.881,0.925] &  & [0.878,0.922] &  & [0.884,0.927] &  & [0.888,0.931] \\
\hline
\multicolumn{9}{l}{PRESCCO $(\eta_1^*,\eta_2)$}\\
Value
  & \multirow{7}{*}{22.479} & 0.900
  & \multirow{7}{*}{27.884} & 0.896
  & \multirow{7}{*}{17.267} & 0.905
  & \multirow{7}{*}{2.487}  & 0.909 \\
 70\% &  & [0.888,0.912] &  & [0.884,0.908] &  & [0.894,0.917] &  & [0.898,0.920] \\
 75\% &  & [0.887,0.913] &  & [0.883,0.909] &  & [0.893,0.918] &  & [0.897,0.922] \\
 80\% &  & [0.886,0.915] &  & [0.881,0.911] &  & [0.891,0.920] &  & [0.895,0.923] \\
 85\% &  & [0.884,0.916] &  & [0.879,0.912] &  & [0.889,0.921] &  & [0.894,0.925] \\
 90\% &  & [0.882,0.919] &  & [0.877,0.915] &  & [0.887,0.924] &  & [0.891,0.927] \\
 95\% &  & [0.878,0.922] &  & [0.873,0.918] &  & [0.884,0.927] &  & [0.888,0.931] \\
\hline
\multicolumn{9}{l}{PRESCCO $(\eta_1,\eta_2^\star)$}\\
Value
  & \multirow{7}{*}{21.847} & 0.899
  & \multirow{7}{*}{27.111} & 0.897
  & \multirow{7}{*}{16.680} & 0.899
  & \multirow{7}{*}{2.312}  & 0.889 \\
 70\% &  & [0.887,0.911] &  & [0.885,0.909] &  & [0.887,0.911] &  & [0.876,0.901] \\
 75\% &  & [0.886,0.912] &  & [0.884,0.910] &  & [0.886,0.912] &  & [0.875,0.902] \\
 80\% &  & [0.884,0.913] &  & [0.883,0.912] &  & [0.885,0.914] &  & [0.873,0.904] \\
 85\% &  & [0.882,0.915] &  & [0.881,0.914] &  & [0.883,0.916] &  & [0.872,0.906] \\
 90\% &  & [0.880,0.917] &  & [0.878,0.916] &  & [0.881,0.918] &  & [0.869,0.908] \\
 95\% &  & [0.876,0.921] &  & [0.875,0.920] &  & [0.877,0.922] &  & [0.865,0.912] \\
\hline
\multicolumn{9}{l}{Split CP}\\
Value
  & \multirow{7}{*}{21.939} & 0.883
  & \multirow{7}{*}{26.356} & 0.880
  & \multirow{7}{*}{16.317} & 0.893
  & \multirow{7}{*}{2.386}  & 0.893 \\
 70\% &  & {\red [0.870,0.895]} &  & {\red [0.867,0.893]} &  & [0.881,0.905] &  & [0.881,0.905] \\
 75\% &  & {\red [0.869,0.897]} &  & {\red [0.866,0.894]} &  & [0.880,0.907] &  & [0.880,0.907] \\
 80\% &  & {\red [0.867,0.898]} &  & {\red [0.864,0.896]} &  & [0.878,0.908] &  & [0.878,0.908] \\
 85\% &  & [0.865,0.900] &  & {\red [0.862,0.898]} &  & [0.876,0.910] &  & [0.876,0.910] \\
 90\% &  & [0.863,0.903] &  & [0.860,0.900] &  & [0.874,0.912] &  & [0.874,0.912] \\
 95\% &  & [0.859,0.907] &  & [0.856,0.904] &  & [0.870,0.916] &  & [0.870,0.916] \\
\hline
\multicolumn{9}{l}{Full CP}\\
Value
  & \multirow{7}{*}{24.407} & 0.890
  & \multirow{7}{*}{28.896} & 0.890
  & \multirow{7}{*}{17.486} & 0.901
  & \multirow{7}{*}{3.191}  & 0.915 \\
 70\% &  & [0.878,0.902] &  & [0.878,0.902] &  & [0.889,0.912] &  & {\red [0.904,0.926]} \\
 75\% &  & [0.876,0.904] &  & [0.876,0.904] &  & [0.888,0.914] &  & {\red [0.903,0.927]} \\
 80\% &  & [0.875,0.905] &  & [0.875,0.905] &  & [0.886,0.915] &  & {\red [0.902,0.929]} \\
 85\% &  & [0.873,0.907] &  & [0.873,0.907] &  & [0.884,0.917] &  & [0.900,0.930] \\
 90\% &  & [0.871,0.909] &  & [0.871,0.909] &  & [0.882,0.919] &  & [0.898,0.932] \\
 95\% &  & [0.867,0.913] &  & [0.867,0.913] &  & [0.879,0.923] &  & [0.894,0.936] \\
\hline
\multicolumn{9}{l}{Jackknife+}\\
Value
  & \multirow{7}{*}{21.271} & 0.839
  & \multirow{7}{*}{27.579} & 0.877
  & \multirow{7}{*}{16.324} & 0.869
  & \multirow{7}{*}{2.376}  & 0.748 \\
 70\% &  & {\red [0.825,0.854]} &  & {\red [0.864,0.890]} &  & {\red [0.855,0.882]} &  & {\red [0.731,0.765]} \\
 75\% &  & {\red [0.823,0.855]} &  & {\red [0.863,0.891]} &  & {\red [0.854,0.883]} &  & {\red [0.729,0.767]} \\
 80\% &  & {\red [0.822,0.857]} &  & {\red [0.861,0.893]} &  & {\red [0.852,0.885]} &  & {\red [0.727,0.769]} \\
 85\% &  & {\red [0.819,0.859]} &  & {\red [0.859,0.895]} &  & {\red [0.850,0.887]} &  & {\red [0.725,0.772]} \\
 90\% &  & {\red [0.817,0.862]} &  & {\red [0.857,0.897]} &  & {\red [0.848,0.890]} &  & {\red [0.721,0.775]} \\
 95\% &  & {\red [0.812,0.867]} &  & [0.853,0.901] &  & {\red [0.844,0.894]} &  & {\red [0.716,0.780]} \\
\hline
\end{tabular}
\end{table}

\begin{table}
\caption{Estimated half-length of the prediction interval and the
coverage rate with $70$--$95\%$ bands under the residual function
$r_1^*$, for four scores recorded at study entry, based on the Enroll-HD
dataset. SC, Stroop Color; SW, Stroop Word; SI, Stroop Interference;
cUHDRS, composite Unified Huntington Disease Rating Scale. PRESCCO
$(\eta_1,\eta_2)$, PRESCCO method with both working models fitted to the
data; $\eta_1^*$, misspecified time-to-event model; $\eta_2^\star$,
misspecified censoring model; CP, conformal prediction. Bands that do
not contain $0.9$ are marked in red.}
\label{tab:t6r1star}
\centering
\scriptsize
\setlength{\tabcolsep}{3pt}
\begin{tabular}{c|cc|cc|cc|cc}
  \multirow{2}{*}{band}
    & \multicolumn{2}{c|}{SC}
    & \multicolumn{2}{c|}{SW}
    & \multicolumn{2}{c|}{SI}
    & \multicolumn{2}{c}{cUHDRS} \\
    & \shortstack{estimated\\half-length} & \shortstack{coverage\\rate}
    & \shortstack{estimated\\half-length} & \shortstack{coverage\\rate}
    & \shortstack{estimated\\half-length} & \shortstack{coverage\\rate}
    & \shortstack{estimated\\half-length} & \shortstack{coverage\\rate} \\
\hline
\multicolumn{9}{l}{PRESCCO $(\eta_1,\eta_2)$}\\
Value
  & \multirow{7}{*}{21.727} & 0.903
  & \multirow{7}{*}{26.881} & 0.899
  & \multirow{7}{*}{16.501} & 0.904
  & \multirow{7}{*}{2.291}  & 0.895 \\
 70\% &  & [0.891,0.915] &  & [0.887,0.911] &  & [0.892,0.915] &  & [0.883,0.907] \\
 75\% &  & [0.890,0.916] &  & [0.886,0.912] &  & [0.891,0.917] &  & [0.881,0.908] \\
 80\% &  & [0.889,0.917] &  & [0.884,0.913] &  & [0.890,0.918] &  & [0.880,0.909] \\
 85\% &  & [0.887,0.919] &  & [0.882,0.915] &  & [0.888,0.920] &  & [0.878,0.911] \\
 90\% &  & [0.885,0.921] &  & [0.880,0.917] &  & [0.885,0.922] &  & [0.875,0.914] \\
 95\% &  & [0.881,0.925] &  & [0.876,0.921] &  & [0.882,0.926] &  & [0.872,0.917] \\
\hline
\multicolumn{9}{l}{PRESCCO $(\eta_1^*,\eta_2)$}\\
Value
  & \multirow{7}{*}{21.756} & 0.903
  & \multirow{7}{*}{26.852} & 0.900
  & \multirow{7}{*}{16.493} & 0.902
  & \multirow{7}{*}{2.296}  & 0.892 \\
 70\% &  & [0.891,0.915] &  & [0.888,0.912] &  & [0.891,0.914] &  & [0.879,0.904] \\
 75\% &  & [0.890,0.916] &  & [0.887,0.913] &  & [0.889,0.915] &  & [0.878,0.905] \\
 80\% &  & [0.889,0.917] &  & [0.886,0.915] &  & [0.888,0.917] &  & [0.877,0.907] \\
 85\% &  & [0.887,0.919] &  & [0.884,0.916] &  & [0.886,0.918] &  & [0.875,0.909] \\
 90\% &  & [0.885,0.921] &  & [0.882,0.919] &  & [0.884,0.921] &  & [0.872,0.911] \\
 95\% &  & [0.881,0.925] &  & [0.878,0.922] &  & [0.880,0.924] &  & [0.869,0.915] \\
\hline
\multicolumn{9}{l}{PRESCCO $(\eta_1,\eta_2^\star)$}\\
Value
  & \multirow{7}{*}{21.606} & 0.903
  & \multirow{7}{*}{26.784} & 0.899
  & \multirow{7}{*}{16.367} & 0.902
  & \multirow{7}{*}{2.259}  & 0.895 \\
 70\% &  & [0.891,0.915] &  & [0.887,0.911] &  & [0.891,0.914] &  & [0.883,0.907] \\
 75\% &  & [0.890,0.916] &  & [0.886,0.912] &  & [0.889,0.915] &  & [0.881,0.908] \\
 80\% &  & [0.889,0.917] &  & [0.884,0.913] &  & [0.888,0.917] &  & [0.880,0.909] \\
 85\% &  & [0.887,0.919] &  & [0.882,0.915] &  & [0.886,0.918] &  & [0.878,0.911] \\
 90\% &  & [0.885,0.921] &  & [0.880,0.917] &  & [0.884,0.921] &  & [0.875,0.914] \\
 95\% &  & [0.881,0.925] &  & [0.876,0.921] &  & [0.880,0.924] &  & [0.872,0.917] \\
\hline
\multicolumn{9}{l}{Split CP}\\
Value
  & \multirow{7}{*}{21.109} & 0.891
  & \multirow{7}{*}{25.451} & 0.878
  & \multirow{7}{*}{16.055} & 0.898
  & \multirow{7}{*}{2.285}  & 0.892 \\
 70\% &  & [0.879,0.904] &  & {\red [0.866,0.891]} &  & [0.886,0.910] &  & [0.879,0.904] \\
 75\% &  & [0.878,0.905] &  & {\red [0.864,0.893]} &  & [0.885,0.911] &  & [0.878,0.905] \\
 80\% &  & [0.876,0.907] &  & {\red [0.863,0.894]} &  & [0.883,0.912] &  & [0.877,0.907] \\
 85\% &  & [0.875,0.908] &  & {\red [0.861,0.896]} &  & [0.881,0.914] &  & [0.875,0.909] \\
 90\% &  & [0.872,0.911] &  & {\red [0.858,0.899]} &  & [0.879,0.917] &  & [0.872,0.911] \\
 95\% &  & [0.868,0.915] &  & [0.854,0.903] &  & [0.875,0.920] &  & [0.869,0.915] \\
\hline
\multicolumn{9}{l}{Full CP}\\
Value
  & \multirow{7}{*}{23.270} & 0.887
  & \multirow{7}{*}{27.124} & 0.886
  & \multirow{7}{*}{16.124} & 0.881
  & \multirow{7}{*}{3.097}  & 0.917 \\
 70\% &  & {\red [0.875,0.900]} &  & {\red [0.873,0.898]} &  & {\red [0.868,0.894]} &  & {\red [0.906,0.927]} \\
 75\% &  & [0.873,0.901] &  & {\red [0.872,0.900]} &  & {\red [0.867,0.895]} &  & {\red [0.905,0.929]} \\
 80\% &  & [0.872,0.902] &  & [0.870,0.901] &  & {\red [0.865,0.897]} &  & {\red [0.903,0.930]} \\
 85\% &  & [0.870,0.904] &  & [0.868,0.903] &  & {\red [0.863,0.899]} &  & {\red [0.901,0.932]} \\
 90\% &  & [0.867,0.907] &  & [0.866,0.905] &  & [0.861,0.901] &  & [0.899,0.934] \\
 95\% &  & [0.864,0.911] &  & [0.862,0.909] &  & [0.857,0.905] &  & [0.896,0.937] \\
\hline
\multicolumn{9}{l}{Jackknife+}\\
Value
  & \multirow{7}{*}{20.875} & 0.854
  & \multirow{7}{*}{27.038} & 0.883
  & \multirow{7}{*}{16.368} & 0.881
  & \multirow{7}{*}{2.270}  & 0.761 \\
 70\% &  & {\red [0.840,0.868]} &  & {\red [0.870,0.895]} &  & {\red [0.868,0.894]} &  & {\red [0.745,0.778]} \\
 75\% &  & {\red [0.838,0.869]} &  & {\red [0.869,0.897]} &  & {\red [0.867,0.895]} &  & {\red [0.743,0.780]} \\
 80\% &  & {\red [0.837,0.871]} &  & {\red [0.867,0.898]} &  & {\red [0.865,0.897]} &  & {\red [0.741,0.782]} \\
 85\% &  & {\red [0.835,0.873]} &  & [0.865,0.900] &  & {\red [0.863,0.899]} &  & {\red [0.738,0.785]} \\
 90\% &  & {\red [0.832,0.876]} &  & [0.863,0.903] &  & [0.861,0.901] &  & {\red [0.735,0.788]} \\
 95\% &  & {\red [0.828,0.880]} &  & [0.859,0.907] &  & [0.857,0.905] &  & {\red [0.730,0.793]} \\
\hline
\end{tabular}
\end{table}

\end{document}